\let\oldnl\nl
\newcommand{\nonl}{\renewcommand{\nl}{\let\nl\oldnl}}
 \newtheorem{@theorem}{Theorem}[section]
 \newenvironment{theorem}{\begin{@theorem}}{\end{@theorem}}
\newtheorem{lemma}{Lemma}[section]
\newtheorem{corollary}{Corollary}[section]
\newtheorem{Definition}{Definition}[section]
\newtheorem{observation}{Observation}[section]
\newtheorem{claim}{Claim}[section]
\def\eps{\ensuremath{\varepsilon}}
\def\OPT{\ensuremath{\mathrm{OPT}}}
\def\NNCC{Nested Norm $k$-Clustering}
\newcommand{\I}[0]{\mathcal{I}}
\def\PZ{ball $k$-median}
\def\MSRDC{Min-Sum of Radius-Dependent Cost}
\def\MSRDCS{MSRDC}
\def\FLMSRDCS{FL-MSRDC}
\def\Ballk{Ball $k$-Median}
\def\FLBall{Ball Facility Location}
\def\msr{Min-Sum Radii}
\def\kmed{$k$-Median}
\newcommand\LP[1]{\mathcal{L}_{#1}}
\def\X{\ensuremath{\mathcal{X}}}
\def\eins{\ensuremath{\mathbbm{1}}}
\def\tildw{\ensuremath{\bm{\Tilde{w}}}}
\def\tildx{\ensuremath{\Tilde{x}}}
\def\nnr{\ensuremath{\mathbb{R}_{\ge0}}}
\def\nnrvec{\ensuremath{\mathbb{R}^n_{\ge 0}}}
\newcommand\dif[1]{\textnormal{\ensuremath{\textsf{dif}(#1)}}}
\newcommand\topl[2]{\textnormal{\ensuremath{\textsf{top}_#1(#2)}}}
\newcommand\ord[2]{\textnormal{\ensuremath{\textsf{ord}_{#1}(#2)}}}
\newcommand\dist[2]{\ensuremath{\delta(#1,#2)}}
\newcommand\ldist[3]{\ensuremath{\delta^{\downarrow}_{#3}(#2,#1)}}
\newcommand\distr[3]{\ensuremath{\delta^{#3}(#1,#2)}}
\newcommand\distv[2]{\ensuremath{\bm{\delta}_{#1}(#2)}}
\newcommand\dists[2]{\ensuremath{\delta^*(#1,#2)}}
\newcommand\distss[2]{\ensuremath{\delta^{**}(#1,#2)}}
\newcommand\cost[2]{\textnormal{\ensuremath{\textsf{cost}_{#1}(#2)}}}
\newcommand\costz[1]{\textnormal{\ensuremath{\textsf{cost}_{\textsf{b}}(#1)}}}
\newcommand\costd[2]{\textnormal{\ensuremath{\textsf{cost}_{\textsf{MaxOrd}}(#1,#2)}}}
\newcommand\proxyz[3]{\textnormal{\ensuremath{\textsf{proxy}_{#2}(#3,#1)}}}
\newcommand\proxy[3]{\textnormal{\ensuremath{\textsf{proxy}_{#3}(#1,#2)}}}
\newcommand\NCCH[2]{Nested \ensuremath{(#1,#2)} k-Clustering}
\newcommand\NCCS[2]{\ensuremath{(#1,#2)}-Clustering}
\newcommand\NCC[2]{\NCCS{#1}{#2}}
\newcommand{\set}[1]{\{#1\}}
\def\sqsubsetneq{\mathrel{\sqsubseteq\kern-0.92em\raise-0.15em\hbox{\rotatebox{313}{\scalebox{1.1}[0.75]{\(\shortmid\)}}}\scalebox{0.3}[1]{\ }}}
\def\sqsupsetneq{\mathrel{\sqsupseteq\kern-0.92em\raise-0.15em\hbox{\rotatebox{313}{\scalebox{1.1}[0.75]{\(\shortmid\)}}}\scalebox{0.3}[1]{\ }}}
\renewcommand{\subparagraph}{\paragraph}
\newcommand{\floor}[1]{\ensuremath{\left\lfloor #1 \right\rfloor}}
\newcommand{\ceil}[1]{\ensuremath{\left\lceil #1 \right\rceil}}
\newcommand{\orc}{\includegraphics[height=\fontcharht\font`A]{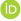}}
\title{\Large Clustering to Minimize Cluster-Aware Norm Objectives\footnote{Martin Herold is funded by the Deutsche Forschungsgemeinschaft (DFG, German Research Foundation) – Project number 399223600.
This work is part of the project TIPEA that has received funding from the European Research Council (ERC)
under the European Union's Horizon 2020 research and innovation programme (grant agreement No. 850979).
We are grateful to an anonymous reviewer for making concrete suggestions how to substantially simplify our algorithm for \NCCS{\textsf{Top}}{\LP{1}} by applying the techniques of Ahmadian, Swamy~\cite{ahmadian}.}}
\author{Martin G.\ Herold\href{https://orcid.org/0009-0002-1804-2842}{\orc}\thanks{Max Planck Institute for Informatics, Saarland Informatics Campus, Germany.} \and Evangelos Kipouridis\href{https://orcid.org/0000-0002-5830-5830}{\orc}\thanks{Saarland University and Max Planck Institute for Informatics, Saarland Informatics Campus, Germany.} \and Joachim Spoerhase\href{https://orcid.org/0000-0002-2601-6452}{\orc}\thanks{University of Liverpool, United Kingdom.}}
\begin{document}
\frenchspacing
\maketitle

\begin{abstract}
We initiate the study of the following general clustering problem. We seek to partition a given set $P$ of data points into $k$ clusters by finding a set $X$ of $k$ centers and assigning each data point to one of the centers. The cost of a cluster, represented by a center $x\in X$, is a monotone, symmetric norm $f$ (called inner norm) of the vector of distances of points assigned to $x$. The  goal is to minimize a norm $g$ (called outer norm) of the vector of cluster costs. This problem, which we call \NCCS{f}{g}, generalizes many fundamental clustering problems such as $k$-Center (i.e., \NCCS{\LP{\infty}}{\LP{\infty}}), $k$-Median (i.e., \NCCS{\LP{1}}{\LP{1}}), Min-Sum of Radii (i.e., \NCCS{\LP{\infty}}{\LP{1}}), and Min-Load $k$-Clustering (i.e., \NCCS{\LP{1}}{\LP{\infty}}). A recent line of research (Byrka et al. [STOC'18], Chakrabarty, Swamy [ICALP'18, STOC'19], and Abbasi et al. [FOCS'23]) studies norm objectives that are oblivious to the cluster structure such as $k$-Median and $k$-Center. In contrast, our problem models cluster-aware objectives including Min-Sum of Radii and Min-Load $k$-Clustering.

Our main results are as follows. First, we design a constant-factor approximation algorithm for \NCCS{\textsf{top}_\ell}{\LP{1}} where the inner norm ($\textsf{top}_\ell$) sums over the $\ell$ largest distances. This unifies (up to constant factors) the best known results for $k$-Median and Min-Sum of Radii. Second, we design a constant-factor approximation\ for \NCCS{\LP{\infty}}{\textsf{Ord}} where the outer norm is a convex combination of $\textsf{top}_\ell$ norms (ordered weighted norm). This generalizes known results for $k$-Center and Min-Sum of Radii. Obtaining a constant-factor approximation for more general settings that include \NCCS{\LP{1}}{\LP{\infty}} (Min-Load $k$-Clustering) seems challenging because even an $o(k)$-approximation is unknown for this problem. We can still use our two main results to obtain first (although non-constant) approximations for these problems including general monotone, symmetric norms.

Our algorithm for \NCCS{\textsf{top}_\ell}{\LP{1}} relies on a reduction to a novel generalization of $k$-Median, which we call Ball $k$-Median. In this problem, we aim at selecting $k$ balls (rather than $k$ centers) and pay for connecting the points to these balls as well as for the (scaled) radii of the balls. To obtain a constant-factor approximation for this problem we unify various algorithmic techniques originally designed for the cluster-oblivious $k$-Median objective (Jain and Vazirani [JACM 2001], Li and Svensson [STOC'13]) and for the cluster-aware \msr{} Objective (Charikar and Panigrahi [STOC'01] and Ahmadian and Swamy [ICALP'16]).

\end{abstract}

\newcommand\x{0}

\tableofcontents
\newpage
\setcounter{page}{1} 
\section{Introduction}

Clustering is among the most fundamental tasks in data analysis, computer science, and operations research. It concerns partitioning a set~$P$ of data points from a metric space $M$ into $k$ groups (clusters) of points that are close to each other. We propose the following general class of clustering problems. A clustering is specified by a pair $(X,\sigma)$ where $X$ is a $k$-element subset of a set~$F$ of potential cluster centers from~$M$, and where $\sigma\colon P\rightarrow X$ assigns each data point to a center. We associate each cluster center $x\in X$ with a distance vector $\distv{\sigma}{x}=\left(\delta(p,x)\mathbbm{1}[x=\sigma(p)]\right)_{p\in P}$ where~$\delta$ denotes the distance function in~$M$. Given a monotone, symmetric norm~$f$ (which we call inner norm), we assign to each cluster center $x$ a cluster cost $f_\sigma(x):=f(\distv{\sigma}{x})$. Our goal is to find a solution $(X,\sigma)$ minimizing $\cost{\sigma}{X}: = g\left(\left( f(\distv{\sigma}{x} \right)_{x\in X}\right)$ where $g$ is another monotone, symmetric norm $g$  (which we call outer norm).  We call this problem \NCCS{f}{g}. (See also Definitions~\ref{def:nncc} and~\ref{def:ncc}.) 

This model is quite versatile as it generalizes fundamental clustering problems such as $k$-Center (i.e., \NCCS{\LP{\infty}}{\LP{\infty}}), $k$-Median (i.e., \NCCS{\LP{1}}{\LP{1}}), Min-Sum of Radii (i.e., \NCCS{\LP{\infty}}{\LP{1}}), Min-Load $k$-Clustering (i.e., \NCCS{\LP{1}}{\LP{\infty}}), $k$-Means (i.e., \NCCS{\LP{2}}{\LP{2}}), and the more general $(k,z)$-Clustering problem (i.e., \NCCS{\LP{z}}{\LP{z}}). Informally, these problems have in common that the objective aggregates (via the outer norm) over suitably defined cluster costs (via the inner norm), which is why we call them \emph{cluster-aware}. This is in contrast to \emph{cluster-oblivious} problems where the objective is a function of the (global) distance vector $(\delta(p,\sigma(p)))_{p\in P}$. Notice that $k$-Median and $k$-Center are contained in both classes whereas Min-Sum of Radii and Min-Load $k$-Clustering are cluster-aware but not cluster-oblivious. 

There has been a recent rise of interest in more general norm objectives in various areas such as (cluster-oblivious) clustering~\cite{joachim,otherOrderedKMedian,aouad-segev19ordered-k-median,chakrabarty-swamy19:norm-k-clustering,chlamtac-etal22:fair-cascaded-norm-clustering,abbasi-etal23:epas-norm-clustering}, load balancing~\cite{chakrabarty-swamy19:norm-k-clustering,deng-etal23:Generalized-Load-Balancing}, and stochastic optimization~\cite{ibrahimpur-swamy20:stochastic-norm-optimization}. The algorithmic study of such generalizations helps unify algorithmic techniques. They are of particular importance in clustering due to its diverse range of applications with often poorly characterized objectives. Additionally, they lead to new objectives interpolating between the classic objectives. 

Cluster-aware objectives are important for a variety of reasons. For example, it has been observed that cluster-oblivious objectives may lead to dissection of natural clusters (see \Cref{fig:dissection}). In fact, the Min-Sum of Radii problem has been suggested as a cluster-aware objective that reduces such dissection effects~\cite{hansen-jaumard97cluster-analysis,MinSumRadii}. In Section~\ref{sec:motivating-example}, we discuss this dissection effect and a scenario motivating objectives in-between \msr{} and $k$-Median. While there has been a recent line of research on cluster-oblivious norm objectives~\cite{joachim,otherOrderedKMedian,aouad-segev19ordered-k-median,chakrabarty-swamy19:norm-k-clustering,chlamtac-etal22:fair-cascaded-norm-clustering,abbasi-etal23:epas-norm-clustering} such a framework is lacking for cluster-aware objectives. We aim at bridging this gap with our work.

A notable structural difference between optimal solutions to cluster-oblivious and to cluster-aware objectives is the following. Under cluster-oblivious objectives data points are w.l.o.g.\ assigned to the nearest cluster center; this is in fact used by many algorithms for such objectives.
In contrast, cluster-aware objectives can incentivize solutions to assign data points to more distant but more suitable centers (see \Cref{fig:dissection}). This makes cluster-aware objectives preferable in certain applications but potentially harder to handle algorithmically.

\paragraph{State of the Art.} We briefly outline the state of the art for concrete cluster-aware objectives as well as related lines of research. For more details, we refer to Section~\ref{sec:sota}. Most natural cluster-aware problems are NP-hard~\cite{hochbaum-shmoys85:k-center,guha-khuller99:greedy-facility-location,gibson-etal10:msr,ahmadian-etal18:min-load-k-median}, which inspired intensive research on approximation algorithms for these problems~\cite{hochbaum-shmoys85:k-center,JainVaz,MinSumRadii,jain-etal03:greedy-facility-location,arya-etal04:local-search-k-median,ola,ahmadian,ahmadian-etal18:min-load-k-median,friggstad-jamshidian22:msr}. For Min-Sum of Radii, $k$-Center, and $k$-Median, the best known approximation algorithms have a constant guarantee~\cite{buchem-etal24:msr,hochbaum-shmoys85:k-center,byrka-etal17:improved-k-median}. The best approximation Min-Load $k$-Clustering has ratio $O(k)$ and improving this to $o(k)$ is elusive~\cite{ahmadian-etal18:min-load-k-median}.

Clustering-oblivious norm objectives admit an $O(1)$-approximation~\cite{chakrabarty-swamy19:norm-k-clustering} substantially generalizing the results for $k$-Median and $k$-Center. Crucial intermediate steps were obtaining $O(1)$-approximations for $\textsf{top}_{\ell}$ norms, which sum over the $\ell$ largest coordinates, and ordered weighted norms, which are convex combinations of the $\textsf{top}_{\ell}$ norms~\cite{joachim,otherOrderedKMedian,aouad-segev19ordered-k-median}.

A line of research related to ours concerns allocation tasks under generalized objectives. Deng et al.~\cite{deng-etal23:Generalized-Load-Balancing} introduce \emph{Generalized Load Balancing}, in which we wish to assign each job $j$ to a machine $i$ incurring a processing time $p(i,j)$. 
The \emph{load} of a machine $i$ is a symmetric, monotone (inner) norm of the vector of processing times of jobs assigned to $i$. 
The overall objective is another (outer) norm aggregating the loads, which we want to minimize. Their setting has some parallels to ours once we fix the set of opened centers. Their main result is an $O(\log n)$-approximation, which is best possible unless $\textsf{P}=\textsf{NP}$ even for the special case where the outer norm is $\LP{1}$ and the inner norm is $\LP{\infty}$. 
This is related to the fact that their processing times do not satisfy the triangle inequality. 
In their setting the machines are fixed while a crucial difficulty in clustering lies in selecting the $k$ centers. 
Thus, there seems to be no obvious way to use their results to obtain $O(1)$-approximations for our settings. 
Other works~\cite{svitkina-tardos10:hierarchical-facility-location,svitkina-fleischer11:submodular-approx,chekuri-ene11:submodular-cost-allocation,abbasi-etal24:submodular-fac-loc} study allocation tasks similar to generalized load balancing but use a submodular function instead of the inner norm and restrict the outer norms to be $\LP{1}$ or $\LP{\infty}$. 
For none of these problems $O(1)$-approximations are known and most of them are provably $O(\log n)$-hard to approximate. Deng et al.~\cite{deng-etal23:Generalized-Load-Balancing} show that in these settings submodular functions do not capture monotone, symmetric norms.

\subsection{Our Results}
In the following, we use \textsf{Top} to denote the class of $\topl{\ell}{\cdot}$-norms, \textsf{Ord} to denote the class of ordered norms, and \textsf{Sym} to denote the class of symmetric monotone norms (see \Cref{ssec:typesOfNorms} for precise definitions).

Our two main results are the following. 
First, we provide an $O(1)$-approximation algorithm for \NCCS{\textsf{Top}}{\LP{1}}. This unifies (up to constant factors), the best known approximations for $k$-Median and \msr{} and allows us to interpolate between these two classic problems.
\begin{restatable}{theorem}{apxtoplone}
\label{thm:apxtoplone}
    For all $\eps>0$, there is a factor-$(13.5+\eps)$ approximation for \NCC{\textnormal{\textsf{Top}}}{ \LP{1} }. 
\end{restatable}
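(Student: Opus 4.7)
The plan is to reduce $\NCC{\textsf{Top}}{\LP{1}}$ to the auxiliary \Ballk{} problem described in the introduction, and then to design a constant-factor approximation for \Ballk{} by combining a primal--dual facility-location framework with Lagrangian-multiplier-preserving bi-point rounding.

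The reduction rests on the elementary identity
\begin{equation*}
\topl{\ell}{d} \;=\; \min_{r \geq 0}\left[\,\ell\, r \;+\; \sum_i \max\!\bigl(0,\, d_i - r\bigr)\,\right],
\end{equation*}
valid for any nonnegative vector $d$, with minimizer $r = d_{(\ell)}$ (the $\ell$-th largest entry). Accordingly I would formalize \Ballk{} as the problem of selecting $k$ pairs $(x_i, r_i)$ with an opening cost $\ell\, r_i$ per ball and a connection cost $\max(0, \dist{p}{x_i} - r_i)$ for every point $p$ assigned to ball $i$. The identity shows that $\mathrm{OPT}_{\Ballk{}}$ equals the optimum of $\NCC{\textsf{Top}}{\LP{1}}$, and that any \Ballk{} solution yields a $\NCC{\textsf{Top}}{\LP{1}}$ solution of no larger cost by keeping only the centers $x_i$ and letting each cluster pick its own optimal threshold. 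Hence a $c$-approximation for \Ballk{} translates directly to a $c$-approximation here.

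For \Ballk{} itself, I would discretize radii to powers of $(1+\eps)$, turn each pair $(x, r)$ into a facility with opening cost $\ell\, r$, and work with the LP relaxation in which assignment variables are charged $\max(0, \dist{p}{x} - r)$. Because this connection cost is not a metric, a direct Jain--Vazirani primal--dual does not apply; instead I would adapt the ball-oriented primal--dual of Charikar--Panigrahi and Ahmadian--Swamy in which dual growth pays simultaneously for radius and for outside-the-ball distance. The cardinality bound $\sum y_{x,r} \leq k$ is handled \`a la Jain--Vazirani by a Lagrangian multiplier $\lambda$ absorbed into the opening cost, and bisection on $\lambda$ produces two Lagrangian-multiplier-preserving solutions $S_1, S_2$ whose numbers of opened balls bracket $k$. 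A Li--Svensson style bi-point rounding then interpolates between $S_1$ and $S_2$, charging the rerouting of each point against either the $\ell r$ opening fee of the ball it migrates to or its original connection fee.

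The principal obstacle will be unifying these two technical lines rather than applying either in a black box. When a point $p$ has to be rerouted from its $S_1$-ball $(x_1, r_1)$ to an $S_2$-ball $(x_2, r_2)$, the triangle inequality $\dist{p}{x_2} \leq \dist{p}{x_1} + \dist{x_1}{x_2}$ does not directly control $\max(0, \dist{p}{x_2} - r_2)$; one has to argue that either $r_2$ is large enough to absorb the detour (exploiting that $(x_2, r_2)$ was paid for fractionally, in the spirit of Charikar--Panigrahi) or the detour is payable from the \msr{}-style opening budget $\ell(r_1 + r_2)$. Carrying this double-charging through both in the primal--dual analysis and in the bi-point step, while keeping the cumulative loss at the claimed $13.5+\eps$, is where the genuine unification of the \kmed{} and \msr{} techniques has to happen.
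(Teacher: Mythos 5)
Your high-level plan coincides with the paper's: the same reduction to \Ballk{} via the identity $\topl{\ell}{d}=\min_{r\ge0}[\ell r+\sum_i(d_i-r)^+]$ (this is exactly Lemma~\ref{lem:redball}), the same Lagrangian primal--dual facility-location algorithm combining Jain--Vazirani dual ascent with a Charikar--Panigrahi--style pruning, the same bisection on $\lambda$ to bracket $k$, and a bi-point rounding step that trades opening fees against rerouting costs. You also correctly identify the central difficulty, namely that $(\delta(p,x)-r)^+$ is not a metric.

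There is, however, one genuine gap. You invoke ``Li--Svensson style bi-point rounding'' as if it could be applied essentially as-is. In Li--Svensson, the fractional ``special'' group is handled by opening \emph{one extra} facility (violating $|X|\le k$ by an additive constant) and then repairing this with a pseudo-approximation argument tailored to $k$-median. That repair mechanism does not transfer to \Ballk{}. The paper instead opens \emph{fewer} facilities in the special group so that $|X|\le k$ is never violated, and then bounds the ratio of unopened facilities in that group by a factor $3$ (Claim~\ref{claim:specialfacility}). But for this to keep the approximation ratio at $13.5+\eps$, the paper crucially needs the special facility's ball radius to be small---at most $\eps\cdot\OPT/\rho$. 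This is arranged by first \emph{guessing} the $O(1/\eps)$ largest balls of the optimum (Lemma~\ref{lem:guessopt}, \`a la Charikar--Panigrahi) and ensuring, throughout the LMP algorithm, the binary search, and the rounding, that every non-guessed ball is small. Your proposal does not include the guessing step, and without it the special group contributes an uncontrolled $O(M_{\tilde x}\rho)$ term (where $M_{\tilde x}$ is the largest radius in the group), which can be $\Theta(\OPT)$ and destroys the constant factor. So either you supply a pseudo-approximation machinery adapted to \Ballk{} (which the paper does not do and seems nontrivial), or you add the guessing of the largest balls and carry it through the whole pipeline, as the paper does.

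A minor additional point: the paper does not discretize radii to powers of $(1+\eps)$; it simply restricts each ball's radius to the polynomially many facility--client distances (Observation-level, and baked into the $R_x$ sets). Your discretization would also work but is an extra source of $\eps$-loss that the paper avoids.
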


Second, we unify the $O(1)$-approximations for \msr{} and $k$-Center and interpolate between these to problems, by showing an $O(1)$-approximation for \NCCS{\LP{\infty}}{\textsf{Ord}}.

\begin{restatable}{theorem}{thmlinford}\label{thm:linford}
    For all $\eps>0$ there is a factor-$(18+\eps)$ approximation for \NCCS{\LP{\infty}}{\textnormal{\textsf{Ord}}}.
\end{restatable}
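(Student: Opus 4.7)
The plan is to attack \NCC{\LP{\infty}}{\textnormal{\textsf{Ord}}} in two stages: first, reduce the ordered outer norm to a family of $\topl{\ell}{\cdot}$ outer norms via a surrogate-norm argument; second, give a primal-dual algorithm for each resulting \NCC{\LP{\infty}}{\topl{\ell}{\cdot}} instance that smoothly interpolates between the known $O(1)$-approximations for \NCC{\LP{\infty}}{\LP{1}} (\msr) and \NCC{\LP{\infty}}{\LP{\infty}} ($k$-Center).

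For the first stage, I would adopt the surrogate-norm reduction of Chakrabarty--Swamy~\cite{chakrabarty-swamy19:norm-k-clustering}. After rounding the weights $w_1 \geq w_2 \geq \ldots \geq w_k$ of the ordered norm to powers of $(1+\eps)$, the indices split into $O(\log k / \eps)$ buckets of approximately constant weight. Using the identity $g(r) = \sum_{\ell} (w_\ell - w_{\ell+1}) \topl{\ell}{r}$, the ordered objective becomes a non-negative combination of $\topl{\ell}{\cdot}$ norms supported on bucket boundaries, and one can identify a single ``surrogate'' top-$\ell$ instance whose approximate minimizer is also an approximate minimizer of $g$. This reduces the ordered problem to \NCC{\LP{\infty}}{\topl{\ell}{\cdot}} with only a constant factor loss.

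For the second stage, I would use the dual formulation
\begin{equation*}
\topl{\ell}{r} \;=\; \min_{t \geq 0} \Bigl[\, \ell \cdot t + \sum\nolimits_i (r_i - t)^+ \,\Bigr].
\end{equation*}
The optimal threshold $t^*$ coincides with the $\ell$-th largest radius in the optimal solution, hence lies in a polynomial-size set of pairwise distances, and I would enumerate all candidates. For a fixed $t$, the task becomes to choose $k$ covering balls minimizing the separable cost $\sum_i (r_i - t)^+$. This is a ``free-allowance'' variant of \msr{} in which any ball of radius at most $t$ is free while excess radius is charged linearly. I would solve it via a Lagrangean-multiplier-preserving (LMP) primal-dual scheme analogous to those of Charikar--Panigrahi~\cite{MinSumRadii} and Ahmadian--Swamy~\cite{ahmadian}: write an LP with a variable $y_{x,r}$ for every (center, radius) pair carrying cost $(r-t)^+$ and client-assignment variables $x_{p,x,r}$; dualize the constraint $\sum_{x,r} y_{x,r} \leq k$ with a multiplier $\lambda$ and binary-search it; then combine two near-optimal solutions of different cardinalities via a bi-point rounding step. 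Multiplying the constant losses across the surrogate-norm reduction, the guess of~$t$, and the LMP primal-dual step, and fine-tuning the constants, yields the claimed factor $18 + \eps$.

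The main obstacle is the free-allowance \msr{} subproblem. The cost $(r-t)^+$ is neither scale-invariant nor sub-additive in $r$, which breaks the clean ball-ordering and charging arguments underlying classical \msr{} primal-dual analyses. Re-engineering the dual so that balls of radius just above $t$ are charged correctly---without double-counting the allowance $t$ already paid in the $\ell\cdot t$ term---while still preserving the LMP property required by the bi-point combination step is the delicate technical ingredient, and is where I would expect the bulk of the proof effort to go.
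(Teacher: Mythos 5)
The central gap is in your first stage. Reducing the ordered outer norm to a \emph{single} $\topl{\ell}{\cdot}$ surrogate instance is not something the Chakrabarty--Swamy machinery (or any other) delivers, and it is in fact false: for a weight vector such as $w_i = 1/i$, the ratio $\ord{\bm{w}}{\bm{x}}/\topl{\ell}{\bm{x}}$ is not bounded independently of $\bm{x}$ for any fixed $\ell$, so no single top-$\ell$ instance can serve as a constant-factor surrogate. What Chakrabarty--Swamy actually supply (and what the paper uses) is a \emph{combined} proxy $\sum_i (w_i-w_{i+1})\,\proxy{t_i}{\cdot}{i}$ that keeps all the $O(\log k)$ breakpoints together, along with a polynomial-time procedure (their Lemmas 6.8--6.9) to produce a polynomial-size candidate list of whole threshold \emph{vectors} $\bm{t}$. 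Guessing each $t_i$ independently, as you propose via the single-$\ell$ dual formula, would instead give an $n^{O(\log k)}$ enumeration.

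Once the full proxy is kept, your second stage is in fact essentially the paper's route, just in a more general form than your $(r-t)^+$ version. The paper defines Min-Sum of Radius-Dependent Cost with a \emph{piecewise-linear} non-decreasing cost $h(a) = \sum_i (\tilde w_i - \tilde w_{i+1})\,(a \dotdiv t_i)$, which is exactly the proxy collapsed into a single per-ball radius charge; your free-allowance $\msr$ with charge $(r-t)^+$ is the special case with one breakpoint. Your worry that $(r-t)^+$ breaks the clean $\msr$ charging arguments turns out to be manageable: the LMP primal-dual, the binary search on $\lambda$, and the Li--Svensson/Ahmadian--Swamy-style bi-point rounding in the paper only need $h$ to be non-decreasing. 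One further ingredient you omit but the paper requires is guessing the $O(1/\eps)$ largest balls of the optimum, which is needed so that the single fractional (``special'') ball remaining after the knapsack-LP rounding can be charged to $\eps\cdot\OPT$. So the fix to your outline is: replace ``single surrogate top-$\ell$ with one threshold $t$'' by ``the full Chakrabarty--Swamy proxy with a threshold vector $\bm{t}$ drawn from their polynomial candidate list, packaged as a general non-decreasing $h$,'' and add the largest-ball guessing step before bi-point rounding.
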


Combining these two results with general properties of the different types of norms we obtain first (although non-constant) approximation algorithms for many further settings. 
These results are summarized in \Cref{tbl:results}. 
We remark that many of the natural generalizations contain Min-Load $k$-Clustering as a special case for which an $o(k)$-approximation is elusive~\cite{ahmadian-etal18:min-load-k-median}. 
Notice that our two main results provide a contrast to the related Generalized Load Balancing problem~\cite{deng-etal23:Generalized-Load-Balancing}, which is $O(\log n)$-hard to approximate even when the outer norm is $\LP{1}$ and the inner norm is $\LP{\infty}$.
In \Cref{thm:bicriteria}, we design a bicriteria $(O(1),O(1))$-approximation for \NCCS{\LP{\infty}}{\textsf{Sym}}, that is a constant factor approximation opening $O(k)$ (instead of $k$) facilities.
This is an independent result not obtained from the two main results. We obtain this result by identifying a connection to the Non-Uniform $k$-Center problem~\cite{NUkC}.

It is an interesting open question if there are $O(1)$-approximations for \NCCS{\textsf{Ord}}{\LP{1}} or \NCCS{\LP{\infty}}{\textsf{Sym}}. Notice that for cluster-oblivious objectives $\textsf{top}_{\ell}$ norms have previously served as a crucial intermediate step to handle ordered weighted norms~\cite{joachim,otherOrderedKMedian,aouad-segev19ordered-k-median}. 
Similarly, handling general norms crucially relies on ordered weighted norms~\cite{chakrabarty-swamy19:norm-k-clustering}. 
We therefore believe that our two main results may help to resolve these open questions.

\begin{table}[!ht]
\caption{Best-known approximation factors for several important cases of \NNCC{}. The bold entries are the main results of this paper. Entries without a reference are stated in \Cref{cor:listofcors}.}
\label{tbl:results}
\begin{center}
    \begin{tabular}{ | c | c | c | c | c |}
    \hline
    \diagbox{Outer}{Inner}& $\LP{1}$ & $\LP{\infty}$ & \textsf{Top} & \textsf{Sym}\\ \hline
    \multirow{2}{6em}{\centering$\LP{1}$} & {$O(1)$}& {$O(1)$}& {$\mathbf{O(1)}$}& \multirow{2}{6em}{\centering$\widetilde{O}(\sqrt{n})$}\\ 
    &\cite{JainVaz}&\cite{MinSumRadii}&\textbf{\Cref{thm:apxtoplone}}& \\\hline
    \multirow{2}{6em}{\centering$\LP{\infty}$} & {$O(k)$}& {$O(1)$}& \multirow{4}{6em}{\centering $O(k)$} & \multirow{4}{6em}{\centering$\widetilde{O}(\sqrt{nk})$}\\
    &\cite{ahmadian-etal18:min-load-k-median}&\cite{hochbaum-shmoys85:k-center,gonzalez85:k-center}& &\\\cline{1-3}
    \multirow{2}{6em}{\centering \textsf{Ord}} & \multirow{2}{6em}{\centering$O\left(k\right)$}& \multirow{2}{6em}{\centering$\mathbf{O(1)}$\\ \centering\textbf{\Cref{thm:linford}}}& &\\
    & & &  &\\ \hline
    \multirow{3}{6em}{\centering \textsf{Sym}} & \multirow{3}{6em}{\centering$O(k\log k)$}& \begin{tabular}{@{}c@{}}$\mathbf{(O(1),O(1))}$\\ \textbf{\Cref{thm:bicriteria}}\end{tabular}& \multirow{3}{6em}{\centering$O(k\log k)$} &\multirow{3}{8em}{\centering$\widetilde{O}(\sqrt{nk})$}\\    
    &&\begin{tabular}{@{}c@{}}\multirow{2}{6em}{\centering $O(\log k)$}\\ \hspace{2pt}\end{tabular}& & \\\hline
    \end{tabular}
\end{center}
\end{table}

\subsection{Our Techniques}

A known difficulty of handling norm objectives is their non-linearity. Works for cluster-oblivious problems~\cite{aouad-segev19ordered-k-median,joachim,otherOrderedKMedian,chakrabarty-swamy19:norm-k-clustering} therefore replace the norm objective with a \emph{proxy} objective. For example,  the $\topl{\ell}{\bm{x}}$ norm, that sums over the $\ell$ largest coordinates of $\bm{x}$, is replaced with $\proxy{y}{\bm{x}}{\ell} = \ell \cdot y  + \sum_{i=1}^n (x_i \dotdiv y)$
where $y$ is a threshold parameter and $a\dotdiv b=\max\{0,a-b\}$. It is used that $\topl{\ell}{\bm{x}}=\proxy{\bm{x}^{\downarrow}[\ell]}{\bm{x}}{\ell} $ where $\bm{x}^{\downarrow}[\ell]$ denotes the $\ell$-th largest coordinate in $\bm{x}$. 
Morever,  $\proxy{y}{\bm{x}}{\ell} \ge \topl{\ell}{\bm{x}}$ for all $y$. The benefit of the proxy function is that it is a linear function over the modified vector $(x_i\dotdiv y)_{i\in [n]}$. Two difficulties arise from this. First, in order to preserve the optimum objective value, we need to choose the threshold parameter $y$ to be the $\ell$-th largest distance $\bm{o}^{\downarrow}[\ell]$ in the optimum  $\bm{o}$ cost vector. The second difficulty is specific for clustering where the cost vector $\bm{x}$ represents a distance vector such as $(\delta(p,\sigma(p))_{p\in P}$ for cluster-oblivious objectives. Modifying this cost vector in clustering means to modify the distances by substracting the threshold, which may lead to violation of the triangle inequality. However, classic clustering algorithms crucially rely on the triangle inequality.

For cluster-oblivious objectives, the first difficulty of choosing the right threshold~$y$ can be handled by simply guessing~$\bm{o}^{\downarrow}[\ell]$. This idea fails, however, for handling the inner norm in cluster-aware objectives. We would have to guess for every optimal \emph{center} $o$ the $\ell$-largest distance $\bm{\delta}^{\downarrow}_\sigma(o)[\ell]$ in its cluster. This is computationally infeasible as there are $k$ clusters. We illustrate how we overcome this difficulty for \NCCS{\textsf{top}_\ell}{\LP{1}}. Instead of guessing the thresholds, we reduce the problem to the following new and natural variant of $k$-Median, which we call \emph{Ball $k$-Median}. In this problem, we want to select $k$ balls (specified by center and radius) rather than $k$ centers. We aim at minimizing the total cost of connecting the clients to the balls plus the (scaled) sum of radii (see Definition~\ref{def:ball-k-median}). Connecting a point $p$ to a ball with radius $r$ centered at $x$ costs $\delta^r(p,x)=\delta(p,x)\dotdiv r$, which violates the triangle inequality.

There are two key properties of Ball $k$-Median that we leverage in our main result: First, the radius selection of the balls encapsulates the choice of the thresholds in the proxy functions thereby bypassing the guessing used for cluster-oblivous objectives~\cite{joachim,otherOrderedKMedian}.
Second, Ball $k$-Median admits a natural LP relaxation that is amenable to applying the Langrange relaxation and bi-point rounding framework by Jain and Vazirani~\cite{JainVaz} and used by many algorithms for $k$-Median and Min-Sum of Radii.

This framework has many steps (outlined below) that fulfil, on a high-level, similar purposes for $k$-Median and Min-Sum of Radii. The concrete implementations of the steps are very different for both problems, and have previously been part separate lines of research. Our main insight is that, by generalizing, extending, and combining techniques from four various different works~\cite{JainVaz,ola,MinSumRadii,ahmadian}, all these steps can be unified into a single algorithm. This requires us to find the right generalization of these steps previously designed for $k$-Median and Min-Sum of Radii, separately. Additionally, as in previous works on cluster-oblivious objectives, dealing with the non-metric distances requires non-trivial ideas in most of these steps. Overall, this demonstrates a versatility of the techniques by~\cite{JainVaz,ola,MinSumRadii,ahmadian} that we find surprising.

The first step of this framework consists in a primal-dual approximation algorithm for the Lagrange relaxation of the clustering problem, where we do not restrict the number of centers but pay for opening them. For the resulting Ball Facility Location problem, we design a primal-dual algorithm unifying the algorithm by Jain and Vazirani~\cite{JainVaz} for classic Facility Location and by Charikar and Panigrahi~\cite{MinSumRadii}. Specifically, the dual ascent phase extends the one by~\cite{JainVaz} whereas the pruning phase builds on~\cite{MinSumRadii}.

The second step of the framework uses the primal-dual algorithm to compute a fractional \emph{bi-point solution} for Ball $k$-Median, which is a convex combination of two solutions $X_1$ and $X_2$ for Ball Facility Location where $|X_1|<k$ and $|X_2|>k$. The computation of the bi-point solution is straightforward for Ball $k$-Median. This bi-point solution is then rounded to an integral feasible solution for Ball $k$-Median. The bi-point rounding step poses the most technical challenges in our algorithm. We unify bi-point rounding approaches by Li and Svensson~\cite{ola} for $k$-Median and by Ahmadian and Swamy~\cite{ahmadian} for Min-Sum of Radii. Both approaches are based on setting up an auxiliary knapsack LP to partially round the bi-point solution to an integral solution apart from one fractional group, called \emph{special group}, consisting of one center from $X_1$ and multiple centers from $X_2$. Some of the bi-point rounding steps require notable changes of the strategy, for example, when handling the special group. See Sections~\ref{subsec:apxballk} and~\ref{subsubsec:constructingvialp}.

\subsection{Previous and Related Work}\label{sec:sota}
\paragraph{Cluster-Aware Objectives.}

Most of the natural clustering problems are NP-hard such as \msr{}~\cite{gibson-etal10:msr} or even APX-hard such as $k$-Median~\cite{guha-khuller99:greedy-facility-location} and $k$-Center~\cite{hochbaum-shmoys85:k-center,gonzalez85:k-center} and Min-Load $k$-Clustering~\cite{ahmadian-etal18:min-load-k-median}.

This inspired intensive research on approximation algorithms for these problems leading to the development of a rich toolbox of algorithmic techniques based on, for example, greedy, local search, primal-dual, or LP-rounding. There is a series of improved approximation algorithms for Min-Sum of Radii~\cite{MinSumRadii,friggstad-jamshidian22:msr} with the currently best approximation $3+\epsilon$ by Buchem et al.~\cite{buchem-etal24:msr}.
Interestingly enough, it admits a quasi-polynomial time approximation scheme~\cite{gibson-etal10:msr} and is therefore probably not APX-hard. For $k$-Center $2$-approximations are known, which is best possible unless $\textsf{P}=\textsf{NP}$~\cite{hochbaum-shmoys85:k-center,gonzalez85:k-center}. 
There is an intensive line of research improving the approximation factors for $k$-Median~\cite{charikar-etal02:constant-k-median,JainVaz,jain-etal03:greedy-facility-location,arya-etal04:local-search-k-median,ola}. The currently best approximation by Gowda et al.~\cite{gowdaetal2023:bestkmed} has a ratio of $2.613+\epsilon$. 
The Min-Load $k$-Clustering problem, in contrast, is much less understood. 
An $O(k)$-approximation follows from the $O(1)$-approximations for $k$-Median, and approximation schemes are known for line metrics~\cite{ahmadian-etal18:min-load-k-median}. 
However, an $o(k)$-approximation for general metrics is elusive.

\paragraph{Cluster-Oblivious Norm Objectives.}
There has been a recent interest in generalized objectives for cluster-oblivious problems. A first set of result focused on $\textsf{top}_{\ell}$ (called $\ell$-Centrum) and the more general ordered weighted objectives (called Ordered $k$-Median) obtaining logarithmic approximations~\cite{aouad-segev19ordered-k-median}. Byrka et al.~\cite{joachim}, and Chakrabarty and Swamy~\cite{otherOrderedKMedian} obtain the first constant-factor approximations for Ordered $k$-Median, which unifies constant-factor approximations for $k$-Median and $k$-Center and also implies a constant-factor approximation for $\ell$-Centrum. This line of research culminated in the constant-factor approximation for general (cluster-oblivious) monotone, symmetric norms by Chakrabarty and Swamy~\cite{chakrabarty-swamy19:norm-k-clustering} further generalizing ordered weighted norms.
Chlamt{\'{a}\v{c}} et al.~\cite{chlamtac-etal22:fair-cascaded-norm-clustering} study $(p,q)$-fair clustering where the data points are partitioned into groups (more generally described by multiple weight functions). Each group is assigned a cost under the $\LP{p}$ norm and the overall cost is the $\LP{q}$ norm of the group costs. While their clustering objective involves nested norms as well, their groups are fixed by the input whereas our ``groups'' (clusters) are to be determined as part of the solution. Also, we consider general monotone, symmetric norm and focus on $\textsf{top}_{\ell}$ and ordered weighted norms in particular rather than $\LP{p}$-norms objectives. Abbasi et al.~\cite{abbasi-etal23:epas-norm-clustering} study general \emph{asymmetric} monotone norms, which subsume all cluster-oblivious objectives described above (including $(p,q)$-fair clustering). They develop an efficient parameterized approximation scheme for structured metric spaces such as high-dimensional Euclidean space, bounded doubling metrics, and shortest path metrics in bounded tree-width and planar graphs.

Notice that all of the above problems are cluster-oblivious and therefore do not capture cluster-aware objectives such as Min-Sum of Radii and Min-Load $k$-Clustering.

\paragraph{Generalized Load Balancing.}
Our problem is related to the Generalized Load Balancing problem recently introduced by Deng et al.~\cite{deng-etal23:Generalized-Load-Balancing}. In this problem, we are given a set of jobs (related to our data points) and a set of machines (related to our facilities). Executing a job~$j$ on a machine~$i$ incurs a processing time~$p(i,j)$ (related to point-center distances). The load of machine $i$ is computed by a monotone, symmetric norm (inner norm) $\psi_i$ of the vector of processing times of jobs assigned to~$i$. The loads of the machines are then aggregated via an outer norm to the overall objective function~$\phi$, which we wish to minimize. Their main result is an $O(\log n)$-approximation algorithm for this problem, which is best possible unless $\textsf{P}=\textsf{NP}$. Notice that their setting is incomparable to ours. It does not capture the selection of a $k$-subset of centers because the set of machines is fixed. On the other hand, their inner norms are machine-specific and their processing times do not need to satisfy the triangle inequality. An earlier work by Chakrabarty and Swamy~\cite{chakrabarty-swamy19:norm-k-clustering} introduces the special of norm load balancing where the inner norm is $\LP{1}$ and obtain an $2$-approximation for it.

\paragraph{Submodular Load Balancing, Allocation, and Facility Location.} 
Svitkina and Fleischer~\cite{svitkina-fleischer11:submodular-approx} study the related setting of Submodular Load Balancing where replace the inner norm with a submodular function and use $\LP{\infty}$ as an outer norm. They obtain an $O(\sqrt{n/\log n})$-approximation for this problem along with matching lower bounds. If we use instead $\LP{1}$ as the outer norm, we obtain the Submodular Cost Allocation problem~\cite{chekuri-ene11:submodular-cost-allocation}. The authors obtain a $O(\log n)$-approximation.

Another line of research focuses on \emph{monotone} submodular functions as inner objective. In the Submodular Facility Location problem~\cite{svitkina-tardos10:hierarchical-facility-location} each center (facility) $x\in F$ is associated with a monotone, submodular function $s_x\colon P\rightarrow\nnr$. A solution $\sigma\colon P\rightarrow F$ assigns each point (client) to a center (facility). The goal is to minimize the total connection cost $\sum_{p\in P}\delta(p,\sigma(p))$ plus the facility cost $\sum_{x\in F}s_x(\sigma^{-1}(x))$. Similar to the other allocation problems, it does not concern selection of facilities but assumes they are fixed. Svitkina and Tardos~\cite{svitkina-tardos10:hierarchical-facility-location} and give an $O(\log n)$-approximation algorithm for it, which is asymptotically best possible as the problem generalizes set cover~\cite{shmoys-etal04:facility-location-service-costs}. In a recent work, Abbasi et al.~\cite{abbasi-etal24:submodular-fac-loc} design a $O(\log\log n)$-approximation for the uniform case where every facility is assigned the same submodular function.

\subsection{Motivating Example}\label{sec:motivating-example}
As already mentioned, \NCC{\textsf{Top}}{ \LP{1} } interpolates between two important clustering problems, namely $k$-Median ($\ell=n$) and Min-Sum of Radii ($\ell=1)$.
We believe that the intermediate problems defined are of independent interest.
We demonstrate that through an example.

\begin{figure}[!ht]
  \centering
  \subfloat[][Dataset]{\includegraphics[width=.45\textwidth]{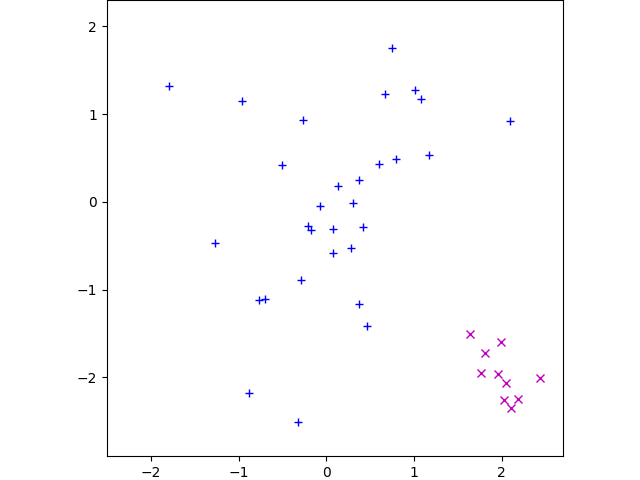}}\quad
  \subfloat[][$k$-Median]{\includegraphics[width=.45\textwidth]{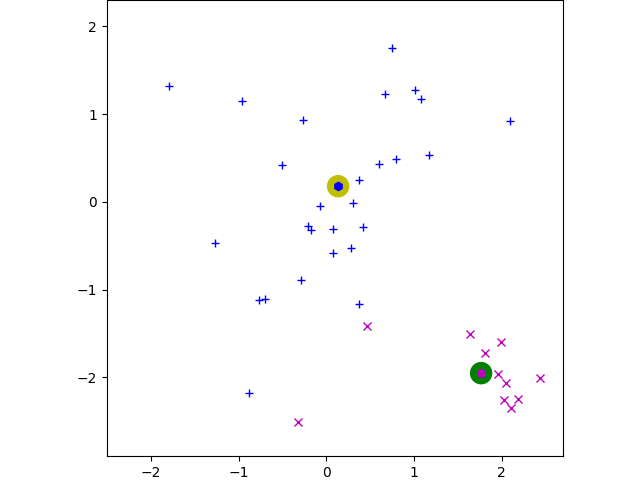}}\\
  \subfloat[][Min-Sum of Radii]{\includegraphics[width=.45\textwidth]{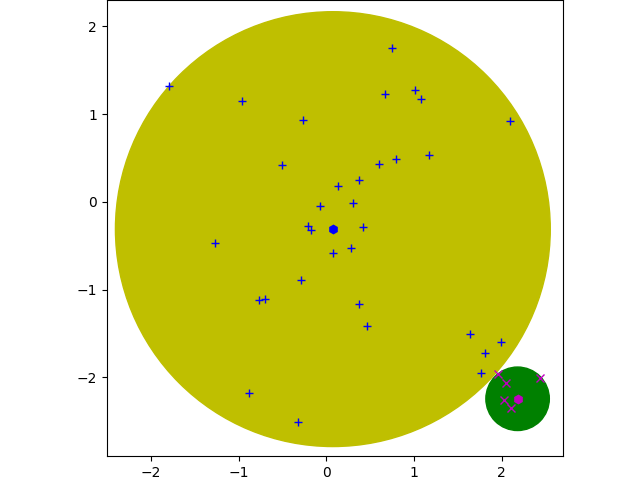}}\quad
  \subfloat[][\NCCS{\topl{8}{\cdot}}{\LP{1}}]{\includegraphics[width=.45\textwidth]{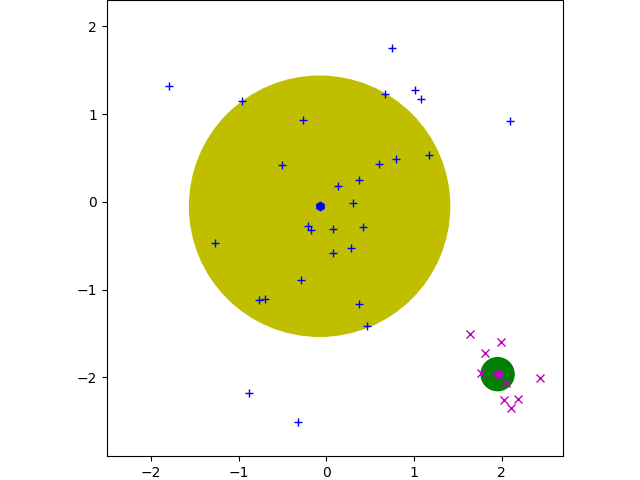}}
  \caption{A dataset of two clusters generated by two different $2$D Gaussians, on which we run $k$-Median, Min-Sum of Radii, and \NCCS{\topl{8}{\cdot}}{\LP{1}}. In the $k$-Median solution, points from the large cluster end up in the small cluster, while the opposite happens for Min-Sum of Radii. 
  Only \NCCS{\topl{8}{\cdot}}{\LP{1}} recovers the original clusters.
  In the fourth plot, the radii of the balls signify the $\ell$-th largest distance ($\ell=8$) in their respective clusters; these balls also directly relate to Ball $k$-Median (see \Cref{sec:topl}). 
  }
  \label{fig:dissection}
\end{figure}

In \Cref{fig:dissection} we have a dataset with two clusters generated by two different $2$D Gaussians.
We observe that the $k$-Median solution (and in fact any cluster-oblivious objective) assigns points of the large cluster to the small cluster.
This is a phenomenon that occurs with cluster-oblivious objectives; in fact, a motivation behind Min-Sum of Radii is that it reduces such dissection effects~\cite{hansen-jaumard97cluster-analysis,MinSumRadii}.
Nonetheless, the (inner) $\LP{\infty}$ norm in Min-Sum of Radii exhibits a high sensitivity towards variations in the periphery of a cluster, which may cause dissection effects as well. In our example, the ball capturing the large cluster has such a big radius that it cuts through the dense small cluster.

At the same time, if $\ell$ roughly corresponds to the number of points in the cluster peripheries then the (inner) $\topl{\ell}{\cdot}$ norm in \NCC{\textsf{Top}}{ \LP{1} } exhibits more robustness to variations in the peripheries and therefore can find a compromise between the two dissection issues.
In the example in \Cref{fig:dissection}, we were able to recover the original clusters.

\section{Preliminaries}
\begin{Definition}[\NNCC{}]\label{def:nncc}
    The input $\I = (P,F,\delta,k,f,g)$ consists of the point set~$P$, the set~$F$ of facilities, a metric~$\delta: (P\cup F) \times (P\cup F)\rightarrow \nnr$, a number~$k\in \mathbb{N}$, a symmetric, monotone norm~$f:\nnrvec\rightarrow \nnr$ and a symmetric, monotone norm~$g:\mathbb{R}^k_{\ge 0}\rightarrow \nnr$.
    A solution~$\X=(X,\sigma)$ consists of
    a subset~$X\subseteq F$ of facilities such that $|X|\le k$ and
    an assignment function~$\sigma\colon P \rightarrow X$. The goal is to find a solution~$\X$ that minimizes 
    \begin{align*}
        \cost{\sigma}{X} = g\left(\left( f(\distv{\sigma}{x} \right)_{x\in X}\right)
    \end{align*}
    where ${\distv{\sigma}{x} = (\dist{p}{x}\cdot\eins[x=\sigma(p)])_{p\in P} }$ is the cluster cost vector of $x$.
\end{Definition}

We define also special cases of \NNCC{}.

\begin{Definition}[\NCC{I}{O}]\label{def:ncc}
    For two classes $I$ and $O$ of norms this problem is the \NNCC{} with the restriction that $f\in I$ and $g\in O$.
\end{Definition}

For an instance $\I$ of \NNCC{}, let $\X^*_\I=(X^*_\I,\sigma^*_\I)$ be the optimal solution with value $\OPT_\I=\cost{\sigma^*_\I}{X^*_\I}$. 
For other problems that are defined later we adapt this notation.

\subsection{Different Types of Norms} \label{ssec:typesOfNorms}
Given a vector $\bm{x}$, we let $\bm{x}^\downarrow[i]$ be the $i$-th largest entry of $\bm{x}$.
We also define $\bm{x}^\downarrow$ to be the vector obtained by sorting the entries of $x$ in non-decreasing order.

We now introduce the different norms that are analyzed in this work.
The $\LP{1}$ norm is the sum of the absolute values of the entries of a vector and the $\LP{\infty}$ norm is the maximum absolute value of the entries of a vector.

We also define the top-$\ell$ norm.
\begin{Definition}[top-$\ell$ norm]
    For a number $\ell \in\mathbb{N}$ and a vector $\bm{x}\in \nnrvec$ with $n\ge \ell$ the $\topl{\ell}{\cdot}$-norm is defined as
    \begin{align*}
        \topl{\ell}{\bm{x}} = \sum_{i=1}^\ell x^\downarrow[i].
    \end{align*}
    Let $\textnormal{\textsf{Top}} =\{\topl{\ell}{\cdot}\mid \ell \in \mathbb{N}  \}$ be the class of all $\topl{\ell}{\cdot}$-norms.
\end{Definition}

Next, we define the ordered norm.
\begin{Definition}[ordered norm]
    For two vectors $\bm{x},\bm{w} = \{w_1,\dots,w_n\}\in \nnrvec$ where the entries of $\bm{w}$ are non-increasing, that is $w_i\ge w_{i+1}$ for all $i\in [n-1]$, the \ord{\bm{w}}{\cdot}-norm is defined as
    \begin{align*}
        \ord{\bm{w}}{\bm{x}} = \bm{w}\cdot\bm{x}^\downarrow = \sum_{i=1}^n w_i\cdot x^\downarrow[i].
    \end{align*}
    We call $\bm{w}$ the weight vector.
    Let $\textnormal{\textsf{Ord}} =\{\ord{\bm{w}}{\cdot}\mid \bm{w} \in \mathbb{R}^*_{\ge 0}  \}$ be the class of all $\ord{\bm{w}}{\cdot}$-norms.
\end{Definition}

\begin{Definition}
    Let $\textnormal{\textsf{Sym}}$ be the class of all symmetric, monotone norms.
\end{Definition}

\subsection{Proxy Costs}\label{sec:proxy}
The \topl{\ell}{\cdot} norm and the \ord{\bm{w}}{\cdot} are non-linear, which makes them difficult to work with.
To bypass this difficulty, \cite{chakrabarty-swamy19:norm-k-clustering} used proxy costs.
From a high level view, these proxy costs have an additional input (called the threshold); if we use the ``correct'' threshold, then the proxy cost (corresponding to some norm $f$) of a vector is equal to the $f$ norm of the vector.
Furthermore, no matter the choice of the threshold, the proxy cost of a vector upper bounds the $f$ norm of the vector.
The idea of proxy costs and the observations used in this section come from \cite{chakrabarty-swamy19:norm-k-clustering}.

We start with the proxy cost for the $\topl{\ell}{\cdot}$ norm. 
Let $\bm{x}= (x_1,\dots,x_n) \in \nnrvec$ be a vector and $y \in \nnr$.
We call $y$ the threshold.
\begin{align*}
    \proxy{y}{\bm{x}}{\ell} = \ell \cdot y  + \sum_{i=1}^n (x_i \dotdiv y)
\end{align*}
\begin{observation}
    For all vectors $\bm{x}\in \nnrvec$, thresholds $y\in \nnr$ and $\ell \in [n]$, it holds that
    \begin{align*} 
        \proxy{y}{\bm{x}}{\ell} \ge \topl{\ell}{\bm{x}}
    \end{align*}
\end{observation}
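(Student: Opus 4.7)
The plan is to prove the pointwise inequality $x^{\downarrow}[i] \le y + (x^{\downarrow}[i] \dotdiv y)$ for every coordinate $i$, and then sum over the top $\ell$ coordinates, bounding the truncated sum by the sum over all coordinates since the terms are non-negative.

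First I would establish the elementary fact that for any $a, y \in \mathbb{R}_{\ge 0}$, we have $a \le y + (a \dotdiv y)$. This splits into two cases: if $a \ge y$, then $a \dotdiv y = a - y$, so $y + (a \dotdiv y) = a$, giving equality; if $a < y$, then $a \dotdiv y = 0$ and $a \le y = y + (a \dotdiv y)$. Either way, the inequality holds.

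Next I would apply this with $a = x^{\downarrow}[i]$ and sum over $i \in [\ell]$, yielding
\begin{equation*}
    \topl{\ell}{\bm{x}} = \sum_{i=1}^{\ell} x^{\downarrow}[i] \le \ell \cdot y + \sum_{i=1}^{\ell} (x^{\downarrow}[i] \dotdiv y).
\end{equation*}
Finally, since each summand $(x_i \dotdiv y) \ge 0$, the partial sum over the $\ell$ largest coordinates is upper-bounded by the full sum over all $n$ coordinates (and this full sum is invariant under reordering), so
\begin{equation*}
    \sum_{i=1}^{\ell} (x^{\downarrow}[i] \dotdiv y) \le \sum_{i=1}^{n} (x_i \dotdiv y),
\end{equation*}
which combined with the previous bound gives $\topl{\ell}{\bm{x}} \le \ell \cdot y + \sum_{i=1}^{n} (x_i \dotdiv y) = \proxy{y}{\bm{x}}{\ell}$.

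There is no real obstacle here; the entire argument is a two-line manipulation resting on the case analysis $a \ge y$ versus $a < y$. The only thing worth being careful about is that we apply the pointwise bound only to the $\ell$ largest entries (which is what the left-hand side demands), and then relax to all $n$ entries using non-negativity of the truncated differences — this relaxation is what makes the inequality independent of which specific indices realize the top $\ell$ values.
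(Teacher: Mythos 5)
Your proof is correct and is the standard argument; the paper itself states this observation without proof, attributing the idea of proxy costs to Chakrabarty and Swamy~\cite{chakrabarty-swamy19:norm-k-clustering}. Your three-step chain (the pointwise bound $a \le y + (a \dotdiv y)$, summing over the top $\ell$ indices, and then relaxing to all $n$ indices using non-negativity of $\cdot \dotdiv y$) is exactly the expected reasoning, so there is nothing to reconcile against a differing paper argument.
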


\begin{observation}
    For all vectors $\bm{x}\in \nnrvec$ and $\ell \in [n]$, it holds that
    \begin{align*}
        \proxy{\bm{x}^{\downarrow}[\ell]}{\bm{x}}{\ell} = \topl{\ell}{\bm{x}}
    \end{align*}
\end{observation}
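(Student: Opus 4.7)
The plan is to expand the definition of $\proxyl{y}{\bm{x}}{\ell}$ with the specific threshold $y = \bm{x}^{\downarrow}[\ell]$, split the sum according to whether each entry lies strictly above, equal to, or below~$y$, and then match the resulting expression to $\sum_{i=1}^{\ell}\bm{x}^{\downarrow}[i]$. I expect the argument to be short; the only subtle point is handling possible ties at position~$\ell$ of the sorted vector, which I address explicitly.

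More concretely, set $y := \bm{x}^{\downarrow}[\ell]$ and let $m$ be the number of coordinates of~$\bm{x}$ that are \emph{strictly} greater than $y$. Since $\bm{x}^{\downarrow}[\ell] = y$, these $m$ coordinates are exactly $\bm{x}^{\downarrow}[1],\ldots,\bm{x}^{\downarrow}[m]$, and we must have $m \le \ell-1$. For every remaining coordinate $x_i \le y$ we have $x_i \dotdiv y = 0$, so
\begin{align*}
\sum_{i=1}^{n} (x_i \dotdiv y) \;=\; \sum_{i=1}^{m}\bigl(\bm{x}^{\downarrow}[i] - y\bigr) \;=\; \Bigl(\sum_{i=1}^{m} \bm{x}^{\downarrow}[i]\Bigr) - m\,y.
\end{align*}

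Plugging this into the definition gives
\begin{align*}
\proxyl{y}{\bm{x}}{\ell} \;=\; \ell\,y + \sum_{i=1}^{m}\bm{x}^{\downarrow}[i] - m\,y \;=\; (\ell - m)\,y + \sum_{i=1}^{m}\bm{x}^{\downarrow}[i].
\end{align*}
By choice of $m$, the entries $\bm{x}^{\downarrow}[m+1],\ldots,\bm{x}^{\downarrow}[\ell]$ all equal $y$ (they are neither strictly larger than $y$ nor can they drop below the $\ell$-th largest value $y$), so $(\ell-m)\,y = \sum_{i=m+1}^{\ell}\bm{x}^{\downarrow}[i]$. Substituting yields $\proxyl{y}{\bm{x}}{\ell} = \sum_{i=1}^{\ell}\bm{x}^{\downarrow}[i] = \topll{\ell}{\bm{x}}$, as required. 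The main (and essentially only) obstacle is the tie-handling at the threshold, which is resolved by the telescoping $(\ell-m)\,y = \sum_{i=m+1}^{\ell}\bm{x}^{\downarrow}[i]$ above.
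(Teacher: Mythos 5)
Your proof is correct: the key point is that the entries $\bm{x}^{\downarrow}[m+1],\ldots,\bm{x}^{\downarrow}[\ell]$ are squeezed between $\le y$ (none strictly exceeds $y$) and $\ge \bm{x}^{\downarrow}[\ell] = y$ (non-increasing sorted order), so they all equal $y$ and $(\ell-m)y$ exactly makes up the missing terms. The paper states this observation without proof (importing it from Chakrabarty and Swamy), so there is no in-paper argument to compare against; yours is the natural elementary verification, with the tie case at the threshold handled cleanly.
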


Let $\bm{w}=(w_1,\dots,w_n)\in \nnrvec$, $\bm{x}= (x_1,\dots,x_n) \in \nnrvec$ and $\bm{t}=(t_1,\dots, t_n) \in \nnrvec$ be three vectors, where $\bm{w}$ and $\bm{t}$ are non-increasing.
We call $\bm{t}$ the threshold vector.

\begin{align*}
    \proxyz{\bm{x}}{\bm{w}}{\bm{t}} = \sum_{i=1}^n (w_i-w_{i+1})\proxy{t_i}{\bm{x}}{i}
\end{align*}

\begin{observation}
    For all vectors $\bm{w}=(w_1,\dots,w_n)\in \nnrvec$, $\bm{x}\in \nnrvec$, $\bm{t}\in \nnrvec$, it holds that
    \begin{align*} 
        \proxyz{\bm{x}}{\bm{w}}{\bm{t}} \ge \ord{\bm{w}}{\bm{x}}
    \end{align*}
\end{observation}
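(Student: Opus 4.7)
The plan is to reduce the statement directly to the earlier termwise inequality $\proxy{t_i}{\bm{x}}{i}\ge\topl{i}{\bm{x}}$ via an Abel (summation-by-parts) rewriting of the ordered norm, using the non-increasing structure of $\bm{w}$ to keep all coefficients non-negative.

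First I would rewrite $\ord{\bm{w}}{\bm{x}}$ as a non-negative linear combination of top-$i$ norms. Setting $w_{n+1}:=0$ (the standard convention implicit in the definition of $\proxyz{\bm{x}}{\bm{w}}{\bm{t}}$) and $\topl{0}{\bm{x}}:=0$, summation by parts gives
\begin{align*}
\ord{\bm{w}}{\bm{x}}
= \sum_{i=1}^{n} w_i\,\bm{x}^\downarrow[i]
= \sum_{i=1}^{n} w_i\bigl(\topl{i}{\bm{x}}-\topl{i-1}{\bm{x}}\bigr)
= \sum_{i=1}^{n} (w_i-w_{i+1})\,\topl{i}{\bm{x}}.
\end{align*}
Since $\bm{w}$ is non-increasing, each coefficient $w_i-w_{i+1}$ is non-negative.

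Next, I would apply the previously established inequality $\proxy{t_i}{\bm{x}}{i}\ge\topl{i}{\bm{x}}$ term-by-term. Multiplying by the non-negative coefficient $w_i-w_{i+1}$ and summing over $i$ yields
\begin{align*}
\proxyz{\bm{x}}{\bm{w}}{\bm{t}}
= \sum_{i=1}^{n}(w_i-w_{i+1})\,\proxy{t_i}{\bm{x}}{i}
\ge \sum_{i=1}^{n}(w_i-w_{i+1})\,\topl{i}{\bm{x}}
= \ord{\bm{w}}{\bm{x}},
\end{align*}
which is the claimed bound.

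There is no real obstacle here: the only subtlety is the boundary convention $w_{n+1}=0$, which is already baked into the definition of $\proxyz{\bm{x}}{\bm{w}}{\bm{t}}$ (the sum ranges up to $i=n$), and the telescoping identity is a routine computation. The proof is essentially a non-negative convex combination of the corresponding inequality for the top-$\ell$ proxy, leveraging monotonicity of $\bm{w}$.
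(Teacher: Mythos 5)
Your proof is correct. The paper states this observation without proof (crediting Chakrabarty--Swamy for the proxy-cost machinery), and your argument---Abel summation to write $\ord{\bm{w}}{\bm{x}}=\sum_{i=1}^n(w_i-w_{i+1})\topl{i}{\bm{x}}$ with the convention $w_{n+1}=0$, non-negativity of the coefficients from monotonicity of $\bm{w}$ together with $\bm{w}\in\nnrvec$, and then the termwise bound $\proxy{t_i}{\bm{x}}{i}\ge\topl{i}{\bm{x}}$---is exactly the standard argument the paper implicitly relies on.
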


\begin{observation}
    For all vectors $\bm{w}=(w_1,\dots,w_n)\in \nnrvec,\bm{x}\in \nnrvec$, it holds that
    \begin{align*}
        \proxyz{\bm{x}}{\bm{w}}{\bm{x}^{\downarrow}} = \ord{\bm{w}}{\bm{x}}
    \end{align*}
\end{observation}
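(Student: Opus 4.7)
The plan is to unfold the definition of $\proxyz{\bm{x}}{\bm{w}}{\cdot}$ at the threshold vector $\bm{t} = \bm{x}^\downarrow$, apply the preceding observation to each proxy term, and then verify via a telescoping argument that the result agrees with $\ord{\bm{w}}{\bm{x}}$. Concretely, first I would substitute $t_i = \bm{x}^\downarrow[i]$ into the definition to obtain
\begin{align*}
\proxyz{\bm{x}}{\bm{w}}{\bm{x}^\downarrow} \;=\; \sum_{i=1}^n (w_i - w_{i+1})\, \proxy{\bm{x}^\downarrow[i]}{\bm{x}}{i}.
\end{align*}
The previous observation says that the $\topl{i}{\cdot}$-proxy is tight at threshold $\bm{x}^\downarrow[i]$, so each term $\proxy{\bm{x}^\downarrow[i]}{\bm{x}}{i}$ equals $\topl{i}{\bm{x}} = \sum_{j=1}^i \bm{x}^\downarrow[j]$.

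Plugging this in and swapping the order of summation, I get
\begin{align*}
\sum_{i=1}^n (w_i - w_{i+1}) \sum_{j=1}^i \bm{x}^\downarrow[j]
\;=\; \sum_{j=1}^n \bm{x}^\downarrow[j] \sum_{i=j}^n (w_i - w_{i+1}).
\end{align*}
The inner sum telescopes to $w_j - w_{n+1}$. Using the convention $w_{n+1} = 0$ (which is implicit in the definition of $\proxyz{}{}{}$, since $w_{n+1}$ appears only there), this simplifies to $\sum_{j=1}^n w_j\, \bm{x}^\downarrow[j] = \ord{\bm{w}}{\bm{x}}$, which is exactly the identity to be shown.

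There is no real obstacle here: the statement is a routine bookkeeping consequence of the tightness observation for $\topl{i}{\cdot}$ combined with a telescoping sum. The only point that deserves a moment of care is the boundary convention $w_{n+1} = 0$, which must be made explicit so that the telescoping collapses correctly at the top of the range.
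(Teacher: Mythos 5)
Your proof is correct. The paper states this observation without proof (attributing the proxy-cost machinery to Chakrabarty and Swamy), and your derivation---substituting the threshold $\bm{t}=\bm{x}^\downarrow$, invoking the preceding tightness observation $\proxy{\bm{x}^\downarrow[i]}{\bm{x}}{i}=\topl{i}{\bm{x}}$, and telescoping with the convention $w_{n+1}=0$---is exactly the intended routine argument.
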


\section{\texorpdfstring{\NCC{\textsf{Top}}{\LP{1}}  and \Ballk{}}{\NCCH{\textsf{Top}}{\textnormal{l one}}  and Ball k-Median}} \label{sec:topl}
In this section, we design a constant factor approximation for \NCC{\textsf{Top}}{ \LP{1} }.

\apxtoplone*
The first step of our solution is to reduce \NCC{\textsf{Top}}{ \LP{1} } to a new problem called \Ballk{}, in \Cref{subsec:reduceballk}.
Then in \Cref{subsec:apxballk} we provide an approximation algorithm for \Ballk{}.

The input of \Ballk{} is a set of facilities and a set of clients in a metric space.
We can open $k$ balls around facilities and pay for their radius (times some parameter $\rho$).
In contrast to \msr{}, we do not need to cover all clients with these balls;
a client can also connect to a ball it is not covered by.
This incurs an additional cost, namely the distance of the client to the border of the ball. 
It is straightforward to verify that \Ballk{} is a generalization of both \msr{} ($\rho\le 1$) and \kmed{} ($\rho\ge n$).

On an intuitive level, we use \Ballk{} because it can be thought of as \NCCS{\textsf{Top}}{\LP{1}} with the difference that instead of \topl{\ell}{\bm{x}} we use \proxy{y}{\bm{x}}{\ell}. 
As argued in \Cref{sec:proxy} the proxy costs are easier to handle for approximation algorithms.
Usually, the threshold $y$ is guessed. 
In our case we have $k$ different thresholds, one for each facility. 
Thus, we cannot guess them efficiently.
We circumvent this problem by not having the threshold in the input, but in the solution to an instance.

\subsection{\texorpdfstring{Reduction to \Ballk{}}{Reduction to Ball k-Median}}\label{subsec:reduceballk}

In this section we define the \Ballk{} problem formally and reduce \NCCS{\textsf{Top}}{\LP{1}} to it. 
Note that \Cref{lem:redball} shows that the reduction also works in the other direction.
We start by introducing the new problem.

\begin{Definition}[\Ballk{}]\label{def:ball-k-median}
    The input $\I = (P,F,\delta,k,\rho)$ consists of the point set $P$, the set~$F$ of facilities, a metric~$\delta\colon (P\cup F) \times (P\cup F)\rightarrow \nnr$, a number~$k\in \mathbb{N}$ and the scaling factor~$\rho\in\mathbb{N}$.
    A solution~$\X = (X,r)$ contains a subset~$X\subseteq F$ of facilities such that $|X|\le k$ and a radius function~$r\colon X\rightarrow \nnr$. The goal is to find a solution $\X$ that minimizes
    \begin{align*}
        \costz{\X} = \sum_{p\in P} \distr{p}{X}{r} + \rho\sum_{x\in X}r(x),
    \end{align*}
    where $\distr{p}{X}{r} = \min_{x\in X}\distr{p}{x}{r}$ and $\distr{p}{x}{r} = \dist{p}{x}\dotdiv r(x)$.
\end{Definition}

We show that there is an approximation preserving reduction from \NCCS{\textsf{Top}}{\LP{1}} to \Ballk{} and vice versa. 

\begin{lemma}\label{lem:redball}
    Let $\I =(P,F,\delta,k,\topl{\ell}{\cdot},\LP{1})$ an instance of \NCCS{\textnormal{\textsf{Top}}}{\LP{1}}. Then the instance $\I'=(P,F,\delta,k,\rho =\ell)$ of \Ballk{} satisfies the following two properties.
    \begin{enumerate}
        \item For all solutions~$\X=(X,\sigma)$ for $\I$, we can compute a solution~$\X'=(X,r)$ for $\I'$ in polynomial time such that $\costz{\X'}\le \cost{\sigma}{X}$.
        \item For all solutions~$\X'=(X,r)$ for $\I'$, we can compute a solution~$\X=(X,\sigma)$ for $\I$ in polynomial time such that $\cost{\sigma}{X}\le \costz{\X'}$.
    \end{enumerate}
\end{lemma}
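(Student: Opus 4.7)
The plan is to use the proxy cost framework from Section~\ref{sec:proxy} in both directions. Recall that $\topl{\ell}{\bm{x}} \le \proxy{y}{\bm{x}}{\ell}$ for every threshold $y$, with equality when $y=\bm{x}^\downarrow[\ell]$. The threshold parameter $y$ in the proxy cost will play exactly the role of the radius $r(x)$ in the Ball $k$-Median solution, and taking $\rho=\ell$ is what makes the ``$\ell \cdot y$'' term of the proxy cost align with the ``$\rho\cdot r(x)$'' term of the Ball $k$-Median objective.

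For the first direction, given $\X=(X,\sigma)$ for $\inst$, I would define $r(x) := \distv{\sigma}{x}^{\downarrow}[\ell]$ for each $x\in X$. By the equality observation for proxy costs, this yields
\begin{equation*}
  \topl{\ell}{\distv{\sigma}{x}} \;=\; \proxy{r(x)}{\distv{\sigma}{x}}{\ell} \;=\; \ell\cdot r(x) \;+\; \sum_{p\in P}\bigl(\dist{p}{x}\eins[\sigma(p)=x]\dotdiv r(x)\bigr).
\end{equation*}
Summing over $x\in X$, the radius contribution becomes $\ell\sum_{x}r(x) = \rho\sum_{x}r(x)$. For the connection term, only the summand with $\sigma(p)=x$ is nonzero, so it collapses to $\sum_{p\in P}\bigl(\dist{p}{\sigma(p)}\dotdiv r(\sigma(p))\bigr) \ge \sum_{p\in P}\distr{p}{X}{r}$, because the Ball $k$-Median cost uses the \emph{minimum} over all opened balls. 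Combining gives $\costz{\X'}\le \cost{\sigma}{X}$.

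For the second direction, given $\X'=(X,r)$ for $\inst'$, I would define $\sigma(p) := \arg\min_{x\in X}\distr{p}{x}{r}$ (breaking ties arbitrarily). Now using the inequality observation for the proxy cost in the opposite direction,
\begin{equation*}
  \topl{\ell}{\distv{\sigma}{x}} \;\le\; \proxy{r(x)}{\distv{\sigma}{x}}{\ell} \;=\; \ell\cdot r(x) \;+\; \sum_{p:\sigma(p)=x}\bigl(\dist{p}{x}\dotdiv r(x)\bigr).
\end{equation*}
Summing over $x\in X$, the first term contributes $\rho\sum_x r(x)$ and the second contributes $\sum_{p\in P}\distr{p}{\sigma(p)}{r} = \sum_{p\in P}\distr{p}{X}{r}$ by the choice of $\sigma$. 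Hence $\cost{\sigma}{X}\le \costz{\X'}$.

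Both constructions are clearly polynomial-time (compute all pairwise distances, sort per cluster for direction~1, pick an $\arg\min$ per point for direction~2). There is no real obstacle here; the only subtlety is lining up the indices of the proxy observations and making sure that (i) in direction~1 we exploit the $\min$ in Ball $k$-Median to absorb the possibility that a point's nearest ball is not its assigned center, and (ii) in direction~2 we use the proxy-cost upper bound so that $r(x)$ need not equal the true $\ell$-th largest distance in the induced cluster.
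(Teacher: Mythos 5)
Your proof is correct and follows essentially the same route as the paper: in direction~1 you set $r(x)$ to the $\ell$-th largest entry of $\distv{\sigma}{x}$ and invoke the equality $\proxy{\bm{x}^{\downarrow}[\ell]}{\bm{x}}{\ell}=\topl{\ell}{\bm{x}}$, and in direction~2 you set $\sigma(p)=\arg\min_{x\in X}\distr{p}{x}{r}$ and invoke the inequality $\topl{\ell}{\bm{x}}\le\proxy{y}{\bm{x}}{\ell}$, exactly as the paper does. No gaps.
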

\begin{proof}
We show how to transform a solution $\X=(X,\sigma)$ for \NCCS{\textsf{Top}}{\LP{1}} to a solution $\X'=(X,r)$ of Ball $k$-median without increasing the cost. 
For every $x \in X$ set the radius $r(x)$ to the $\ell$-largest distance in the cluster cost vector $\distv{\sigma}{x}$. We denote this distance as $\ldist{\ell}{x}{\sigma}$.

\begin{align*}
    \costz{\X'}& = \sum_{p\in P} \distr{p}{X}{r}+ \sum_{x\in X}r(x)\rho\\
    &\le \sum_{p\in P}  \distr{p}{\sigma(p)}{r}+ \sum_{x\in X}r(x)\rho\\
    &= \sum_{x\in X}\sum_{p\in P: \sigma(p)=x}  \distr{p}{x}{r}+ \sum_{x\in X}r(x)\rho\\
    &= \left(\sum_{x\in X}\sum_{p\in P: \sigma(p)=x}  \dist{x}{p}\dotdiv \ldist{\ell}{x}{\sigma}\right)+ \sum_{x\in X}\ldist{\ell}{x}{\sigma}\ell\\
    &= \sum_{x\in X}\left(\left(\sum_{p\in P: \sigma(p)=x}  \dist{x}{p}\dotdiv \ldist{\ell}{x}{\sigma} \right)+\ldist{\ell}{x}{\sigma}\ell\right)\\
    &=\sum_{x\in X}\proxy{\ldist{\ell}{x}{\sigma}}{\distv{\sigma}{x}}{\ell}\\
    &=\sum_{x\in X}\topl{\ell}{\distv{\sigma}{x}}\\
    &=\cost{\sigma}{X}
\end{align*}

We also show how to transform a solution $\X'=(X,r)$ for Ball $k$-median to a solution $\X=(X,\sigma)$ for \NCCS{\textsf{Top}}{\LP{1}} without increasing the cost. 
For all points $p\in P$ define $\sigma(p) = \arg \min_{x\in X} \distr{p}{x}{r}$.
\begin{align*}
    \cost{\sigma}{X} & =\sum_{x\in X}\topl{\ell}{\distv{\sigma}{x}} \\&\le \sum_{x\in X}\proxy{r(x)}{\distv{\sigma}{x}}{\rho} \\ 
    &= \sum_{x\in X}\left(r(x)\rho+\sum_{p\in P: \sigma(p)=x}  \distr{p}{x}{r} \right)\\
    &= \sum_{x\in X}r(x)\rho + \sum_{p\in P}  \distr{p}{\sigma(p)}{r} \\
    &= \sum_{x\in X}r(x)\rho + \sum_{p\in P}  \distr{p}{X}{r}  \\
    &=\costz{\X'}
\end{align*}

\end{proof}

\subsection{\texorpdfstring{Overview of the Approximation Algorithm for \Ballk{}}{Overview of the Approximation Algorithm for Ball k-Median}}\label{subsec:apxballk}
We present a primal-dual algorithm that needs a combination of ideas from multiple algorithms for \kmed{}(\cite{JainVaz}\cite{ola}) and \msr{}(\cite{MinSumRadii}\cite{ahmadian}). 

In \Cref{subsubsec:guess} we guess a constant number of balls from the optimal solution.
This is a technical necessity to bound a special case in a later step, similar to \cite{MinSumRadii}.

In \Cref{subsubsec:lmp} we formalize the facility location version of \Ballk{} and show an LMP approximation algorithm for it. 
This is an approximation for the facility location version where we are more strict towards the opening costs.
The algorithm combines ideas from \cite{JainVaz} and \cite{MinSumRadii}. 
The dual-ascent phase is an adaption from \cite{JainVaz} and the pruning phase uses the greedy approach from \cite{MinSumRadii}.

In \Cref{subsubsec:binary} we use the standard binary search technique to obtain a good bi-point solution using the LMP approximation.

In \Cref{subsubsec:bipointrounding} we obtain a solution by rounding the bi-point solution. 
We use a generalization of grouping techniques from \cite{ahmadian} and \cite{ola} which allows us to decide which facilities to open by solving a knapsack-LP.

We note that obtaining a bi-point solution from an LMP approximation to the facility location version of the problem, and using it to obtain an approximation for the original problem is a standard framework that has found success in several clustering problems.
Our novelty is in the design of the LMP approximation and in the rounding of the bi-point solution.

In particular, regarding the rounding of the bi-point solution, there exists one ``special'' group after solving the knapsack-LP which was previously handled in two different fashions.
In the context of \msr{} it was sufficient to only open one ball from this group\cite{ahmadian}.
In the context of \kmed{} \cite{ola} opened more facilities from this group than the cardinality constraint would allow. 
However, they show that violating this constraint by an additive constant can be ``fixed''.

In our case, we use a novel approach, namely we open slightly less balls in the special group than indicated by the LP. 
We proceed to show that this at most triples the number of balls that are not opened in the special group.
This insight is sufficient to bound the cost of the solution we create.

\subsection{Guessing the Largest Balls}\label{subsubsec:guess}
The idea of this section originates from \cite{MinSumRadii}. 
As in the final step of our algorithm, namely the bi-point rounding in \Cref{subsubsec:bipointrounding}, we are able to bound the cost of all but one ball, we shall ensure that the size of this particular ball is small.
We achieve that by first guessing the $O(\nicefrac{1}{\eps})$ largest balls of the optimal solution.

In contrast to \cite{MinSumRadii}, we cannot remove these guessed balls (and the clients covered by them) and simply work on the reduced instance.
That is because, by our problem's definition, clients not inside a ball can still connect to it, by paying an amount equal to their distance to the perimeter of it.

Algorithm~\ref{alg:guessopt} guesses the $O(\nicefrac{1}{\eps})$ facilities (and their respective radii) with the largest radii, from the optimal solution {$\X^*_\I=(X^*_\I,r^*_\I)$} of instance $\I$. This ensures that the remaining steps of the algorithm only have to deal with balls of cost at most $O(\eps \OPT_{\I'})$.
We assume without loss of generality that for all $x\in X^*_\I$, we have $r(x)=\dist{p}{x}$ for some $p\in P$ by applying~\Cref{lem:redball} twice.
\begin{algorithm2e}[!ht]
  \SetKwFunction{Guess}{GUESS}

  \setcounter{AlgoLine}{0}
  \SetKwProg{procedure}{Procedure}{}{}
  
   \procedure{\Guess{$\I = (P,F,\delta,k,\rho), \eps$}}{
  $t\gets \ceil{\nicefrac{3}{\eps}}$\;
  \ForEach{$T\in \binom{F}{t}$ and $r\colon T\rightarrow \{\dist{p}{x}\mid p\in P, x\in T\} $}{
    \textbf{output} $(T,r)$\;
    }
  }

\caption{Guessing the largest balls.}
\label{alg:guessopt}

\end{algorithm2e}

We prove that at least one of the pairs that Algorithm~\ref{alg:guessopt} outputs guesses the largest balls of the optimal solution correctly. 

\begin{lemma}\label{lem:guessopt}
    Algorithm~\ref{alg:guessopt} outputs in time $n^{O(\nicefrac{1}{\eps})}$ a list of pairs so that at least one pair, say $(T,r)$, satisfies
    \begin{itemize}
        \item $T\subseteq X^*_{\I}$ and for all $x\in T$ it holds that $r(x)=r^*_\I(x)$,
        \item For all $x\in X^*_\I\setminus T$ it holds that $r^*_\I(x) \le  \min_{x'\in T} r(x')$
        \item $\min_{x'\in T}r(x') \le \eps / (3\rho)\cdot\OPT_\I$. 
    \end{itemize}
\end{lemma}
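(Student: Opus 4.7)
The plan is to exhibit, out of the enumerated pairs, a specific choice $(T, r)$ that satisfies all three listed conditions, and then bound the cardinality of the enumeration for the runtime. For the runtime, observe that $t = \lceil 3/\eps \rceil = O(1/\eps)$, so $\binom{|F|}{t} = n^{O(1/\eps)}$, and for each fixed $T$ the number of candidate radius functions $r\colon T \to \{\delta(p,x) : p \in P,\, x \in T\}$ is at most $(|P|\cdot t)^t = n^{O(1/\eps)}$; multiplying these gives the claimed total bound.

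For existence in the main case $|X^*_\I| \ge t$, I would choose $T$ to be the $t$ facilities of $X^*_\I$ whose radii under $r^*_\I$ are the largest (breaking ties arbitrarily), and take $r$ to be the restriction of $r^*_\I$ to $T$. The WLOG assumption stated just before the algorithm, namely that $r^*_\I(x) = \delta(p,x)$ for some $p \in P$ for every $x \in X^*_\I$ (which is obtained by applying \Cref{lem:redball} twice), guarantees that this $r$ lies in the set of radius functions the algorithm enumerates for this $T$. Conditions (1) and (2) then hold immediately by construction: $T \subseteq X^*_\I$ with $r \equiv r^*_\I$ on $T$, and every $x \in X^*_\I \setminus T$ has radius no larger than any radius in $T$.

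For condition (3), I would use a simple averaging argument. Since the radius contribution to $\OPT_\I$ is $\rho \sum_{x \in X^*_\I} r^*_\I(x) \le \OPT_\I$, and since $T$ alone contributes at least $\rho \cdot t \cdot \min_{x' \in T} r(x')$ to this sum, one obtains
\[
\min_{x' \in T} r(x') \;\le\; \frac{\OPT_\I}{\rho t} \;\le\; \frac{\eps}{3\rho}\cdot \OPT_\I,
\]
using $t \ge 3/\eps$.

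The remaining subtlety, and the only place that requires care, is the corner case $|X^*_\I| < t$. Since $t$ depends only on $\eps$, this case can be absorbed either by allowing the enumeration to also range over $T$ of size at most $t$ (which adds only lower-order factors to the $n^{O(1/\eps)}$ runtime) and taking $T = X^*_\I$, or by noting that when $|X^*_\I| = O(1/\eps)$ the whole instance can be solved optimally in $n^{O(1/\eps)}$ time by brute-force enumeration over $X$ together with the $O(n)$ candidate radii per facility, bypassing the lemma entirely. I do not foresee a serious obstacle: the non-trivial part of the argument is simply matching the algorithm's candidate set of radius values with the WLOG-normalized optimum, and then reading off condition (3) from the averaging bound on the $t$-th largest optimal radius.
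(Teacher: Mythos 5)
Your proposal is correct and takes essentially the same route as the paper: choose $T$ to be the $t$ facilities of $X^*_\I$ with the largest radii, take $r=r^*_\I\vert_T$ (legal by the WLOG normalization of radii to facility–client distances), and bound $\min_{x'\in T}r(x')$ by the same averaging argument $\rho\cdot\min_{x\in T}r(x)\le \rho\sum_{x\in T}r(x)/|T|\le\rho\sum_{x\in X^*_\I}r^*_\I(x)/\lceil 3/\eps\rceil\le(\eps/3)\OPT_\I$. Your flag of the corner case $|X^*_\I|<t$ is a fair observation (the paper's Algorithm~1 enumerates only size-exactly-$t$ subsets and the paper's proof does not address it), and either of your two suggested fixes is adequate; the core argument is otherwise the same.
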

\begin{proof}
    Our algorithm iterates over all $t$-element subsets of $F$ along with all radii functions that map to facility-client distances. Therefore, in one iteration the algorithm outputs the set $T$ of the $t$ facilities in $\X^*_\I$ with the largest radii and the correct corresponding radii function $r$. 
    Since these are the largest radii, it follows for all $x\in X^*_{\I}\setminus T$ that $r^*_\I(x) \le \min_{x'\in T}r(x')$. 
    We use an averaging argument to bound the smallest guessed radius
    \begin{align*} 
        \rho\cdot \min_{x\in T}r(x) &\le \rho \cdot \frac{\sum_{x\in T}r(x)}{|T|}\\
        &\le  \frac{\rho\cdot \sum_{x \in X^*_\I}r^*_\I(x)}{\ceil{\frac{3}{\eps}}}\\
        &\le \frac{\eps}{3}\OPT_\I\,.
    \end{align*}
\end{proof}

\subsection{LMP Approximation Algorithm}\label{subsubsec:lmp}
In this section we introduce the facility location version of \Ballk{} and show a Lagrange multiplier preserving (LMP) approximation algorithm for it.

\subsubsection{\FLBall{}}
The facility location version of~\Ballk{} is its Lagrange relaxation w.r.t.\ the solution size. Specifically, we remove the cardinality constraint $|X|\le k$ but penalize the solution size $|X|$ in the objective function by charging a fixed cost $\lambda$ for every open facility (on top of their radius-dependent cost).
The formal definition is as follows.

\begin{Definition}[\FLBall{}]
    The input $\I = (P,F,\delta,\rho,\lambda)$ consists of the point set $P$, the set~$F$ of facilities, a metric~$\delta\colon (P\cup F) \times (P\cup F)\rightarrow \nnr$, the scaling factor~$\rho\in\mathbb{N}$ and an opening cost $\lambda \ge 0$.
    A solution~$\X = (X,r)$ contains a subset~$X\subseteq F$ of facilities and a radius function~$r\colon X\rightarrow \nnr$. The goal is to find a solution $\X$ that minimizes\
    \begin{align*}
        \costz{\X} +|X|\lambda= \sum_{p\in P} \distr{p}{X}{r} + \sum_{x\in X}(r(x)\rho + \lambda).
    \end{align*}
\end{Definition}

\subsubsection{LP Relaxation for \FLBall{}} 
    
    In Figure~\ref{fig:FLLP}, we introduce our LP relaxation for \FLBall{} where a part $(T,\hat{r})$ of the solution (the guessed part) is fixed.
    Let $R_x$ be the set of allowed radii after the guessing, that is $R_x = \{\dist{x}{p}\mid p \in P \textnormal{ and } \dist{x}{p}\le \min_{x'\in T}\hat{r}(x')\}$ for $x\in F\setminus T$ and $R_x=\{\hat{r}(x)\}$ for $x\in T$.
    
    Variable~$v_{xp}^r$ for $x\in F, p\in P$ and $r\in R_x$ indicates whether client $p$ is connected to facility $x$ with a ball of radius $r$. Variable~$u_x^r$ indicates whether facility $x$ is opened with a ball of radius $r$.
    The first constraint demands that each client is connected to at least one facility.
    The second constraint demands that a client can only be connected to an opened ball.  Note that we extend the notation $\distr{\cdot}{\cdot}{r}$ to the case where $r$ is a number rather than a function.
For a client $p\in P$, a facility $x\in F$ and a radius $r\in \nnr$, let $\distr{p}{x}{r}=\dist{p}{x}\dotdiv r$.
    
    Note that this LP is similar to the standard LP for the Uniform Facility Location Problem if we understand each ball (pair of facility $x\in F$ and radius $r\in R_x$) as an independent facility. 
    There are two main differences.
    Firstly, the cost $\distr{\cdot}{\cdot}{r}$ of connecting a client to a ball is not a metric because we substract the radius from the distance to the facility.
    Secondly, the opening cost of a ball consists not only of the fixed cost $\lambda$ but also a radius-dependent cost $\rho r$. 
\begin{figure}[!ht]
\caption{\,LP for facility-location Ball $k$-median and predetermined $(T,\hat{r})$.} \label{fig:FLLP}
\vskip -1.5ex\rule{\linewidth}{.5pt}
\begin{mini}|s|<b>{}{\sum_{x\in F, p\in P,r\in R_x} \distr{p}{x}{r}v_{xp}^r +  \sum_{x\in F\setminus T,r\in R_x}(\rho r+\lambda)u_x^r}{}{}
    \addConstraint{\sum_{x\in F ,r\in R_x}v_{xp}^r }{\geq 1\qquad}{\forall p \in P}
    \addConstraint{v_{xp}^r}{\le u_x^r}{\forall x \in F, p\in P, r\in R_x}
\end{mini}
\rule{\linewidth}{.5pt}
\end{figure}

\subsubsection{LMP Approximation}
Based on the approach by Jain and Vazirani~\cite{JainVaz} and using ideas from Charikar and Panigrahi~\cite{MinSumRadii} we can design an LMP factor-3 approximation algorithm for \FLBall{} by exploiting properties of the distance measure $\distr{\cdot}{\cdot}{r}$. However, here we prove the following more technical lemma that relates the cost of the facility location instance to the cost of the optimum solution of the underlying \Ballk{} instance. This is required because we need to incorporate the guessed balls in set $T$ into our solution, which is not necessary in the algorithm from~\cite{MinSumRadii} for~\msr{}.

Intuitively, an LMP approximation is an approximation that is stricter towards the opening costs.
Concretely, for \FLBall{} the following lemma formalizes the factor-$3$ LMP approximation.
   \begin{restatable}{lemma}{lemlmpalgo}
\label{lem:lmpalgo}
    
    Let $\I = (P,F,\delta,k,\rho)$ be an instance of \Ballk{}, $\lambda\ge 0$, $\eps > 0$ and $(T,\hat{r})$ be such that the properties of Lemma~\ref{lem:guessopt} are satisfied. 
    Then Algorithm~\ref{alg:lmpflball} efficiently computes a pair $\X=(X,r)$ such that \begin{align*}
        \costz{\X} + 3\lambda|X| \le 3(\OPT_{\I}+\lambda k),
    \end{align*} 
     $T\subseteq X$, $r(x) = r^*_\I(x)$ for all $x\in T$, and $\max_{x\in X\setminus T} r(x)\le \frac{\eps}{\rho} \OPT_{\I}$.
\end{restatable}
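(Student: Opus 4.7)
The plan is to realize Algorithm~\ref{alg:lmpflball} as a primal-dual algorithm on the LP of Figure~\ref{fig:FLLP}, combining Jain--Vazirani's~\cite{JainVaz} framework for facility location with Charikar--Panigrahi's~\cite{MinSumRadii} greedy pruning for \msr{}, adapted to the non-metric cost $\distr{\cdot}{\cdot}{r}$ and to the guessed balls in $T$. Taking the LP dual gives variables $\alpha_p$ and $\beta_{xp}^r$ with feasibility $\alpha_p - \beta_{xp}^r \le \distr{p}{x}{r}$ and $\sum_p \beta_{xp}^r \le \rho r + \lambda$ for $x\notin T$ (and $\beta_{xp}^{\hat r(x)} = 0$ for $x\in T$, since those balls are forced open at zero LP cost). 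Since $\X^*_\I$ is LP-feasible (it contains $T$ with the correct radii by \Cref{lem:guessopt}, and all its non-$T$ radii lie in $R_x$) with objective at most $\OPT_\I - \rho\sum_{x\in T}\hat r(x) + \lambda(k-|T|)$, weak duality upper-bounds $\sum_p \alpha_p$ by the same quantity; this is our dual benchmark.

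\emph{Dual ascent.} All $\alpha_p$ grow simultaneously at unit rate; once $\alpha_p$ reaches $\distr{p}{x}{r}$ for some admissible $(x,r)$, the quantity $\beta_{xp}^r := \alpha_p - \distr{p}{x}{r}$ starts contributing to $(x,r)$. A tuple $(x,r)\notin T$ becomes \emph{tight} once $\sum_p \beta_{xp}^r = \rho r + \lambda$; the tuples $(x,\hat r(x))$ for $x\in T$ are tight from the outset. A client freezes as soon as it pays for any tight tuple. \emph{Pruning.} Initialize $X = T$ with radii $\hat r$. Following~\cite{MinSumRadii,ahmadian}, iterate over the remaining tight tuples in decreasing order of radius (ties broken by earliest tightness time), adding $(x,r)$ to $X$ whenever its paying-client set is disjoint from that of every previously selected tuple. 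Direct and indirect clients are defined as in Jain--Vazirani: a direct client is assigned to the selected tuple it pays for, while an indirect client is assigned to the selected $(y,r')$ whose conflict with its witness $(x,r)$ caused the discard.

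\emph{Cost bound.} For each direct $p$ assigned to a selected $(x,r)\notin T$ we have $\alpha_p = \distr{p}{x}{r} + \beta_{xp}^r$; summing over the direct clients of $(x,r)$ and exploiting tightness yields the standard LMP identity $\sum_{\text{direct to }(x,r)}\distr{p}{x}{r} + 3(\rho r + \lambda) \le 3\sum_{\text{direct to }(x,r)}\alpha_p$, while a direct client $p$ of a $T$-ball $(x,\hat r(x))$ satisfies $\distr{p}{x}{\hat r(x)} \le \alpha_p$. For an indirect $p$ whose witness $(x,r)$ conflicts with the selected $(y,r')$ via a shared payer $p'$, chaining the metric triangle inequality $\delta(p,y)\le \delta(p,x)+\delta(x,p')+\delta(p',y)$ with the tightness bounds $\delta(\cdot,\cdot)\le\alpha + r$ and using the pruning order (with $r'\ge r$ and $\alpha_{p'}\le\alpha_p$ from the tight-time tiebreak) yields $\distr{p}{y}{r'} \le 3\alpha_p$. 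Summing over all clients gives $\sum_p \distr{p}{X}{r} + 3\rho\sum_{x\in X\setminus T} r(x) + 3\lambda|X\setminus T| \le 3\sum_p\alpha_p$; adding $\rho\sum_{x\in T}\hat r(x) + 3\lambda|T|$ to both sides and invoking the dual benchmark, the $\rho\sum_T \hat r(x)$ terms absorb to give $\costz{\X} + 3\lambda|X| \le 3\OPT_\I - 2\rho\sum_{x\in T}\hat r(x) + 3\lambda k \le 3(\OPT_\I + \lambda k)$. The bound $\max_{x\in X\setminus T} r(x) \le (\eps/\rho)\OPT_\I$ is immediate from restricting to $r\in R_x$ and \Cref{lem:guessopt}.

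The main obstacle will be the indirect-client inequality: $\distr{\cdot}{\cdot}{r}$ is not a metric, so the natural triangle-chaining leaves an extra $O(r)$ slack that must be absorbed via the combined ordering (decreasing $r$, tight-time tiebreak). This is precisely where Charikar--Panigrahi's radius-decreasing greedy must be blended with Jain--Vazirani's 3-hop argument; the remaining cost bookkeeping then follows the standard LMP paradigm once the guessed set $T$ is separated from the dual certificate.
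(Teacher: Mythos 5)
Your proposal follows essentially the same approach as the paper's proof: the same dual-ascent phase, the same greedy radius-decreasing pruning with radius tripling, the same split into contributing (your \emph{direct}) and non-contributing (your \emph{indirect}) clients giving the exact LMP accounting identity for the former and a triangle-chaining bound for the latter, and the same closing step of bounding $\sum_p\alpha_p$ by the LP value of $\X^*_\I$ and re-adding the guessed-ball terms $\rho\sum_{x\in T}\hat r(x)+3\lambda|T|$.

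Two small items to clean up before this would compile into a watertight proof. The indirect-client bound must be stated against the \emph{tripled} radius: the step $(\delta(p,x)+\delta(x,p')+\delta(p',y))\dotdiv 3r'(y)\le(\delta(p,x)\dotdiv r'(y))+(\delta(x,p')\dotdiv r'(y))+(\delta(p',y)\dotdiv r'(y))$ needs the factor $3$ in the subtrahend to distribute over three summands, so what follows is $\distr{p}{y}{3r'(y)}\le 3\alpha_p$, not $\distr{p}{y}{r'}\le 3\alpha_p$ — this is precisely why the algorithm triples the radii. Consequently your aggregate bound conflates tight and tripled radii; what is actually provable is $\sum_p\distr{p}{X}{3r'}+\rho\sum_{x\in X\setminus T}3r'(x)+3\lambda|X\setminus T|\le 3\sum_p\alpha_p$, which (with $3r'$ equal to the final $r$ on $X\setminus T$) closes the computation exactly as you then outline. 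Finally, $\alpha_{p'}\le\alpha_p$ is not a pruning tiebreak: it follows from dual-ascent timing — $\alpha_p$ stopped no earlier than the time $t$ at which the witness became tight, and $p'$ (having contributed to the witness) stopped at or before $t$.
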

   In the remainder of this section, we prove this lemma.

Our LMP approximation algorithm follows the primal-dual framework; see Figure~\ref{fig:dual} for the dual of our LP relaxation.
\begin{figure}[!ht]
\caption{\,Dual-LP for Figure~\ref{fig:FLLP}.} \label{fig:dual}
\vskip -1.5ex\rule{\linewidth}{.5pt}
\begin{maxi}|s|<b>{}{\sum_{p\in P} \alpha_p}{}{}
    \addConstraint{\alpha_p-\beta_{xp}^r}{\le \distr{p}{x}{r}\qquad}{\forall x \in F, p\in P,r\in R_x}
    \addConstraint{\sum_{p\in P}\beta_{xp}^r}{\leq \lambda +\rho r}{\forall x \in F\setminus T, r\in R_x}
    \addConstraint{\sum_{p\in P}\beta_{xp}^r}{\leq 0}{\forall x \in T,x\in R_x}
\end{maxi}
\rule{\linewidth}{.5pt}
\end{figure}

On a high level, the algorithm consists of two phases.
The first phase is the dual-ascent phase(Lines~\ref{al:firstline} to~\ref{al:endgrowing}). 
Here, the algorithm increases the dual variables until constraints(cf.\ Lines~\ref{al:feasiblea}, \ref{al:feasibleb} and \ref{al:feasiblec} in Algorithm~\ref{alg:lmpflball}) in the Dual-LP in Figure~\ref{fig:dual} get tight. 
From these tight constraints the algorithm produces a set of candidate balls.
In the pruning phase(Lines~\ref{al:beginopening} to~\ref{al:endopening}) the algorithm greedily selects an ``independent'' subset of the candidate balls as the solution.

The algorithm is described more detailed in Algorithm~\ref{alg:lmpflball}.
Additionally to the integral primal solution, it produces a (fractional) dual solution.
The costs of the primal solution can be bounded using a constant multiple of the dual solution. 
Thus, we can guarantee the quality of the primal solution.

Note that Lines~\ref{al:dualinita} and~\ref{al:dualinitb} initialize the dual variables to zero which is a valid solution. 
Furthermore, Lines~\ref{al:feasiblea},~\ref{al:feasibleb} and~\ref{al:feasiblec} maintain the validity of the dual solution.
\begin{observation}
    Upon termination of Algorithm~\ref{alg:lmpflball}, the variables $\alpha_p$ for $p\in P$ and $\beta_{xp}^r $ for ${p\in P,x\in F,r\in R_x}$ form a valid solution to the dual LP in \Cref{fig:dual}.
\end{observation}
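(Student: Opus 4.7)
The plan is to establish dual feasibility by an invariant-maintenance argument over the execution of Algorithm~\ref{alg:lmpflball}. The invariant is that at every point during the run of the algorithm, the current values of $(\alpha, \beta)$ satisfy all three families of dual constraints from Figure~\ref{fig:dual}. The proof is then a straightforward induction on the number of elementary steps taken by the algorithm: the base case is the initialization, and the inductive step checks that each elementary modification preserves feasibility.

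For the base case I would note that Lines~\ref{al:dualinita} and~\ref{al:dualinitb} set every $\alpha_p$ and every $\beta_{xp}^r$ to $0$, so each dual constraint reads $0 \le \text{(non-negative RHS)}$: concretely $0 \le \delta^r(p,x) = \delta(p,x) \dotdiv r$ for the connection constraints, $0 \le \lambda + \rho r$ for the opening constraints at ungessed facilities $x \in F\setminus T$, and $0 \le 0$ for the guessed facilities $x \in T$. All three families therefore hold initially.

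For the inductive step, the only operations that change dual variables are the continuous raises of some $\alpha_p$'s and the synchronized raises of the corresponding $\beta_{xp}^r$'s once the associated connection constraint becomes tight. I would argue family by family. The connection constraint $\alpha_p - \beta_{xp}^r \le \delta^r(p,x)$ is protected by Line~\ref{al:feasiblea}: whenever raising $\alpha_p$ would strictly exceed the right-hand side for some $(x,r)$, the algorithm either freezes $\alpha_p$ or raises $\beta_{xp}^r$ at the same rate, so equality is reached but never broken. The opening constraints $\sum_{p}\beta_{xp}^r \le \lambda + \rho r$ for $x \in F \setminus T$ and $\sum_{p}\beta_{xp}^r \le 0$ for $x \in T$ are similarly guarded by Lines~\ref{al:feasibleb} and~\ref{al:feasiblec}, which halt increases of the relevant $\beta_{xp}^r$ once the sum attains its cap. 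Hence every elementary update preserves the invariant.

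Since feasibility holds at initialization and is preserved throughout, the claimed dual feasibility at termination follows. The only real obstacle is bookkeeping: one must verify that every place in the pseudocode where a dual variable is raised has the corresponding guard applied, so no sneaky update can push a variable past its cap. There is no genuinely combinatorial difficulty here; it is the standard invariant-preservation statement one expects from a primal--dual facility-location-style algorithm, and once Algorithm~\ref{alg:lmpflball} is written out explicitly, matching each raise to its guarding line is mechanical.
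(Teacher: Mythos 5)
Your proof is correct and matches the paper's own (very brief) justification: the paper simply notes that Lines~\ref{al:dualinita}--\ref{al:dualinitb} initialize the duals to a feasible (all-zero) point and that the guards in Lines~\ref{al:feasiblea}, \ref{al:feasibleb}, and \ref{al:feasiblec} preserve feasibility, which is exactly the base-case-plus-invariant argument you spell out. Your elaboration (checking each of the three constraint families against its guarding line) is a harmless expansion of the same reasoning, so no further comparison is needed.
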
 
The dual-ascent phase of the algorithm (Lines~\ref{al:firstline} to~\ref{al:endgrowing}) is an adaptation of \cite{JainVaz}.
The novelty is that we adapted to the differences of the LP in \Cref{fig:FLLP}, compared to the standard LP for the Facility Location Problem.

The pruning phase of the algorithm (Lines~\ref{al:beginopening} to~\ref{al:endopening}) needs to combine ideas from different algorithms.
We need to use the notation of conflict from \cite{JainVaz}, namely that two facilities are in conflict when a client contributed to both (cf.\ Line~\ref{al:conflict}).
However, in our case it is not sufficient to just pick all the facilities, because of technical reasons related to $\distr{\cdot}{\cdot}{r}$ not being a metric. 
We also need to triple the radii of their balls as in \cite{MinSumRadii} (cf.\ Line~\ref{al:triple}). 

\begin{algorithm2e}[!ht]
  \SetKwFunction{Bal}{\Ballk{}}

  \setcounter{AlgoLine}{0}
  \SetKwProg{procedure}{Procedure}{}{}
  
   \procedure{\Bal{$\I = (P,F,\delta,\rho,\lambda), (T,\hat{r})$}}{
  $Y\gets T$\label{al:firstline}\;
  $r'(x)\gets 0$ for all $x\in F\setminus T$\;
  $\alpha_p \gets 0$ for all $p\in P$\label{al:dualinita}\;
  $\beta_{xp}^r \gets 0$ for all $p\in P,x\in F,r\in P_x$\label{al:dualinitb}\;
  Start increasing all $\alpha_p$ simultaneously at the same rate\;
  \While{$\alpha_p$ is increasing for some $p\in P$}{
    \If{$\alpha_p-\beta_{xp}^r =\distr{p}{x}{r}$ for some $p\in P, x\in F,r\in R_x$\label{al:feasiblea}}{
        Start increasing $\beta_{xp}^r$ at the same rate as the $\alpha_p$\;
    }
    \If{$\sum_{p\in P}\beta_{xp}^r = \rho r+\lambda$ for some $x\in F\setminus T,r\in R_x$\label{al:feasibleb}}{
        $Y\gets Y\cup \{x\}$\label{al:addtoy}\;
        $r'(x)\gets \max\{r'(x),r\}$\label{al:maxra}\;
        Stop increasing $\beta_{xp}^r$\;
        Stop increasing $\alpha_p$ for all $p$ such that $\alpha_p-\beta_{xp}^r = \distr{p}{x}{r}$\;
    }
    \If{$\sum_{p\in P}\beta_{xp}^r = 0$ for some $x\in  T,r\in R_x$\label{al:feasiblec}}{
        $Y\gets Y\cup \{x\}$\;
        $r'(x)\gets \max\{r'(x),r\}$\label{al:maxrb}\;
        Stop increasing $\beta_{xp}^r$\;
        Stop increasing $\alpha_p$ for all $p$ such that $\alpha_p-\beta_{xp}^r = \distr{p}{x}{r}$\label{al:endgrowing}\;
    }
    }
  
  $r(x)\gets \hat{r}(x)$ for all $x\in T$\;
  $X\gets T$\label{al:beginopening}\;
  $Y\gets Y\setminus T$\;
  \While{$Y\ne \emptyset$}{
  Pick $x = \arg\max_{x'\in Y}r'(x')$\;
  $X\gets X\cup \{x\}$\;
  $r(x)\gets 3r'(x)$\label{al:triple}\;
  $Y\gets Y \setminus \left\{x'\in Y\mid \beta_{xp}^{r'(x)},\beta_{x'p}^{r'(x')}>0\text{ for some } p\in P\right\}$\label{al:conflict}\;
  }
  \Return $(X,r)$\label{al:endopening}\;
}
\caption{Approximate \FLBall{}.}
\label{alg:lmpflball}

\end{algorithm2e}

\subsubsection{Analysis}
In the analysis of Algorithm~\ref{alg:lmpmsr} we distinguish two types of clients: Clients that contribute towards opening a facility (ball) and those that do not contribute. We argue that the dual variables of contributing clients can fully pay for their connection costs, the fixed costs of the balls opened, and a third of their radius-dependent costs. For non-contributing clients, we argue that their dual budget can pay at least a third of their connection cost. This distinction between contributing and non-contributing clients is also done in \cite{JainVaz}, but not in \cite{MinSumRadii} because in \msr{} there are only opening costs and no connection costs.

 More formally, the set $P_x = \{ p\in P \mid \beta_{xp}^{r'(x)}>0 \}$ denotes the set of clients that contribute towards opening facility $x\in X$ with radius $r'(x)$. We denote by $P_{X} = \bigcup_{x\in X}P_x$ the set of all contributing clients. 

\begin{lemma}[Contributing Clients] \label{claim:alphaballs}
        Let $\I = (P,F,\delta,k,\rho)$ be an instance of \Ballk{}, $\lambda\ge 0$, $\eps > 0$ and $(T,\hat{r})$ be such that the properties of Lemma~\ref{lem:guessopt} are satisfied.
        Upon termination of Algorithm~\ref{alg:lmpflball} we have that
        \begin{align*}
            \sum_{x\in X\setminus T} \sum_{p\in P_x}\distr{p}{x}{r'} + \rho\sum_{x\in X\setminus T} r'(x) + |X\setminus T|\lambda = \sum_{p\in P_X}\alpha_p.
        \end{align*}\
    \end{lemma}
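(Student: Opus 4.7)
The plan is a standard primal-dual accounting: we decompose each $\alpha_p$ for contributing $p$ into its connection part and its contribution part, and use that the opening constraint is tight for each facility in $X\setminus T$.

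First, I would argue that $P_x=\emptyset$ for every $x\in T$, so that $P_X = \bigsqcup_{x\in X\setminus T} P_x$. This follows because Line~\ref{al:feasiblec} is triggered at time $0$ for every $(x,r)$ with $x\in T$, $r\in R_x$ (all $\beta$'s are $0$, so the sum trivially equals $0$); consequently $\beta_{xp}^{r}$ is frozen at $0$ for all such $x,r,p$ throughout the algorithm, so indeed $P_x=\emptyset$. Next, I would use the pruning loop (Line~\ref{al:conflict}): whenever a facility $x$ is picked, every other candidate $x'$ that shares a contributing client with $x$ is removed from $Y$. Hence the sets $\{P_x:x\in X\setminus T\}$ are pairwise disjoint, which validates the split
\[
\sum_{p\in P_X}\alpha_p \;=\; \sum_{x\in X\setminus T}\sum_{p\in P_x}\alpha_p.
\]

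Next, for every $x\in X\setminus T$, the value $r'(x)$ is the largest radius for which Line~\ref{al:feasibleb} fired. At the instant of that firing we had $\sum_{p\in P}\beta_{xp}^{r'(x)}=\rho\,r'(x)+\lambda$, and immediately afterwards all $\beta_{xp}^{r'(x)}$ are frozen, so this equality persists until termination. Restricting the sum to its support,
\[
\sum_{p\in P_x}\beta_{xp}^{r'(x)} \;=\; \rho\,r'(x)+\lambda \qquad \text{for every } x\in X\setminus T.
\]

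Then for each $p\in P_x$ (with $x\in X\setminus T$), the variable $\beta_{xp}^{r'(x)}$ only started growing (at Line~\ref{al:feasiblea}) when the dual constraint $\alpha_p-\beta_{xp}^{r'(x)}\le \distr{p}{x}{r'(x)}$ became tight, and from that moment $\beta_{xp}^{r'(x)}$ grew in lockstep with $\alpha_p$ until it was frozen; crucially, $\alpha_p$ is also stopped at that same instant by the last bullet of the Line~\ref{al:feasibleb} block. Therefore
\[
\alpha_p \;=\; \distr{p}{x}{r'(x)} + \beta_{xp}^{r'(x)} \qquad \text{for every } p\in P_x.
\]
Summing this identity over $p\in P_x$, substituting the opening-constraint equality above, and then summing over $x\in X\setminus T$ while using the disjointness of the $P_x$'s yields exactly
\[
\sum_{p\in P_X}\alpha_p
\;=\;
\sum_{x\in X\setminus T}\sum_{p\in P_x}\distr{p}{x}{r'}
\;+\;\rho\!\sum_{x\in X\setminus T}\! r'(x)
\;+\;|X\setminus T|\,\lambda,
\]
which is the claim. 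I do not expect a serious obstacle here: the only subtle point is keeping track that $r'(x)$ is precisely the last radius for which the opening constraint became tight (so that the tight equality survives all later freezing events) and that $\alpha_p$ is truly frozen at the same moment that $\beta_{xp}^{r'(x)}$ is, which is why $\alpha_p - \beta_{xp}^{r'(x)} = \distr{p}{x}{r'(x)}$ holds at termination rather than merely as an inequality.
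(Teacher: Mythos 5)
Your proposal is correct and mirrors the paper's proof step for step: establish $P_x=\emptyset$ for $x\in T$, disjointness of $\{P_x\}_{x\in X\setminus T}$ via the pruning step, tightness of the opening constraint $\sum_{p\in P_x}\beta_{xp}^{r'(x)}=\rho r'(x)+\lambda$, and the per-client identity $\alpha_p=\distr{p}{x}{r'}+\beta_{xp}^{r'(x)}$, then sum. You simply spell out the persistence-after-freezing arguments in a bit more detail than the paper does.
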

    \begin{proof}
        By the design of the Algorithm~\ref{alg:lmpflball}, we know that for a facility $x\notin T$ that has been added to $Y$ in Line~\ref{al:addtoy} it holds that
        \begin{align*}
            \sum_{p\in P_x}\beta_{xp}^{r'(x)} = \lambda+\rho r'(x).
        \end{align*}
        Additionally, we know for a client $p\in P_x$ that 
        \begin{align*}
            \alpha_p-\beta_{xp}^{r'(x)} = \distr{p}{x}{r'}.
        \end{align*}
        
        By the choice of $X$ we know that for any two facilities $x_1,x_2\in X$ there is no client contributing to opening both, that is there is no client $p\in P$ such that $\beta_{x_1p}^{r'(x_1)}>0$ and $\beta_{x_2p}^{r'(x_2)}>0$. This implies that $P_{x_1}$ and $P_{x_2}$ are disjoint. Additionally, note that $P_{\overline{x}}=\emptyset$ for all $\overline{x}\in T$ because by Line \ref{al:feasiblec} all the corresponding dual variables satisfy $\beta_{\overline{x}\overline{p}}^{\overline{r}}=0$ for all $\overline{p} \in P$ and $\overline{r} \in R_{\overline{x}}$.
        Thus, we get.
        \begin{align*}
            \sum_{p \in P_X} \alpha_p &=\sum_{x\in X} \sum_{p\in P_x}\alpha_p\\
            &= \sum_{x\in X\setminus T}\left (  \sum_{p\in P_x}(\alpha_p-\beta_{xp}^{r'(x)})+\sum_{p\in P_x}\beta_{xp}^{r'(x)} \right)\\
            &= \sum_{x\in X\setminus T} \left(\sum_{p\in P_x}\distr{p}{x}{r'} + \rho r'(x) + \lambda \right)\\
            &=\sum_{x\in X\setminus T}\ \sum_{p\in P_x}\distr{p}{x}{r'}+\sum_{x\in X\setminus T}(\rho r'(x) + \lambda) 
        \end{align*}
    \end{proof}

    We upper bound the connection cost of clients $p\in P\setminus P_X$ that do not contribute to opening a facility by thrice the value of their dual variable $\alpha_p$.
    \begin{lemma}[Non-contributing Clients] \label{claim:alphadirect}
        Let $\I = (P,F,\delta,k,\rho)$ be an instance of Ball $k$-median, $\lambda\ge 0$, $\eps > 0$ and $(T,r')$ be such that the properties of Lemma~\ref{lem:guessopt} are satisfied.
        Upon termination of Algorithm~\ref{alg:lmpflball} we have that for all clients $p\in P \setminus P_X$ there is a facility $x\in X$ such that
        \begin{align*}
            \distr{p}{x}{r} \le 3\alpha_p.
        \end{align*}
    \end{lemma}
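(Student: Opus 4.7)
The plan is to adapt the classical Jain--Vazirani witness argument~\cite{JainVaz} to our non-metric connection cost $\distr{\cdot}{\cdot}{r}$ and to the factor-$3$ radius enlargement in Line~\ref{al:triple} that was inherited from Charikar--Panigrahi~\cite{MinSumRadii}. For each $p\in P\setminus P_X$, I would define its \emph{witness} $x'_p\in Y$ as the facility whose addition to $Y$ at some radius $r^*$ froze $\alpha_p$ (such a facility and radius exist, since the algorithm eventually freezes all $\alpha_p$). At that instant Line~\ref{al:feasiblea} forces $\alpha_p-\beta_{x'_p p}^{r^*}=\distr{p}{x'_p}{r^*}$, and the identity $\dist{p}{x'_p}\le \distr{p}{x'_p}{r^*}+r^*$ yields $\dist{p}{x'_p}\le \alpha_p+r^*$.

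If $x'_p\in X$, I would take $x:=x'_p$: the output radius satisfies $r(x'_p)\ge r^*$ (it equals $\hat{r}(x'_p)=r^*$ when $x'_p\in T$ and $3r'(x'_p)\ge r^*$ otherwise), so $\distr{p}{x'_p}{r}\le \distr{p}{x'_p}{r^*}\le \alpha_p\le 3\alpha_p$.

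The more involved case is $x'_p\notin X$, in which $x'_p$ was removed in Line~\ref{al:conflict} by some $x\in X\setminus T$ selected earlier by the greedy, so $r'(x)\ge r'(x'_p)$, together with a witness client $q$ satisfying $\beta_{xq}^{r'(x)}>0$ and $\beta_{x'_p q}^{r'(x'_p)}>0$. The corresponding tight dual constraints give $\dist{q}{x}\le \alpha_q+r'(x)$ and $\dist{q}{x'_p}\le \alpha_q+r'(x'_p)$. Invoking the (genuine) metric triangle inequality on $\delta$ and using $r(x)=3r'(x)$ leads to
\begin{align*}
\distr{p}{x}{r}\le \dist{p}{x}-3r'(x)\le \alpha_p+2\alpha_q+2\bigl(r'(x'_p)-r'(x)\bigr)\le \alpha_p+2\alpha_q,
\end{align*}
where the degenerate case $\dist{p}{x}<3r'(x)$ is trivial as then $\distr{p}{x}{r}=0$.

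The final and, I expect, most delicate step is showing $\alpha_q\le \alpha_p$, which closes the bound to $3\alpha_p$. Because $\beta_{x'_p q}^{r'(x'_p)}>0$, this dual variable grew for a positive interval, so $\alpha_q$ was still increasing during it; the algorithm's stopping rules then force $\beta_{x'_p q}^{r'(x'_p)}$, and with it $\alpha_q$ (via Line~\ref{al:feasibleb}), to halt no later than the instant at which the virtual facility $(x'_p,r'(x'_p))$ is added to $Y$. The subtlety absent from the classical Jain--Vazirani setting is that our algorithm may add $x'_p$ to $Y$ at several distinct radii, so the witness radius $r^*$ may differ from $r'(x'_p)$; care is needed to compare the freezing time of $\alpha_q$ with $\alpha_p$ across these multiple additions. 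Once this time ordering is in place, combining it with the triangle-inequality bound above yields $\distr{p}{x}{r}\le 3\alpha_p$ as required.
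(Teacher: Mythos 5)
Your proposal reproduces the paper's structure faithfully: the witness facility that froze $\alpha_p$, the easy case where the witness survives into $X$, and the conflict case where you pass to $x\in X\setminus T$ selected earlier by the greedy (so $r'(x)\ge r'(x'_p)$) and to the mediating client $q$. Your derivation of the chain $\distr{p}{x}{r}\le\alpha_p+2\alpha_q$ is done a bit differently — you expand the tight dual constraints into bounds of the form $\dist{\cdot}{\cdot}\le\alpha+r$, apply the metric triangle inequality, and then subtract $3r'(x)$, while the paper manipulates $\dotdiv$ directly — but both routes are valid and give the same inequality.

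The genuine gap is the final step $\alpha_q\le\alpha_p$, which you explicitly leave open. You correctly flag the subtlety — that $x'_p$ may become tight at several distinct radii, and the freezing radius $r^*$ for $p$ need not equal the radius $r'(x'_p)$ at which $q$ contributed — but then write ``care is needed to compare the freezing time of $\alpha_q$ with $\alpha_p$'' and ``once this time ordering is in place, \dots'', which is a placeholder, not an argument. Without $\alpha_q\le\alpha_p$ the bound $\distr{p}{x}{r}\le\alpha_p+2\alpha_q$ does not yield $3\alpha_p$, so the lemma is not proven. The paper closes exactly this step with a timing argument: $q$'s contribution to $x'_p$ (i.e.\ $\beta_{x'_p q}^{r'(x'_p)}>0$) forces $\alpha_q$ to stop growing no later than the opening time $t_{x'_p}$ of $x'_p$, and since an opening event of $x'_p$ is what froze $\alpha_p$, one has $t_{x'_p}\le\alpha_p$, hence $\alpha_q\le t_{x'_p}\le\alpha_p$. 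You would need to make precise which ``opening time'' is meant and argue the comparison across the multiple tight-radius events of $x'_p$ to complete the proof.
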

    \begin{proof}
        Let $p$ be an arbitrary element in $P\setminus P_X$ and $x'$ be the facility and $\Tilde{r}$ be the radius that caused $\alpha_p$ to stop increasing. If $x' \in X$, then we know
        \begin{align*}
               \distr{p}{x'}{r} \le\distr{p}{x'}{r'} \le\distr{p}{x'}{\Tilde{r}}=  \alpha_p-\beta_{x'p}^{\Tilde{r}} \le \alpha_p \le 3\alpha_p
        \end{align*}
        because by Lines~\ref{al:maxra} and~\ref{al:maxrb} it holds that $\Tilde{r}\le r'(x')$.
        Thus, we can assume from now on that $x' \notin X$. 

        Let $x \in X$ and $p'\in P$ such that $p$ contributed to both $x$ and $x'$, that is $\beta_{xp'}^r,\beta_{x'p'}^r>0$, and $r'(x')\le r'(x)$. We know such an $x$ exists by the greedy choice of $X$.
        Note that $x\notin T$ because by Line~\ref{al:feasiblec} for all facilities $\overline{x}\in T$ all the corresponding dual variables satisfy $\beta_{\overline{x}\overline{p}}^{\overline{r}}=0$ for all $\overline{p}\in P$ and $\overline{r}\in R_{\overline{x}}$.
        
        Generally, we observe.
        \begin{align*}
             \distr{p}{x}{r} = &\distr{p}{x}{3r'}
             \\=&\dist{x}{p}\dotdiv 3 r'(x) \\
             \le& (\dist{x}{p'}+\dist{x'}{p'}+\dist{x'}{p})\dotdiv 3r'(x) \\
             \le &(\dist{x}{p'}\dotdiv r'(x))+(\dist{x'}{p'}\dotdiv r'(x))+(\dist{x'}{p}\dotdiv r'(x))\\
             \le &(\dist{x}{p'}\dotdiv r'(x))+(\dist{x'}{p'}\dotdiv r'(x'))+(\dist{x'}{p}\dotdiv r'(x'))\\
             =&\distr{p'}{x}{r'}+\distr{p'}{x'}{r'}+\distr{p}{x'}{r'}\\
             \le &\alpha_p + 2\alpha_{p'}
        \end{align*}
        Since $p'$ is responsible for the edge between $x'$ and $x$, we know $\beta_{x'p'} > 0$ and $\beta_{xp'}>0$. The facility $x'$ was responsible for $\alpha_p$ to stop increasing. Thus, we observe $t_{x'} \le \alpha_p$ where $t_{x'}$ is the point in time where $x'$ was opened. Because $p'$ contributed to opening $x'$ ($\beta_{x'p'} > 0$), we know that $\alpha_{p'}$ did not increase after $x'$ was opened ($\alpha_{p'} \le t_{x'}$). Hence, we observe $\alpha_p\ge \alpha_{p'}$. This concludes the proof of the claim.
    \end{proof}

Since we bounded all the cost by (multiples of) dual variables, we can prove the approximation guarantee of Algorithm~\ref{alg:lmpflball}.
This concludes the main lemma of this section.

\lemlmpalgo*
\begin{proof}
We start by showing that all radii in $X\setminus T$ are small, that is $r(x)\le \nicefrac{\eps}{\rho}\cdot\OPT_{\I}$ for all $x\in X\setminus T$. 
    By the definition of $X$ and the design of the LP relaxation in~\Cref{fig:FLLP}, we know for all $x\in X\setminus T$, that $\rho  r'(x)\le \rho\min_{x\in T}\hat{r}(x)\le \nicefrac{\eps}{3}\OPT_\I$. 
    Since the radii are tripled in the construction of $r$, we conclude $\rho r(x)\le \eps \OPT_\I$.
    
    In the remaining part of the proof, we focus on bounding the cost of $(X,r)$.

    \begin{align}
        &\costz{X,r} + 3\lambda |X\setminus T|- \sum_{x\in T} r(x)\rho\nonumber\\
        \le &\sum_{p\in P} \distr{p}{X}{r} + \sum_{x\in X\setminus T}r(x)\rho +3\lambda|X|\nonumber\\
        \le& \sum_{x\in X}\sum_{p\in P_x}  \distr{p}{x}{r}  + \sum_{x\in X\setminus T}r(x)\rho +3\lambda|X\setminus T|+\sum_{p\in P\setminus P_X}\distr{p}{X}{r}\label{Line:split}\\
        =& \sum_{x\in X}\sum_{p\in P_x} \distr{p}{x}{r} + \sum_{x\in X\setminus T}3r'(x)\rho +3\lambda|X\setminus T|+\sum_{p\in P\setminus P_X}\distr{p}{X}{r}\nonumber\\
        \le&~3\sum_{p\in P_X}\alpha_p +\sum_{p\in P\setminus P_X}\min_{x\in X} \distr{x}{p}{r}\label{Line:alphasf}\\
        \le&~3\sum_{p\in P} \alpha_p\label{Line:alphas}\\
        \le&~3 \sum_{p\in P} \distr{p}{X^*_\I}{r^*_\I}+ 3\sum_{x\in X^*_\I \setminus T}\left(r^*_\I(x)\rho +\lambda\right)\label{Line:InsertOpt}
    \end{align}
    In (\ref{Line:split}) we split the sum over $P$ into two sums over $P_X$ and $P\setminus P_X$ and for $P_X$ we do not take the closest ball but the ball it contributed to. 
    (\ref{Line:alphasf}) is a consequence of \Cref{claim:alphaballs}  and (\ref{Line:alphas}) is a consequence of \Cref{claim:alphadirect}. 
    (\ref{Line:InsertOpt}) is true because $X^*_\I$ is a valid solution to the LP and the value of a dual solution is always at most the value of any primal solution.
    
    By adding $3\sum_{x\in T}\left(r(x)\rho + \lambda\right)$ to the terms above, this proof is completed.
\end{proof}

\subsection{\texorpdfstring{Binary Search on $\lambda$}{Binary Search on lambda}}\label{subsubsec:binary}
We can use \Cref{lem:lmpalgo} to obtain a bi-point solution; we prove this in \Cref{lem:bipoint}.
Let us note that the proof directly follows from standard techniques.
Therefore, we present the proof in the appendix.

\begin{Definition}
Let $\X_1=(X_1,r_1),\X_2=(X_2,r_2)$ be two solutions with $|X_1| \le k < |X_2|$ (in particular, for $a=(|X_2|-k)/(|X_2|-|X_1|)\ge 0$ and $b =(k-|X_1|)/(|X_2|-|X_1|)\ge 0$
we have $a+b=1$ and $a|X_1| + b|X_2| = k$).

Then we say that $(\X_1,\X_2)$ is a bi-point solution.
\end{Definition}

\begin{lemma} \label{lem:bipoint}
Given an instance $\I=(P,F,\delta,k,\rho)$, $\eps>0$, and a pair $(T,\hat{r})$, that satisfies \Cref{lem:guessopt}, in polynomial time we can obtain a bi-point solution $\X_1=(X_1,r_1),\X_2=(X_2,r_2)$ such that:
\begin{itemize}
    \item $|X_1| \le k < |X_2|$ (in particular, we can compute $a,b \ge 0$ such that $a+b=1$ and $a|X_1| + b|X_2| = k$).
    \item $a \costz{\X_1} + b \costz{\X_2} \le (3+\eps) \OPT_\I$.
    \item $T\subseteq X_1$ and $T \subseteq X_2$.
    \item $r_1(x) = r^*_\I(x) = r_2(x)$ for all $x\in T$.
    \item If $x_1 \in X_1 \setminus T$ and $x_2\in X_2\setminus T$, then $ r_1(x_1) < \nicefrac{\eps}{\rho} \OPT_\I$ and $r_2(x_2) < \nicefrac{\eps}{\rho} \OPT_\I$.
\end{itemize}
\end{lemma}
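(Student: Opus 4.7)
The plan is to invoke the LMP approximation of \Cref{lem:lmpalgo} repeatedly with different values of the opening cost $\lambda$, and binary-search for a value near the threshold at which the number of opened facilities crosses $k$. Concretely, let $\A_\lambda=(X_\lambda,r_\lambda)$ denote the output of Algorithm~\ref{alg:lmpflball} on input $(\I,\lambda,(T,\hat r))$. For $\lambda=0$ all opening costs vanish, and one verifies (by considering a dummy upper bound, or directly by the primal-dual dynamics) that $|X_0|\ge k+1$; while for $\lambda\ge M$ large enough (e.g., $M$ exceeding the total connection plus radius cost of the trivial solution $(F,\hat r)$), the LMP bound forces $|X_M|=|T|\le k$ since opening any additional facility would violate the LMP inequality. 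Thus $|X_\lambda|$ transitions from above $k$ to at most $k$ as $\lambda$ grows.

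I will then binary-search on $\lambda\in[0,M]$ to produce $\lambda_1\ge\lambda_2\ge 0$ with $\lambda_1-\lambda_2\le\eps\cdot\OPT_\I/(3k)$, such that $|X_1|:=|X_{\lambda_1}|\le k$ and $|X_2|:=|X_{\lambda_2}|>k$. Since the whole range is at most $M=\mathrm{poly}(n)\cdot\OPT_\I$, a precision of $\eps\OPT_\I/(3k)$ requires only $O(\log(nk/\eps))$ iterations of calling Algorithm~\ref{alg:lmpflball}. Set $a=(|X_2|-k)/(|X_2|-|X_1|)$ and $b=(k-|X_1|)/(|X_2|-|X_1|)$, so that $a+b=1$ and $a|X_1|+b|X_2|=k$.

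To bound the bi-point cost, I will combine the two LMP inequalities $\costz{\X_1}+3\lambda_1|X_1|\le 3(\OPT_\I+\lambda_1 k)$ and $\costz{\X_2}+3\lambda_2|X_2|\le 3(\OPT_\I+\lambda_2 k)$ with weights $a$ and $b$. Using the identity $a(k-|X_1|)=b(|X_2|-k)$, the $\lambda$-terms collapse to
\begin{equation*}
a\costz{\X_1}+b\costz{\X_2}\le 3\OPT_\I+3a(k-|X_1|)(\lambda_1-\lambda_2)\le 3\OPT_\I+3k\cdot(\lambda_1-\lambda_2)\le (3+\eps)\OPT_\I,
\end{equation*}
where the last step uses the precision of the binary search. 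This yields the promised bi-point cost bound.

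The remaining properties transfer directly from \Cref{lem:lmpalgo}: both runs of Algorithm~\ref{alg:lmpflball} are initialized with $Y\gets T$ and preserve the fixed radii $\hat r(x)=r^*_\I(x)$ for $x\in T$, which gives $T\subseteq X_1\cap X_2$ together with $r_1(x)=r^*_\I(x)=r_2(x)$ on $T$; and for any newly opened $x\in X_i\setminus T$ the radius is bounded by $\eps/\rho\cdot\OPT_\I$. The main (minor) obstacle is the endpoint analysis for the binary search, in particular verifying that $|X_0|>k$ (so a bi-point is actually needed) --- in the degenerate case where already $|X_0|\le k$, $\A_0$ itself is a single-point solution of cost $\le 3\OPT_\I$, and the lemma is satisfied by taking $\X_1=\X_2=\A_0$ with $a=1,b=0$. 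The rest of the argument is standard Lagrangian rounding and follows the Jain--Vazirani template essentially verbatim.
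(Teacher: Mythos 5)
Your approach is essentially the same as the paper's: binary-search on the Lagrange multiplier $\lambda$ using the LMP guarantee of \Cref{lem:lmpalgo}, and combine the two LMP inequalities with the convex weights $a,b$; the algebra you give (the $\lambda$-terms collapsing to $3a(k-|X_1|)(\lambda_1-\lambda_2)\le 3k(\lambda_1-\lambda_2)$) matches the paper's computation in substance. Two small points where the paper is tighter and your sketch would need patching: (1) your precision target $\eps\OPT_\I/(3k)$ references the unknown quantity $\OPT_\I$; the paper instead terminates at the computable threshold $\eps\delta_{\min}/(3|F|)$, relying on $\delta_{\min}\le\OPT_\I$ and $|X_2|\le|F|$. (2) Your degenerate case (taking $\X_1=\X_2=\mathcal{A}_0$ with $a=1,b=0$) does not actually deliver the promised $|X_1|\le k<|X_2|$; the paper avoids this by padding the $\lambda=0$ solution with zero-radius facilities to force $|X_2|>k$, and by explicitly discarding all non-$T$ facilities at the large-$\lambda$ endpoint (rather than relying on the LMP bound to ``force'' $|X_M|=|T|$ --- the LMP inequality is a guarantee of what the algorithm returns, and only yields $|X_M|\le k$ when $\lambda$ strictly exceeds $\OPT_\I$, not that the output equals $T$). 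Both are cosmetic fixes; the core argument is right.
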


\subsection{Bi-Point Rounding}
\label{subsubsec:bipointrounding}
In this section we show how to design a constant-factor approximation for \Ballk{}.
Following a standard paradigm (see e.g.\ \cite{ahmadian,ola}), our starting point is a bi-point solution.

In \Cref{subsubsec:binary} we prove that for any $\eps>0$ we can obtain a bi-point solution $(\X_1,\X_2)$ for which $a\cdot \costz{\X_1}+b\cdot \costz{\X_2} \le (3 + \eps)\OPT_{\I}$.
This bi-point solution also has the technical property that any balls that are not in the guess $(T,\hat{r})$  have small radius.
We show how to obtain a solution $\X=(X,r)$ such that $|X|\le k$ and $\costz{\X} \le (13.5+7.5\eps)\OPT_\I$.
In particular, in the rest of this section we prove the following lemma:

\begin{restatable}{lemma}{ballkmed} \label{thm:ballkmed}
    Suppose we are given an instance of \Ballk{} $\I =(P,F,\delta,k,\rho)$, $\eps >0$, a pair $(T,\hat{r})$ and a a bi-point solution $(\X_1,\X_2)$ such that $a\cdot \costz{\X_1}+b\cdot \costz{\X_2} \le (3 + \eps)\OPT_{\I}$, $T\subseteq X_1,X_2$, for all $x\in T$ it holds that $\hat{r}(x) = r^*_\I(x)$ and for every $x_1\in X_1\setminus T, x_2\in X_2\setminus T$ we have $r_1(x_1),r_2(x_2) \le \nicefrac{\eps}{\rho} \OPT_\I$.
    We can design a solution $\X=(X,r)$ with $|X|\le k$ and ${\costz{\X}\le  (13.5+7.5\eps)\OPT_\I}$.
\end{restatable}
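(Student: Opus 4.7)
The plan is to adapt the bi-point rounding framework of Li--Svensson and Ahmadian--Swamy to the non-metric distance function $\distr{\cdot}{\cdot}{r}$. I would proceed in four stages, as sketched in the overview in Section~\ref{subsec:apxballk}.

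\textbf{Stage 1 (grouping).} First I would define a grouping that pairs each ball $x_1 \in X_1 \setminus T$ with a set $G(x_1)$ of balls in $X_2 \setminus T$ so that the balls in $G(x_1)$ are in some well-defined sense ``close'' to $x_1$ (for example, by mapping every $x_2 \in X_2\setminus T$ to its nearest $x_1 \in X_1\setminus T$ under $\delta$, possibly breaking this up further following \cite{ola,ahmadian} so that the triangle-inequality-style reassignment arguments used below go through despite $\distr{\cdot}{\cdot}{r}$ not being a metric). The guessed balls in $T$ are kept open throughout and treated separately; since they are identical in $X_1$ and $X_2$, they contribute the same additive cost on both sides and need no rounding.

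\textbf{Stage 2 (knapsack LP).} For each group I would introduce a binary decision: either open the single $X_1$-ball, or open all balls of $G(x_1) \subseteq X_2$. Relaxing to $[0,1]$ yields a two-dimensional knapsack-type LP whose objective is the (bi-point) sum of connection costs plus radius costs $\rho\sum r(x)$ and whose single non-trivial constraint caps the total number of open facilities at $k$. A basic optimal solution to this LP has at most one fractional coordinate, which isolates a unique \emph{special group}. For every other group the LP integrally chooses either the $X_1$-side or the $X_2$-side, and the bi-point cost bound $a\costz{\X_1}+b\costz{\X_2} \le (3+\eps)\OPT_\I$ combined with convexity-of-extreme-points gives an LP value no worse than this bi-point cost.

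\textbf{Stage 3 (rounding the special group).} This is the novel part and the main obstacle. In the special group, say centered at $x_1 \in X_1 \setminus T$ with LP-fractional support $G(x_1) \subseteq X_2\setminus T$, we cannot afford to open the whole group (as in \cite{ola}) because of the strict cardinality constraint, nor just one ball (as in \cite{ahmadian}) because connection costs could explode. Instead I would open the $X_1$-ball $x_1$ together with fewer balls of $G(x_1)$ than the LP prescribes: concretely, I would pick a suitable subset $S \subseteq G(x_1)$ so that $|S|$ is only a constant fraction of the LP's $X_2$-opening, yet the balls of $G(x_1) \setminus S$ whose clients we must reroute number at most three times $|G(x_1)\setminus S|$ — the ``at most triples'' bound mentioned in Section~\ref{subsec:apxballk}. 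Clients of unopened $X_2$-balls in $G(x_1)$ are reassigned to either $x_1$ or to an opened ball of $G(x_1)$; the connection cost increase is controlled by triangle-inequality-like manipulations on $\distr{\cdot}{\cdot}{r}$ akin to those used in Lemma~\ref{claim:alphadirect}, together with the guarantee from Stage 1 that $G(x_1)$-balls are near $x_1$ in $\delta$. Crucially, since $x_1 \notin T$ and no ball in $G(x_1) \setminus T$ was guessed, Lemma~\ref{lem:guessopt} gives $\rho r(x) \le \eps\,\OPT_\I$ for every such ball, so the radius-cost blow-up in the special group contributes only an additive $O(\eps\,\OPT_\I)$.

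\textbf{Stage 4 (cost accounting).} For non-special groups the total rounded cost is bounded by the LP value and hence by $a\costz{\X_1}+b\costz{\X_2} \le (3+\eps)\OPT_\I$; combined with the reassignment overhead (a constant factor, roughly $3$ as in the msr-style triangle-inequality arguments) and the $O(\eps\,\OPT_\I)$ additive term from the special group, I would obtain a final bound of the form $\bigl(3\cdot 3 + 3\cdot\tfrac{3}{2}\bigr)\OPT_\I + O(\eps\,\OPT_\I) = (13.5 + 7.5\eps)\OPT_\I$, matching the claim. The main delicate point throughout is that $\distr{p}{x}{r}$ is not a metric, so every triangle-inequality step must be replaced by the $\dotdiv$-based manipulation used in the proof of Lemma~\ref{claim:alphadirect}; this is what forces the careful choice of grouping in Stage~1 and the novel handling of the special group in Stage~3.
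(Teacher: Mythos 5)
Your high-level plan — group $X_2$-balls onto $X_1$-balls, set up a knapsack LP, round it with one special fractional group handled differently, and control the pseudo-triangle-inequality with the $\dotdiv$ operator — is exactly the skeleton of the paper's proof, and your novel move (open the $X_1$-ball of the special group plus \emph{fewer} $X_2$-balls than the LP prescribes) matches the paper's new idea. But there are concrete gaps that would prevent the argument from closing.

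The most serious one is the missing preliminary case split. The paper first observes that if $a > 1/4$ then $\X_1$ alone is a $4(3+\eps)$-approximation, and if $\costz{\X_1}\le\costz{\X_2}$ then $\X_1$ alone is a $(3+\eps)$-approximation; only after reducing to the regime $a\le 1/4$ and $\costz{\X_1}>\costz{\X_2}$ (hence $\costz{\X_2}\le(3+\eps)\OPT_\I$) does the final arithmetic work. Your Stage 4 just asserts that the rounding overhead is a factor $3$ on the bi-point cost plus a $3\cdot\tfrac{3}{2}$-ish correction, but that $\tfrac{3}{2}$ is in fact $6a\cdot\costz{\X_2}/\OPT_\I$ and requires both $a\le\tfrac14$ and $\costz{\X_2}\le(3+\eps)\OPT_\I$. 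Without that case split, $a$ can be near $1$ and the bound simply does not hold; the final constant $13.5+7.5\eps$ cannot be derived as stated.

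Two further points need tightening. First, you propose grouping each $x_2$ to its nearest $x_1$ under the original metric $\delta$; the paper instead defines $cl_1$ by minimizing the truncated pseudo-distance $\dists{\cdot}{\cdot}$, and the proof of the generalized triangle inequality (Lemma~\ref{lem:pseudoConnectionCost}) explicitly relies on the step $\dists{cl_1(x_2)}{x_2}\le\dists{cl_1(p)}{x_2}$, which is only immediate under the $\delta^*$-minimizing definition. Hedging with ``possibly breaking this up further'' does not resolve this; you would need to re-derive the reassignment bound with your grouping. Second, you exclude $T$ from the grouping and the LP, but the paper keeps $T$-facilities inside the groups: the distinction that matters is only whether the special facility $\tildx$ happens to lie in $T$, handled via $G'_{\tildx}=G_{\tildx}\setminus(T\cap\set{\tildx})$ and a case split in Lemma~\ref{lem:pseudoSavings}. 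Pulling $T$ out a priori changes the LP's feasibility constraint and the savings objective, and is not obviously equivalent; at minimum it would need its own verification. Relatedly, you never specify the exact number of balls opened in the special group (the paper uses $\ceil{u_{\tildx}|G'_{\tildx}|}-2$), which is precisely what makes both the cardinality bound of Lemma~\ref{lem:pseudoCardinality} and the ratio bound $\frac{1-p_{\tildx}}{1-u_{\tildx}}\le 3$ of Claim~\ref{claim:specialfacility} go through.
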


Consider $\I, \eps, \X_1, \X_2, a, b, T ,\hat{r}$ fixed throughout this section.
Furthermore, as stated in \Cref{thm:ballkmed}, for $(X_1,X_2)$ we assume that if $x_1\in X_1\setminus T$ and $x_2 \in X_2\setminus T$, then $r_1(x_1) \le \nicefrac{\eps}{\rho} \OPT_\I$ and $r_2(x_2)\le \nicefrac{\eps}{\rho} \OPT_\I$.

$\X_1$ is itself a $(3+\eps)/a$ approximation.
If $a > \nicefrac{1}{4}$, we thus obtain a $4(3+\eps)$ approximation using at most $k$ facilities.
Similarly, if $\costz{\X_1} \le \costz{\X_2}$ then $\X_1$ is a $(3+\eps)$ approximation because $\costz{\X_1} = (a+b)\costz{\X_1} \le a\costz{\X_1} + b\costz{\X_2} \le (3+\eps)\OPT_\I$.

From this point on we assume $a\le \nicefrac{1}{4}$ and $\costz{\X_1} > \costz{\X_2}$.
Notice that this means $\costz{\X_2} \le a\cdot \costz{\X_1} + b\cdot \costz{\X_2} \le (3+\eps) \OPT_\I$.

From a high level view, we group every facility $x_2$ from $X_2$ to its ``closest'' facility $cl_1(x_2)$ in $\X_1$.
Then for every group we either decide to open all facilities from $\X_2$, or the single facility from $\X_1$.
We use linear programming to decide what to do for each group.
We also use properties of the linear program to analyze the cost of our solution.
For technical reasons we have an exception, one group where we open the single facility from $X_1$ and some facilities from $X_2$.
Here, we need the fact that the radii of balls that are not in the guess $(T,\hat{r})$ are small.

Each client $p$ is assigned to the facility $x$ that minimizes $\dist{x}{p}-r(x)$ (we break ties arbitrarily).
The main challenge here is arguing about clients whose closest facility $x_2$ from $\X_2$ is not open in $\X$.
In these cases, we always have $cl_1(x_2)$ open in $\X$.
In \cite{ola} the authors used triangle inequality to bound the cost to connect the client to $cl_1(x_2)$.
Since distance costs in our case are not metric, we need to increase its radius by an amount related to the radii of the facilities of $\X_2$ in its group and use a generalized version of the triangle inequality.

\subsubsection{Grouping of Balls}
Let us now be more formal:
We expand the domain of $r_1$ to $X_1 \cup P$ by defining $r_1(p)=0$ for all $p\in P$ and the domain of $r_2$ to $X_2\cup P$ by defining $r_2(p)=0$. Given $x_1 \in X_1 \cup P$ and $x_2\in X_2 \cup P$, we define the distance between the balls around $x_1$ and $x_2$ as $\dists{x_1}{x_2} = (\dist{x_1}{x_2}\dotdiv r_1(x_1) )\dotdiv r_2(x_2)$. Here, we think of clients as having a ball of radius zero.

For a point $x\in X_2\cup P$ , let $cl_1(x)$ be the facility $x_1\in X_1$ minimizing $\dists{x_1}{x}$ (we break ties by picking the one minimizing $\dist{x_1}{x}$, and arbitrarily but consistently in case these are still equal).
Similarly $cl_2(x)$ is the $x_2\in X_2$ minimizing $\dists{x}{x_2}$ (breaking ties in the same way).
For a facility $x\in X_1$, let $G_{x}$ denote the set of facilities $x_2$ in $X_2$ such that $cl_1(x_2)=x$.
The sum of radii in $G_{x}$ is $S_{x} = \sum_{x_2 \in G_x} r_2(x_2)$, and the maximum radius in $G_x$ is $M_x = \max_{x_2 \in G_x} r_2(x_2)$.
Finally, let $\Delta(X')$ denote the set of clients $p$ such that $cl_2(p) \in X'$, for any $X'\subseteq X_2$.
Let $r'_1(x_1)= r_1(x_1)+2M_{x_1}$ for all $x_1 \in X_1$. 
Furthermore, let $\distss{x_1}{x}=\left(\dist{x_1}{x}\dotdiv r_1'(x_1)\right)\dotdiv r_2(x)$ for $x_1\in X_1$ and $x\in X_2\cup P$.

The following lemma relates to the above discussion.
It is used to show that for a client $p$, if $cl_2(p)$ is not open in $X$, then there exists a facility $x_1$ from $X_1$ that is open, and the $\distss{\cdot}{\cdot}$ cost is bounded. 
This is a generalized version of triangle inequality.

\begin{lemma} \label{lem:pseudoConnectionCost}
Let $p$ be a client, $x_1=cl_1(p), x_2=cl_2(p),x_1'=cl_1(x_2)$. 
Then
$\distss{x_1'}{p}\le 2\dists{p}{x_2} +\dists{x_1}{p}$.
\end{lemma}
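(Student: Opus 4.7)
The plan is to exploit the defining property of $x_1'=cl_1(x_2)$ together with the fact that $x_2\in G_{x_1'}$ boosts the ``radius budget'' $r_1'(x_1')$ enough to absorb a detour through $x_2$. Since $p\in P$ we have $r_2(p)=0$, so $\distss{x_1'}{p}=\dist{x_1'}{p}\dotdiv r_1'(x_1')$. The first step is to observe that because $cl_1(x_2)=x_1'$, the facility $x_2$ lies in $G_{x_1'}$, hence $M_{x_1'}\ge r_2(x_2)$ and therefore $r_1'(x_1')=r_1(x_1')+2M_{x_1'}\ge r_1(x_1')+2r_2(x_2)$. This yields
\begin{align*}
\distss{x_1'}{p}\le \dist{x_1'}{p}\dotdiv\bigl(r_1(x_1')+2r_2(x_2)\bigr).
\end{align*}

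Next I would use the metric triangle inequality $\dist{x_1'}{p}\le \dist{x_1'}{x_2}+\dist{x_2}{p}$, together with the elementary inequality $(a+b)\dotdiv(c+d)\le (a\dotdiv c)+(b\dotdiv d)$ that holds for non-negative reals. Splitting the budget $r_1(x_1')+2r_2(x_2)$ as $(r_1(x_1')+r_2(x_2))+r_2(x_2)$, where the first summand is assigned to the term $\dist{x_1'}{x_2}$ and the second to $\dist{x_2}{p}$, gives
\begin{align*}
\distss{x_1'}{p}\le \bigl(\dist{x_1'}{x_2}\dotdiv(r_1(x_1')+r_2(x_2))\bigr)+\bigl(\dist{x_2}{p}\dotdiv r_2(x_2)\bigr)=\dists{x_1'}{x_2}+\dists{p}{x_2}.
\end{align*}
Here I use that applying $\dotdiv r_1(x_1')$ followed by $\dotdiv r_2(x_2)$ is the same as subtracting their sum, which is what defines $\dists{x_1'}{x_2}$.

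The remaining task is to bound $\dists{x_1'}{x_2}$ by $\dists{x_1}{p}+\dists{p}{x_2}$. By optimality of $cl_1(x_2)=x_1'$, we have $\dists{x_1'}{x_2}\le \dists{x_1}{x_2}$. Then I would apply the same combination of metric triangle inequality and subadditivity of $\dotdiv$ as above, splitting $r_1(x_1)+r_2(x_2)$ along the triangle inequality $\dist{x_1}{x_2}\le \dist{x_1}{p}+\dist{p}{x_2}$, to obtain the ``generalized triangle inequality'' $\dists{x_1}{x_2}\le \dists{x_1}{p}+\dists{p}{x_2}$. Combining the two chains yields the claimed bound $\distss{x_1'}{p}\le 2\dists{p}{x_2}+\dists{x_1}{p}$.

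The only conceptually delicate point is making sure the budget $2M_{x_1'}$ provided by the inflated radius $r_1'$ is exactly what is needed to cover both the ``loss'' of radius suffered when we move from $\dists{\cdot}{\cdot}$ to $\distss{\cdot}{\cdot}$ and the extra $r_2(x_2)$ that appears when invoking the generalized triangle inequality through $x_2$; everything else is a mechanical application of the subadditivity property of $\dotdiv$ and of the optimality of $cl_1(x_2)$.
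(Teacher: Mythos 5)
Your proof is correct and follows essentially the same chain as the paper's: you bound $M_{x_1'}\ge r_2(x_2)$, use the triangle inequality together with the subadditivity $(a+b)\dotdiv(c+d)\le(a\dotdiv c)+(b\dotdiv d)$ to reach $\dists{x_1'}{x_2}+\dists{p}{x_2}$, then invoke the optimality of $cl_1(x_2)$ and one more application of the same triangle-plus-subadditivity step. The only cosmetic difference is that you phrase the detour as a generalized triangle inequality for $\dists{\cdot}{\cdot}$, whereas the paper inlines it; the underlying argument is identical.
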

\begin{proof}
Note that $x_2\in G_{x_1'}$ implies $M_{x_1'}\ge r_2(x_2)$ and the definition of $cl_1(\cdot)$ implies $\dists{x_1'}{x_2}\le \dists{x_1}{x_2}$. 

\begin{align*}
&~\distss{x_1'}{p}\\ &= \dist{x_1'}{p} \dotdiv (r_1(x_1')+2M_{x_1'})&\text{(definition of $\delta^{**}$)}\\
&\le \dist{x_1'}{p} \dotdiv (r_1(x_1')+2r_2(x_2)) &\text{($M_{x_1'} \ge r_2(x_2)$)}\\
&\le (\dist{p}{x_2}+ \dist{x_1'}{x_2}) \dotdiv (r_1(x_1')+2r_2(x_2)) &\text{(triangle inequality)}\\
&\le \dist{p}{x_2} \dotdiv r_2(x_2)+ \dist{x_1'}{x_2}\dotdiv (r_2(x_2)+r_1(x_1')) &\text{(property of $\dotdiv$)} \\
&= \dists{p}{x_2} + \dists{x_1'}{x_2} &\text{(definition of $\delta^*$)}\\
&\le \dists{p}{x_2} + \dists{x_1}{x_2} &\text{($\dists{x_1'}{x_2}\le \dists{x_1}{x_2}$)}\\
&= \dists{p}{x_2} + \dist{x_1}{x_2}\dotdiv (r_2(x_2)+r_1(x_1)) &\text{(definition of $\delta^*$)}\\
&\le \dists{p}{x_2} + ( \dist{x_2}{p} + \dist{x_1}{p})\dotdiv (r_2(x_2)+r_1(x_1)) &\text{(triangle inequality)}\\
\end{align*}
\begin{align*}
&\le \dists{p}{x_2} + \dist{x_2}{p} \dotdiv r_2(x_2)+ \dist{x_1}{p}\dotdiv r_1(x_1)&\text{(property of $\dotdiv$)}\\
&= 2\dists{p}{x_2} + \dists{x_1}{p}&\text{(definition of $\delta^*$)}
\end{align*}

\end{proof}

\subsubsection{Constructing the Solution via Linear Programming}\label{subsubsec:constructingvialp}
The idea now is that for $x\in X_1$ we want to either increase the radius by $2M_x$ and open it (this corresponds to opening it with radius $r_1'(x)$) , or open all facilities in $G_{x}$.
Notice that in the second case, if $p\in \Delta(G_{x})$, we pay $\dists{cl_1(p)}{p}+\dists{p}{cl_2(p)}$ less for its connection compared to the upper bound from~\Cref{lem:pseudoConnectionCost};
additionally, as the cost for opening $x$ is $(r_1(x)+2M_x)\rho \le (r_1(x)+2S_x)\rho$, we save $(r_1(x)+S_x)\rho$ (compared to the $(r_1(x)+2S_x)\rho$ upper bound) for the cost of the radii associated with our facilities.
This motivates the LP in \Cref{fig:knapsackLP}, where we (fractionally) open at most $k$ facilities.
When $u_x=0$ we can think of it as opening the facility in $x$, and when $u_x=1$ we open all facilities in $G_x$.

\begin{figure}[!ht]
\caption{\,LP to decide which facilities from $X_1,X_2$ to open.} \label{fig:knapsackLP}
\vskip -1.5ex\rule{\linewidth}{.5pt}
\begin{maxi}|s|<b>{}{\sum_{x\in X_1} u_x \big(r_1(x)\rho + S_x\rho + \sum_{p\in \Delta(G_x)} (\dists{cl_1(p)}{p}+ \dists{p}{cl_2(p)})\big) }{}{}
    \addConstraint{\sum_{x\in X_1} u_x(|G_x|-1)}{\leq k-|X_1|}{}
    \addConstraint{u_x}{\in [0,1]\quad}{\forall x \in X_1}
\end{maxi}
\rule{\linewidth}{.5pt}
\end{figure}

As was noted in \cite{ola}, this is a knapsack LP, meaning that it has an optimal solution where at most one variable is integral.
If all variables are integral, we do not need the following arguments. 
Let $u_{\tildx}$ be the only fractional variable.
We call $\tildx$ from here on the ``special facility''.
We define $G'_{\tildx}= G_{\tildx} \setminus (T\cap \set{\tildx})$ as the group $G_{\tildx}$ where $\tildx$ is removed if it is one of the guessed facilities $T$.
Then we include $\tildx$ in $X$ and set the radius $r(\tildx)= r_1(\tildx)+2\max_{x\in G'_{\tildx}}r_2(x)$.
Note that this radius differs from $r'_1(\tildx)$ in the case that $\tildx\in T$.
This is necessary because $M_{\tildx}\rho$ may not be bounded by $\eps\OPT_\I$, which is necessary to bound the cost of the special facility.  
We include $\ceil{u_{\tildx} |G'_{\tildx}|}-2$ facilities from $G'_{\tildx}$ uniformly at random (which can be derandomized by greedily picking the facilities that maximize the saving).
For all other $x \in X_1 \setminus \set{\tildx}$:
if $u_x=1$ then we include all facilities from $x'\in G_x$ in $X$ and set the radius $r(x')=r_2(x')$.
Else $u_x=0$, in which case we include $x$ in $X$ and set the radius $r(x)=r'_1(x)$.

The solution we create is $\X=(X,r)$.
\subsubsection{Upper Bounding the Number of Balls}
We show that $\X$ is a valid solution by upper bounding the number of opened facilities.
\begin{lemma} \label{lem:pseudoCardinality}
$|X| \le k$.
\end{lemma}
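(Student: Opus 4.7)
The plan is to bound $|X|$ by carefully counting the facilities contributed by each part of the construction and then invoking the LP cardinality constraint. Partition $X_1 \setminus \{\tilde{x}\}$ into $A_0 := \{x : u_x = 0\}$ and $A_1 := \{x : u_x = 1\}$, so that $|A_0| + |A_1| = |X_1| - 1$ (the knapsack-LP optimum has at most one fractional variable, namely $u_{\tilde{x}}$). By the construction, each $x \in A_0$ contributes one facility (namely $x$ itself), each $x \in A_1$ contributes $|G_x|$ facilities (all of $G_x$), and the special group contributes $\tilde{x}$ together with $\lceil u_{\tilde{x}} |G'_{\tilde{x}}| \rceil - 2$ further facilities from $G'_{\tilde{x}}$, for a total of $\lceil u_{\tilde{x}} |G'_{\tilde{x}}| \rceil - 1$.

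Summing these contributions and using the LP constraint
\[
\sum_{x \in A_1}(|G_x| - 1) + u_{\tilde{x}}(|G_{\tilde{x}}| - 1) \le k - |X_1|
\]
to replace $\sum_{x\in A_1}|G_x|$, I obtain
\[
|X| \le (|X_1|-1) + \bigl(k - |X_1| + |A_1| - u_{\tilde{x}}(|G_{\tilde{x}}|-1)\bigr) - |A_1| + \lceil u_{\tilde{x}} |G'_{\tilde{x}}| \rceil - 1
= k - 2 - u_{\tilde{x}}(|G_{\tilde{x}}| - 1) + \lceil u_{\tilde{x}} |G'_{\tilde{x}}| \rceil.
\]
Hence $|X| \le k$ reduces to verifying $\lceil u_{\tilde{x}} |G'_{\tilde{x}}| \rceil \le u_{\tilde{x}}(|G_{\tilde{x}}| - 1) + 2$.

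This final inequality is handled by a short case distinction on whether $\tilde{x} \in T$. If $\tilde{x} \notin T$, then $G'_{\tilde{x}} = G_{\tilde{x}}$ and $\lceil u_{\tilde{x}}|G_{\tilde{x}}|\rceil \le u_{\tilde{x}}|G_{\tilde{x}}| + 1 \le u_{\tilde{x}}(|G_{\tilde{x}}|-1) + 2$ since $u_{\tilde{x}} \le 1$. If $\tilde{x} \in T$ then $|G'_{\tilde{x}}| = |G_{\tilde{x}}| - 1$, and $\lceil u_{\tilde{x}}(|G_{\tilde{x}}|-1)\rceil \le u_{\tilde{x}}(|G_{\tilde{x}}|-1) + 1 \le u_{\tilde{x}}(|G_{\tilde{x}}|-1) + 2$.

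The main subtlety I expect to need care with is the bookkeeping for the special group: when $\tilde{x} \in T$, one has $\tilde{x} \in X_2$ as well (since $T \subseteq X_1 \cap X_2$), and by the tie-breaking rule in the definition of $cl_1$ one gets $\tilde{x} \in G_{\tilde{x}}$, so counting $\tilde{x}$ as an explicitly opened facility \emph{and} as a member of $G_{\tilde{x}}$ would double-count it. Excluding $\tilde{x}$ from $G'_{\tilde{x}}$ in that case is precisely what prevents this, and the ``$-2$'' in $\lceil u_{\tilde{x}}|G'_{\tilde{x}}|\rceil - 2$ is exactly the slack needed to absorb both the $+1$ from the ceiling rounding and the $+1$ from opening $\tilde{x}$ explicitly.
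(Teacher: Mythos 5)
Your proof is correct and takes essentially the same approach as the paper: count the facilities opened by the integral groups and the special group, substitute the LP cardinality constraint, and use $\lceil a \rceil \le a+1$ together with $|G'_{\tilde{x}}| \le |G_{\tilde{x}}|$ and $u_{\tilde{x}}\le 1$ to close the bound. (Your final case split on whether $\tilde{x}\in T$ is not strictly needed, since $|G'_{\tilde{x}}|\le|G_{\tilde{x}}|$ covers both cases at once, but the bookkeeping is correct.)
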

\begin{proof}
When $u_x=1$ we open $|G_x|$ facilities, and when $u_x=0$ we open $1$ facility.
Therefore, when $x\ne \tildx$ we open $(|X_1|-1) + \sum_{x\in X_1\setminus \set{\tildx}} u_x (|G_x|-1)$ facilities in total.
Regarding the group of $\tildx$: we open $\tildx$ and at most $u_{\tildx} |G_{\tildx}|-1$ facilities from $G_{\tildx}$, therefore at most $u_{\tildx} |G_{\tildx}| $ facilities.

By the constraint of the LP we have $u_{\tildx}(|G_{\tildx}|-1) + \sum_{x\in X_1\setminus \set{\tildx}} u_x (|G_x|-1) + |X_1| \le k$.
But the number of facilities we open is at most 
\begin{align*}
&u_{\tildx} |G_{\tildx}| + \sum_{x\in X_1\setminus \set{\tildx}} u_x (|G_x|-1) + (|X_1|-1) \\
=~ &u_{\tildx}(|G_{\tildx}|-1) + \sum_{x\in X_1\setminus \set{\tildx}} u_x (|G_x|-1) + |X_1| + u_{\tildx} - 1\\
\le~ &k + u_{\tildx} - 1 \le k
\end{align*}
\end{proof}

\subsubsection{Analyzing the Cost}
We bound the cost of $\X$ in multiple steps.
At first, we simply bound the number of opened facilities in the special group.
Then, we analyze the linear program and how it connects to the costs of the non-special groups.
Finally, we use these insights to lower bound the cost of $\X$ by a multiple of $\OPT_\I$.

\paragraph{Bounding the new coefficient of the special group}
In the construction of $\X$ we do not open $u_{\tildx}G_{\tildx}-1$ many facilities from the special group because this value may not be an integer.
To not violate the cardinality constraint we open less facilities, namely $\ceil{u_{\tildx} |G'_{\tildx}|}-2$. 
We analyze the fraction of facilities that remain closed in the special group $G'_{\tildx}$.
We upper bound the ratio $(1-p_{\tildx})/(1-u_{\tildx})$ by three, where $p_{\tildx} =( \ceil{u_{\tildx} |G'_{\tildx}|}-2)/|G'_{\tildx}|$ for the facility of the special group $\tildx\in X_1$, that is $u_{\tildx} \notin\set{0,1}$.
\begin{claim}\label{claim:specialfacility}
    \begin{align*}
        \frac{1-p_{\tildx}}{1-u_{\tildx}} \le 3
    \end{align*}
\end{claim}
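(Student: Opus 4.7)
The plan is to pin down the exact value of $u_{\tildx}$ by exploiting that the LP in \Cref{fig:knapsackLP} is a knapsack LP with integer right-hand side $k-|X_1|$ and integer coefficients $|G_x|-1$ in its single capacity constraint. At any vertex optimum, at most one variable is fractional, and the capacity constraint can be taken tight, so $u_{\tildx}(|G_{\tildx}|-1)$ equals the integer slack left by the remaining $0/1$ variables. Writing this integer as $b$ gives $u_{\tildx}=b/(|G_{\tildx}|-1)$, and the hypothesis $u_{\tildx}\notin\{0,1\}$ then forces $|G_{\tildx}|\ge 3$ and $1\le b\le |G_{\tildx}|-2$. With this in hand, the target ratio becomes a one-line algebraic computation in $b$ and $|G_{\tildx}|$.

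I would then split into two cases according to whether $\tildx\in T\cap X_2$. Set $c=|G_{\tildx}|-b$, which is an integer with $c\ge 2$.

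\textbf{Case A} ($\tildx\in T\cap X_2$, so $|G'_{\tildx}|=|G_{\tildx}|-1$). Here $u_{\tildx}|G'_{\tildx}|=b$ is already integer, so $\ceil{u_{\tildx}|G'_{\tildx}|}-2=b-2$. Plugging into the definitions gives
\begin{align*}
\frac{1-p_{\tildx}}{1-u_{\tildx}}=\frac{|G_{\tildx}|+1-b}{|G_{\tildx}|-1-b}=\frac{c+1}{c-1}\le 3,
\end{align*}
with equality iff $c=2$.

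\textbf{Case B} (otherwise, so $|G'_{\tildx}|=|G_{\tildx}|$). Here $u_{\tildx}|G'_{\tildx}|=b+b/(|G_{\tildx}|-1)$ has fractional part strictly in $(0,1)$, hence $\ceil{u_{\tildx}|G'_{\tildx}|}=b+1$. A short calculation yields
\begin{align*}
\frac{1-p_{\tildx}}{1-u_{\tildx}}=\frac{(c+1)(|G_{\tildx}|-1)}{|G_{\tildx}|(c-1)},
\end{align*}
which is strictly dominated by the Case A value $(c+1)/(c-1)$ because $(|G_{\tildx}|-1)/|G_{\tildx}|<1$, and is therefore also at most $3$.

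The only step demanding care—really the only non-trivial point—is the structural claim that $u_{\tildx}(|G_{\tildx}|-1)$ is an integer and $|G_{\tildx}|\ge 3$; both follow from standard knapsack-LP reasoning (fixing the $0/1$ coordinates and solving the tight capacity constraint for the lone fractional coordinate). Everything after that reduces to the two elementary algebraic identities above.
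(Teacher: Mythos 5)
Your proof is correct, and it takes a genuinely different route from the paper's. Both proofs start from the same structural observation—the knapsack constraint is tight at a vertex optimum, so $b := u_{\tildx}(|G_{\tildx}|-1)$ is a positive integer, forcing $|G_{\tildx}|\ge 3$ and $1\le b\le |G_{\tildx}|-2$—but they diverge from there. The paper first disposes of $u_{\tildx}\le\nicefrac{2}{3}$ via the crude bound $p_{\tildx}\in[0,1]$ (which, strictly speaking, is not quite right when $b=1$, since then $p_{\tildx}=-\nicefrac{1}{|G'_{\tildx}|}<0$ in Case~A; the bound still holds for the reasons you exhibit, but the paper's one-line justification is imprecise there). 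For $u_{\tildx}>\nicefrac{2}{3}$ the paper brackets $u_{\tildx}$ between $\nu/(\nu+1)$ and $(\nu+1)/(\nu+2)$, shows $|G'_{\tildx}|\ge \nu+2$, and then applies $\ceil{x}\le x+1$ to get $1-p_{\tildx}\le 1-u_{\tildx}+2/|G'_{\tildx}|$. You instead compute $p_{\tildx}$ exactly in both of the two cases $|G'_{\tildx}|=|G_{\tildx}|-1$ (i.e.\ $\tildx\in T$) and $|G'_{\tildx}|=|G_{\tildx}|$, reducing the ratio to the closed-form expressions $(c+1)/(c-1)$ and $(c+1)(|G_{\tildx}|-1)/\bigl((c-1)|G_{\tildx}|\bigr)$ with $c=|G_{\tildx}|-b\ge 2$, which makes the bound of $3$ immediate and shows it is tight exactly when $c=2$. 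Your approach is shorter and avoids the $u_{\tildx}\le\nicefrac{2}{3}$ case split and the $\nu$-bracketing entirely; it also sidesteps the paper's imprecision about $p_{\tildx}\ge 0$ by never relying on that sign. The only place where you are as informal as the paper is the ``standard knapsack-LP reasoning'' used to justify tightness of the capacity constraint and hence integrality of $b$—that step is shared with the paper and fine, but it does tacitly use that all objective coefficients are nonnegative and that a zero-coefficient fractional variable can be rounded away without loss.
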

\begin{proof}
    If $u_{\tildx}\le \nicefrac{2}{3}$, we have $1-u_{\tildx}\ge \nicefrac{1}{3}$. Thus, we conclude $(1-p_{\tildx})/(1-u_{\tildx})\le 3$ because $p_{\tildx}\in [0,1]$.
    Therefore, we assume from here on $u_{\tildx}>\nicefrac{2}{3}$.
    Let $\nu$ be the integer such that ${\nu/(\nu+1)< u_{\tildx}\le (\nu+1)/(\nu+2)}$.
    This implies $1/(\nu+1)> 1-u_{\tildx}\ge 1/(\nu+2)$.
    Additionally, because the LP solution is optimal, the constraint in the LP should be tight. 
    Hence, we have $\xi = u_{\tildx}(|G_{\tildx}|-1)$ is an integer. 
    So, $1-u_{\tildx} = (|G_{\tildx}|-1-\xi)/(|G_{\tildx}|-1)$.
    Because the numerator and denominator are integers, it follows that $|G_{\tildx}|-1>\nu +1$.
    Thus, $|G_{\tildx}'|\ge \nu+2$.
    \begin{align*}
        \frac{1-p_{\tildx}}{1-u_{\tildx}}\le \frac{1-u_{\tildx}+\frac{2}{|G'_{\tildx}|}}{1-u_{\tildx}} = 1+ \frac{2}{|G'_{\tildx}|(1-u_{\tildx})}\le 1+ \frac{2(\nu+2)}{|G'_{\tildx}|}\le 1+\frac{2(\nu+2)}{\nu+2}\le 3
    \end{align*}

\end{proof}

\paragraph{Analyzing the Linear Program}
In this paragraph we lower bound the value of the optimal solution to the linear program.
Additionally, we upper bound the cost of $\X$ in terms of the linear program.
\begin{lemma} \label{lem:pseudoLowerBoundLP}
The solution $u_x=b$ for all $ x\in X_1$, is feasible for the LP in \Cref{fig:knapsackLP} and has value at least $b\sum_{x\in X_1} \big(r_1(x)\rho + S_x\rho + \sum_{p\in \Delta(G_x)} (\dists{cl_1(p)}{p} + \dists{p}{cl_2(p)})\big)$.
\end{lemma}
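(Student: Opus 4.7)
The plan is to verify the two conditions (feasibility and value) directly; this lemma is a short structural check rather than a deep argument.

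First I would show feasibility of the candidate solution $u_x=b$ for all $x\in X_1$. The box constraint $u_x\in[0,1]$ is immediate: by the definition of the bi-point solution, $a,b\ge 0$ and $a+b=1$, hence $b\in[0,1]$. For the knapsack constraint $\sum_{x\in X_1} u_x(|G_x|-1)\le k-|X_1|$, the key observation is that $cl_1\colon X_2\to X_1$ is a \emph{function}, so the family $\{G_x\}_{x\in X_1}$ partitions $X_2$. Therefore $\sum_{x\in X_1}|G_x|=|X_2|$, and plugging in $u_x=b$ gives
\begin{equation*}
\sum_{x\in X_1} b\,(|G_x|-1) \;=\; b\bigl(|X_2|-|X_1|\bigr) \;=\; k-|X_1|,
\end{equation*}
where the last equality uses the definition $b=(k-|X_1|)/(|X_2|-|X_1|)$. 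So the constraint is in fact satisfied with equality.

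Second, the value of the candidate solution in the LP is, by direct substitution,
\begin{equation*}
\sum_{x\in X_1} b\,\Bigl(r_1(x)\rho + S_x\rho + \sum_{p\in\Delta(G_x)}\bigl(\dists{cl_1(p)}{p}+\dists{p}{cl_2(p)}\bigr)\Bigr),
\end{equation*}
which, by factoring $b$ out of the outer sum, equals the claimed bound exactly. Since the statement only asks for a lower bound ``at least'', this suffices.

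I do not expect any real obstacle; the only subtle point is making sure that $\{G_x\}_{x\in X_1}$ really is a partition of $X_2$, which follows from $cl_1$ being well-defined (ties were broken consistently in the definition of $cl_1$). This lemma will later be combined with LP optimality: since the optimal LP value is at least this value, the gains obtained by opening groups instead of their $X_1$-representatives can be related to $b$ times the relevant part of the bi-point cost.
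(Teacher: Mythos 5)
Your proposal is correct and follows essentially the same route as the paper: verify $b\in[0,1]$, use that $\{G_x\}_{x\in X_1}$ partitions $X_2$ so $\sum_x(|G_x|-1)=|X_2|-|X_1|$, and observe $b(|X_2|-|X_1|)=k-|X_1|$; the value computation is just substitution. The only cosmetic difference is that you plug in the explicit formula for $b$ while the paper rewrites $b|X_2|-(1-a)|X_1|=a|X_1|+b|X_2|-|X_1|=k-|X_1|$; these are the same algebra.
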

\begin{proof}
The value of the LP follows directly from the objective function by using $u_x=b$.
It is also a feasible solution because 
\begin{align*}
    \sum_{x\in X_1} u_x(|G_x|-1) &= b \sum_{x\in X_1} (|G_x|-1) = b (|X_2|-|X_1|) = b|X_2| - (1-a) |X_1| \\
    &= a|X_1| + b|X_2| - |X_1| = k-|X_1|.
\end{align*}
\end{proof}

We bound the cost of $\X$ with respect to the value of the optimal solution to the linear program.
\begin{lemma} \label{lem:pseudoSavings}
Let $u_X$ be the optimal solution for the LP in \Cref{fig:knapsackLP}, and
\[U = \sum_{x\in X_1} u_x \big(r_1(x)\rho + S_x\rho + \sum_{p\in \Delta(G_x)} (\dists{cl_1(p)}{p} + \dists{p}{cl_2(p)})\big) \]
be the optimal value of the LP.
Then 
\begin{align*}
\costz{\X} \le 3\sum_{x\in X_1} \big(r_1(x)\rho + 2S_x\rho + \sum_{p\in \Delta(G_x)} (\dists{cl_1(p)}{p} + 2\dists{p}{cl_2(p)})\big) - 3U + 3\varepsilon \OPT_{\I}
\end{align*}
\end{lemma}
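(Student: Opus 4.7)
The plan is to decompose $\costz{\X}$ as a sum over the groups $G_x$ for $x \in X_1$, where each group contributes the radii cost of the facilities opened in $\{x\}\cup G_x$ together with the connection cost of the clients in $\Delta(G_x)$. First I would handle the two ``clean'' cases. If $u_x=0$, we open $x$ with radius $r'_1(x)=r_1(x)+2M_x\le r_1(x)+2S_x$ and each $p\in\Delta(G_x)$ connects to $x$ at cost at most $\distss{x}{p}\le 2\dists{p}{cl_2(p)}+\dists{cl_1(p)}{p}$ by Lemma~\ref{lem:pseudoConnectionCost}; hence the contribution is at most $A_x:=(r_1(x)+2S_x)\rho+\sum_{p\in\Delta(G_x)}(2\dists{p}{cl_2(p)}+\dists{cl_1(p)}{p})$. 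If $u_x=1$, we open all of $G_x$ and each $p$ connects to $cl_2(p)$ at cost $\dists{p}{cl_2(p)}$, so the contribution is at most $C^{(1)}_x:=S_x\rho+\sum_p\dists{p}{cl_2(p)}=A_x-B_x$, where $B_x=(r_1(x)+S_x)\rho+\sum_p(\dists{p}{cl_2(p)}+\dists{cl_1(p)}{p})$ is the LP coefficient of $u_x$. Summed over non-special groups this yields $\sum_{x\neq\tildx}(A_x-u_xB_x)$.

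Next I would analyze the special group in expectation (the random choice can then be derandomized greedily). The opened radii cost is at most $(r_1(\tildx)+2M'_{\tildx})\rho+p_{\tildx}S'_{\tildx}\rho$. For a client $p\in\Delta(G'_{\tildx})$, with probability $p_{\tildx}$ the facility $cl_2(p)$ is opened (cost $\dists{p}{cl_2(p)}$), and otherwise $p$ connects to $\tildx$; since $r_2(cl_2(p))\le M'_{\tildx}$ in that case, the proof of Lemma~\ref{lem:pseudoConnectionCost} (run with $r(\tildx)=r_1(\tildx)+2M'_{\tildx}$ in place of $r'_1(\tildx)$) still gives $\dist{p}{\tildx}\dotdiv r(\tildx)\le 2\dists{p}{cl_2(p)}+\dists{cl_1(p)}{p}$. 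Clients in $\Delta(\{\tildx\})$ (only present when $\tildx\in T$) connect to $\tildx$ with cost $\dist{p}{\tildx}\dotdiv r(\tildx)\le \dist{p}{\tildx}\dotdiv r_1(\tildx)=\dists{p}{cl_2(p)}$ because $r_1(\tildx)=r_2(\tildx)$ on $T$. Bounding $2-p_{\tildx}\ge 1$ lets me fold the $\Delta(\{\tildx\})$ term into the main client sum, so the connection cost of the special group is at most $(2-p_{\tildx})\sum_{p\in\Delta(G_{\tildx})}\dists{p}{cl_2(p)}+(1-p_{\tildx})\sum_{p\in\Delta(G_{\tildx})}\dists{cl_1(p)}{p}$.

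I would then collect these bounds and show that the special group's contribution is at most $(1-p_{\tildx})B_{\tildx}+C^{(1)}_{\tildx}+3\varepsilon\OPT_\I$. The client side immediately fits this template. For the radii side, I bound $2M'_{\tildx}\rho\le 2\varepsilon\OPT_\I$ using Lemma~\ref{lem:bipoint} (since $G'_{\tildx}\subseteq X_2\setminus T$ has radii at most $\varepsilon\OPT_\I/\rho$), and I use $p_{\tildx}S'_{\tildx}\le p_{\tildx}S_{\tildx}$. If $\tildx\notin T$, then $r_1(\tildx)\rho\le\varepsilon\OPT_\I$ directly. If $\tildx\in T$, then $r_1(\tildx)=r_2(\tildx)\le S_{\tildx}$, which together with $(1-p_{\tildx})r_1(\tildx)\rho\le(1-p_{\tildx})(r_1(\tildx)+S_{\tildx})\rho$ is absorbed into $(1-p_{\tildx})B_{\tildx}+C^{(1)}_{\tildx}$. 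In both cases the leftover is at most $3\varepsilon\OPT_\I$. Applying Claim~\ref{claim:specialfacility}, $(1-p_{\tildx})B_{\tildx}\le 3(1-u_{\tildx})B_{\tildx}$, gives
\[(1-p_{\tildx})B_{\tildx}+C^{(1)}_{\tildx}\le 3(1-u_{\tildx})B_{\tildx}+3C^{(1)}_{\tildx}=3(A_{\tildx}-u_{\tildx}B_{\tildx}),\]
and for non-special groups the exact cost $A_x-u_xB_x$ trivially satisfies the same inequality with factor $3$. Summing yields $\costz{\X}\le 3(A-U)+3\varepsilon\OPT_\I$, which is the claimed bound.

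The main obstacle is the special group when $\tildx\in T$. Here we cannot use $r'_1(\tildx)$ in Lemma~\ref{lem:pseudoConnectionCost} (because $M_{\tildx}\rho$ need not be $O(\varepsilon\OPT_\I)$), which is precisely why we set the shrunk radius $r(\tildx)=r_1(\tildx)+2M'_{\tildx}$. Making this substitution consistent with Lemma~\ref{lem:pseudoConnectionCost} requires that $cl_2(p)\in G'_{\tildx}$ for each $p$ charged by the lemma; clients with $cl_2(p)=\tildx$ must be handled separately by exploiting $r_1(\tildx)=r_2(\tildx)$. Balancing the resulting terms so that the factor $3$ from Claim~\ref{claim:specialfacility} and the $O(\varepsilon)$ loss from $2M'_{\tildx}\rho$ both fit inside the stated bound is the technical heart of the argument.
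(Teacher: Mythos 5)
Your proposal is correct and follows essentially the same route as the paper: decompose $\costz{\X}$ per group, observe that groups with $u_x\in\{0,1\}$ contribute exactly $A_x-u_xB_x$, bound the special group in expectation using the shrunk radius $r_1(\tildx)+2M'_{\tildx}$ together with Lemma~\ref{lem:pseudoConnectionCost} and Claim~\ref{claim:specialfacility}, and split into the cases $\tildx\in T$ and $\tildx\notin T$. The only differences are bookkeeping: you are slightly more explicit than the paper about why Lemma~\ref{lem:pseudoConnectionCost} still applies with the shrunk radius when $cl_2(p)\in G'_{\tildx}$ and about handling clients with $cl_2(p)=\tildx$ separately via $r_1(\tildx)=r_2(\tildx)$ and $2-p_{\tildx}\ge 1$, whereas the paper leaves these observations implicit.
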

\begin{proof}
Let $x\in X_1$.

If $u_x=0$, then by \Cref{lem:pseudoConnectionCost} the part of $\costz{X}$ related to $x$ and $G_x$ is at most

\begin{align*}
&r_1(x)\rho + 2M_x\rho + \sum_{p\in \Delta(G_x)} (\dists{cl_1(p)}{p} + 2\dists{p}{cl_2(p)}) \\
\le &r_1(x)\rho + 2S_x\rho + \sum_{p\in \Delta(G_x)} (\dists{cl_1(p)}{p} + 2\dists{p}{cl_2(p)})
\end{align*}

As $u_x=0$, this is trivially equal to

\begin{align*}
&r_1(x)\rho + 2S_x\rho + \sum_{p\in \Delta(G_x)} (\dists{cl_1(p)}{p} + 2\dists{p}{cl_2(p)}) \\
- u_x \big(&r_1(x)\rho + S_x\rho + \sum_{p\in \Delta(G_x)} (\dists{cl_1(p)}{p} + \dists{p}{cl_2(p)})\big)
\end{align*}

If $u_x=1$ then the part of $\costz{\X}$ related to $x$ and $G_x$ is 
\[S_x\rho + \sum_{p\in \Delta(G_x)} \dists{p}{cl_2(p)}\]

But this is again

\begin{align*}
&r_1(x)\rho + 2S_x\rho + \sum_{p\in \Delta(G_x)} (\dists{cl_1(p)}{p} + 2\dists{p}{cl_2(p)}) \\
- u_x \big(&r_1(x)\rho + S_x\rho + \sum_{p\in \Delta(G_x)} (\dists{cl_1(p)}{p} + \dists{p}{cl_2(p)})\big)
\end{align*}

We  bound the part of $\costz{\X}$ that is related to any non-special facility by summing over these two observations for all non-special facilities.
\begin{align*}
    &\sum_{x\in X_1:u_x=0}\left(r_1(x)\rho + 2M_x\rho + \sum_{p\in \Delta(G_x)} \distss{x}{p}  \right)+ \sum_{x\in X_1:u_x=1}\left(S_x\rho  + \sum_{p\in \Delta(G_x)} \dists{cl_2(p)}{p}  \right)\\
    &\le \sum_{x\in X_1: u_x\in\set{0,1}}\left(r_1(x)\rho + 2S_x\rho +  \sum_{p\in \Delta(G_x)}(\dists{cl_1(p)}{p} + 2\dists{p}{cl_2(p)}) \right)\\
    &-\sum_{x\in X_1:u_x\in\set{0,1}}u_x\left(r_1(x)\rho + S_x\rho +\sum_{p\in \Delta(G_x)}(\dists{cl_1(p)}{p} + \dists{p}{cl_2(p)})\right)
\end{align*}

Finally, it remains to show that the part of $\costz{\X}$ that is related to $\tildx$ is bounded by
\begin{align*}
    & 3\left(r_1(\tildx)\rho + 2S_{\tildx}\rho +  \sum_{p\in \Delta(G_{\tildx})}(\dists{cl_1(p)}{p} + 2\dists{p}{cl_2(p)})\right)\\
    &-3u_{\tildx}\left(r_1({\tildx})\rho + S_{\tildx}\rho +\sum_{p\in \Delta(G_{\tildx})}(\dists{cl_1(p)}{p} + \dists{p}{cl_2(p)})\right)
    +3\eps \OPT_\I
\end{align*}
Recall that the ratio of opened facilities in $G'_{\tildx}$ is $p_{\tildx} =(\ceil{u_{\tildx} |G'_{\tildx}|}-2)/|G'_{x'}|$.

Notice that for all facilities $x' \in G'_{\tildx}$ it holds that $r_2(x')\rho\le \varepsilon \OPT_\I$;
if this was not true, then $x'\in T$ and, thus, it coincides with some other facility in $X_1$, which means it would not be in $G'_{\tildx}$.

We now pay:
\begin{itemize}
    \item $r_1(\tildx)\rho + 2\max_{x\in G'_{\tildx}}r_2(x)\rho \le r_1(\tildx)\rho + 2\eps \OPT_\I$ for opening $\tildx$.
    \item $p_{\tildx} r_2(x')\rho \le u_{\tildx}r_2(x')\rho$ to open facility $x'\in G'_{\tildx}$ (in expectation).
    \item $(1-p_{\tildx})\left(\dists{cl_1(p)}{p} + 2\dists{p}{cl_2(p)}\right) + p_{\tildx} \dists{p}{cl_2(p)}$ for connecting client $p\in \Delta(G_{\tildx})$ with either $\tildx$ or $cl_2(p)$ (in expectation).
\end{itemize}

Using \Cref{claim:specialfacility} we can bound the part of $\costz{\X}$ that is related to $\tildx$ with respect to $u_{\tildx}$ instead of $p_{\tildx}$.

\begin{align*}
    &r_1(\tildx)\rho + 2\eps \OPT_\I + \sum_{x'\in G'_{\tildx}} u_{\tildx}r_2(x')\rho + \sum_{p\in \Delta(G_{\tildx})} ((1-p_{\tildx})\dists{cl_1(p)}{p} + (2-p_{\tildx})\dists{p}{cl_2(p)})  \\
    \le~&r_1(\tildx)\rho + 2\eps \OPT_\I + \sum_{x'\in G'_{\tildx}} u_{\tildx}r_2(x')\rho + 3\sum_{p\in \Delta(G_{\tildx})} ((1-u_{\tildx})\dists{cl_1(p)}{p} + (2-u_{\tildx})\dists{p}{cl_2(p)})
\end{align*}

We distinguish two cases.
In the first case we assume $\tildx$ is one of the guessed facilities in $T$; then it is also in $X_2$.
By the definition of the groups, it is also in its group $G_{\tildx}$, but not in $G'_{\tildx}$.
In the second case we assume $\tildx$ is none of the guessed facilities; then it has small radius.

We start with the first case.
If $\tildx \in T$, we have:
\begin{align*}
&r_1(\tildx)\rho + 2\eps \OPT_\I + \sum_{x'\in G'_{\tildx}} u_{\tildx}r_2(x')\rho \\
&\qquad +3\sum_{p\in \Delta(G_{\tildx})} ((1-u_{\tildx})\dists{cl_1(p)}{p} + (2-u_{\tildx})\dists{p}{cl_2(p)})  \\
=~&(1-u_{\tildx})r_1(\tildx)\rho + 2\eps \OPT_\I + u_{\tildx}S_{\tildx}\rho \\
& \qquad+3\sum_{p\in \Delta(G_{\tildx})} ((1-u_{\tildx})\dists{cl_1(p)}{p} + (2-u_{\tildx})\dists{p}{cl_2(p)}  \\
\le~&(1-u_{\tildx})r_1(\tildx)\rho + 2\eps \OPT_\I + (2-u_{\tildx})S_{\tildx}\rho \\
&\qquad + 3\sum_{p\in \Delta(G_{\tildx})} ((1-u_{\tildx})\dists{cl_1(p)}{p} + (2-u_{\tildx})\dists{p}{cl_2(p)}  \\
\le~&3\left(r_1(\tildx)\rho + 2S_{\tildx}\rho + \sum_{p\in \Delta(G_{\tildx})} (\dists{cl_1(p)}{p} + 2\dists{p}{cl_2(p)})\right) \\
&\qquad- 3u_{\tildx} \left(r_1(\tildx)\rho + S_{\tildx}\rho + \sum_{p\in \Delta(G_{\tildx})} (\dists{cl_1(p)}{p} + \dists{p}{cl_2(p)})\right) + 2\eps \OPT_\I
\end{align*}

On the other hand, if $\tildx\not \in T$, then $r_1(\tildx) \rho \le \eps \OPT_\I$, therefore our cost is upper bounded by

\begin{align*}
&3\eps \OPT_\I + u_{\tildx}S_{\tildx}\rho + 3\sum_{j\in \Delta(G_{\tildx})} ((1-u_{\tildx})\dists{cl_1(p)}{p} + (2-u_{\tildx})\dists{p}{cl_2(p)}) \\
\le~&3\eps \OPT_\I + (2-u_{\tildx})S_{\tildx}\rho + 3\sum_{p\in \Delta(G_{\tildx})} ((1-u_{\tildx})\dists{cl_1(p)}{p} + (2-u_{\tildx})\dists{p}{cl_2(p)})  \\
\le~&3\left(r_1(\tildx)\rho + 2S_{\tildx}\rho + \sum_{p\in \Delta(G_{\tildx})} (\dists{cl_1(p)}{p} + 2\dists{p}{cl_2(p)})\right) \\
&\quad- 3u_{\tildx} \left(r_1(\tildx)\rho + S_{\tildx}\rho + \sum_{p\in \Delta(G_{\tildx})} (\dists{cl_1(p)}{p} + \dists{p}{cl_2(p)})\right) + 3\eps \OPT_\I
\end{align*}
\end{proof}

\paragraph{Bounding the cost of $\X$}
We are now ready to bound the cost of $\X$ in terms of the cost $\OPT_\I$ of the optimal solution.

\begin{lemma} \label{lem:pseudoApx}
$\costz{\X} \le (13.5+7.5\eps)\OPT_\I$
\end{lemma}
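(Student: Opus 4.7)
The plan is to combine Lemma~\ref{lem:pseudoSavings} and Lemma~\ref{lem:pseudoLowerBoundLP} with the bi-point guarantee and the case assumptions $a \le 1/4$ and $\costz{\X_1} > \costz{\X_2}$ already made at the start of the subsection.

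First I would simplify the triple sums appearing in both lemmas. Since $\{G_x\}_{x \in X_1}$ partitions $X_2$, the collection $\{\Delta(G_x)\}_{x \in X_1}$ partitions $P$. The quantity $\dists{cl_1(p)}{p}$ equals $\dist{cl_1(p)}{p}\dotdiv r_1(cl_1(p))$, which by definition of $cl_1(p)$ is exactly the connection cost of $p$ in $\X_1$, and symmetrically for $\X_2$. Combined with $\sum_{x\in X_1} S_x = \sum_{x_2\in X_2} r_2(x_2)$, this gives
\begin{align*}
\sum_{x \in X_1} \Big(r_1(x)\rho + S_x\rho + \sum_{p \in \Delta(G_x)}(\dists{cl_1(p)}{p} + \dists{p}{cl_2(p)})\Big) = \costz{\X_1} + \costz{\X_2},
\end{align*}
and the analogous sum with a factor $2$ on the $S_x$ and $\dists{p}{cl_2(p)}$ terms equals $\costz{\X_1} + 2\costz{\X_2}$.

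Plugging the LP lower bound $U \ge b(\costz{\X_1} + \costz{\X_2})$ from Lemma~\ref{lem:pseudoLowerBoundLP} into Lemma~\ref{lem:pseudoSavings} then yields
\begin{align*}
\costz{\X} \le 3(\costz{\X_1} + 2\costz{\X_2}) - 3b(\costz{\X_1} + \costz{\X_2}) + 3\eps \OPT_\I = 3a\costz{\X_1} + (3+3a)\costz{\X_2} + 3\eps \OPT_\I.
\end{align*}
The key algebraic step would be to regroup this as $3(a\costz{\X_1} + b\costz{\X_2}) + 6a\costz{\X_2} + 3\eps \OPT_\I$, which uses $3a\costz{\X_2} + 3b\costz{\X_2} = 3\costz{\X_2}$ to expose the bi-point cost intact.

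Finally I would apply three bounds: the bi-point guarantee $a\costz{\X_1} + b\costz{\X_2} \le (3+\eps)\OPT_\I$, the consequence $\costz{\X_2} \le a\costz{\X_1} + b\costz{\X_2} \le (3+\eps)\OPT_\I$ that follows from $\costz{\X_1} > \costz{\X_2}$, and $6a \le 3/2$ from $a \le 1/4$. This gives $\costz{\X} \le 3(3+\eps)\OPT_\I + \tfrac{3}{2}(3+\eps)\OPT_\I + 3\eps \OPT_\I = (13.5 + 7.5\eps)\OPT_\I$. The main obstacle is the regrouping step: bounding $3a\costz{\X_1}$ and $(3+3a)\costz{\X_2}$ separately (each via $(3+\eps)\OPT_\I$) would cost roughly $20\OPT_\I$ instead of $13.5\OPT_\I$, so it is essential to preserve the convex combination and charge only the ``excess'' $6a\costz{\X_2}$ against the weaker pointwise bound on $\costz{\X_2}$.
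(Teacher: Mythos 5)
Your proposal is correct and follows essentially the same route as the paper: plug the LP lower bound from Lemma~\ref{lem:pseudoLowerBoundLP} into Lemma~\ref{lem:pseudoSavings}, identify the resulting sums as $3a\costz{\X_1}+3(1+a)\costz{\X_2}+3\eps\OPT_\I$, regroup as $3(a\costz{\X_1}+b\costz{\X_2})+6a\costz{\X_2}+3\eps\OPT_\I$ using $a+b=1$, and then apply the bi-point bound together with $\costz{\X_2}\le(3+\eps)\OPT_\I$ and $a\le\nicefrac14$. Your write-up is slightly more explicit than the paper about the partition $\{\Delta(G_x)\}_{x\in X_1}$ of $P$ and the identities $\dists{cl_1(p)}{p}=\distr{p}{X_1}{r_1}$, $\dists{p}{cl_2(p)}=\distr{p}{X_2}{r_2}$, but the argument is the same.
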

\begin{proof}Let $U$ be the optimal value for the LP in \Cref{fig:knapsackLP}.
By \Cref{lem:pseudoSavings} we have 

\begin{align*}
\costz{\X} \le 3\sum_{x\in X_1} \big(r_1(x)\rho + 2S_x\rho + \sum_{p\in \Delta(G_x)} (\dists{cl_1(p)}{p} + 2\dists{p}{cl_2(p)})\big) - 3U +3\eps\OPT_\I
\end{align*}

Then by \Cref{lem:pseudoLowerBoundLP} we get that

\begin{align*}
\costz{\X} \le~&3\sum_{x\in X_1} \big(r_1(x)\rho + 2S_x\rho + \sum_{p\in \Delta(G_x)} (\dists{cl_1(p)}{p} + 2\dists{p}{cl_2(p)})\big)\\
&-3b\sum_{x\in X_1} \big(r_1(x)\rho + S_x\rho + \sum_{p\in \Delta(G_x)} (\dists{cl_1(p)}{p} + \dists{p}{cl_2(p)})\big) + 3\eps \OPT_\I
\end{align*}

But as $a+b=1$, we get

\begin{align*}
\costz{\X} &\le 3\sum_{x\in X_1} \big(a\cdot r_1(x)\rho + (1+a)S_x\rho + \sum_{p\in \Delta(G_x)} (a\cdot \dists{cl_1(p)}{p} + (1+a)\dists{p}{cl_2(p)})\big) + 3\eps \OPT_\I\\
&=3a \cdot \costz{\X_1} + 3(1+a)\costz{\X_2} + 3\eps \OPT_\I
\end{align*}

Now since $a\le b$ and $a \cdot \costz{\X_1}+b \cdot \costz{\X_2} \le (3+\eps)\OPT_\I$:

\[\costz{\X} \le (9+6\eps)\OPT_\I + 6a\costz{\X_2} \]

Recall that $a\le \nicefrac{1}{4}$. Finally, we have $\costz{\X_2} \le (3+\eps)\OPT_\I \le \costz{\X_1}$, meaning

\[\costz{\X} \le (13.5+7.5\eps)\OPT_\I \]
\end{proof}
Now our main result follows.

\begin{proof}[Proof of~\Cref{thm:ballkmed}]
Follows directly by \Cref{lem:pseudoCardinality} and \Cref{lem:pseudoApx}.
\end{proof}

\apxtoplone*
\begin{proof}
    We have an approximation for \Ballk{} due to \Cref{thm:ballkmed}.

    Due to the approximation preserving reduction in \Cref{lem:redball} there is also a factor-$(13.5+\eps)$ approximation for \NCC{\textsf{Top}}{\LP{1}}.
\end{proof}

\section{\texorpdfstring{\NCC{\LP{\infty}}{\textsf{Ord}}}{\NCCH{\textnormal{l inf}}{\textsf{Ord}}}}

In this section we show an $O(1)$-approximation for \NCC{\LP{\infty}}{\textsf{Ord}}. 
Generally, a solution to \NNCC{} contains an assignment $\sigma:P \rightarrow X$. 
But similarly to the \msr{} problem, the cost of a cluster around $x$ is just the largest distance $\dist{x}{p}$ to an assigned client $p\in P$.
Therefore, the solution to \NNCC{} is a set of balls around facilities, that cover all the clients $P$.
The goal is to minimize the ordered norm of the radii. 
We formalize this equivalent formulation of \NCCS{\LP{\infty}}{\textsf{Ord}}.

For a set of facilities $X$ and a radius function $r:X\rightarrow \nnr$ let $\bm{r}(X)$ be the vector of the radii of the facilities in $X$ in arbitrary order.
Furthermore let $B^r(x)=\{p\in P\mid \dist{x}{p}\le r(x)\}$ be the clients covered by a facility $x$ and $B^r(X)=\bigcup_{x\in X}B^r(x)$ for a set of facilities $X$.
\begin{Definition}[\textnormal{\NCCS{\LP{\infty}}{\textsf{Ord}}}]
     An instance $\I = (P,F,\delta,k,\bm{w})$ consists of a point set $P$, a set $F$ of facilities, a metric $\delta: (P\cup F) \times (P\cup F)\rightarrow \nnr$, a $k\in \mathbb{N}$ and a non-increasing weight vector $\bm{w}\in \mathbb{R}^k_{\ge 0}$. 
    The goal is to find a solution $\X=(X,r)$ containing a set $X\subseteq F$ of facilities of size $|X|=k$ with radii $r:X\rightarrow \mathbb{R}^{\ge0}$ such that $B^r(X)=P$ which minimizes 
    $\ord{\bm{w}}{\bm{r}(X)}$.
\end{Definition}

\subsection{Reduction to \MSRDC}
In this section we reduce \NCCS{\LP{\infty}}{\textsf{Ord}} to \MSRDC{} (\MSRDCS).
Intuitively, this is a problem similar to \msr{}, but instead of paying for the sum of radii, we pay for the sum of an increasing function of the radii.

Let us begin with the formal definition of the problem.
\begin{Definition}[\MSRDC{} (\MSRDCS)]
    An instance $\I = (P,F,\delta,k,h)$ consists of a point set $P$, a set $F$ of facilities, a metric $\delta: (P\cup F) \times (P\cup F)\rightarrow \nnr$, a $k\in \mathbb{N}$ and a non-decreasing cost function $h:\nnr\rightarrow \nnr$. 
    The goal is to find a solution $\X=(X,r)$ containing a set $X\subseteq F$ of facilities of size $|X|=k$ with radii $r:X\rightarrow \nnr$ such that $B^r(X)=P$ which minimizes 
    \begin{align*}
        h_r(X) = \sum_{x\in X}h(r(x)).
    \end{align*}
\end{Definition}

At first, we want to reduce $\dif{\bm{w}}=|\{i\in [k-1]\mid \bm{w}_i\ne \bm{w}_{i+1}\}|$, which measures the complexity of the ordered norm with weight vector $\bm{w}$. 

\begin{lemma}[Lemma 4.2 in \cite{chakrabarty-swamy19:norm-k-clustering}]\label{lem:reducedcomplex}
    For all $\bm{w}\in \mathbb{R}^k_{\ge0}$ and $\eps >0$ there is a $\bm{\Tilde{w}}\in \mathbb{R}^k_{\ge0}$ such that $\dif{\tildw}\le O(\log k)$ and for all $\bm{x}\in \mathbb{R}^k_{\ge0}$ it holds that $\ord{\tildw}{\bm{x}}\le \ord{\bm{w}}{\bm{x}}\le (1+\eps) \ord{\tildw}{\bm{x}}$. 
    Additionally, we can compute $\tildw$ in polynomial time.
\end{lemma}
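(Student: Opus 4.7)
The plan is to build $\tildw$ by a geometric bucketing of the indices, exploiting the elementary fact that $\topl{\ell'}{\bm{x}}\le\topl{\ell}{\bm{x}}\le(\ell/\ell')\,\topl{\ell'}{\bm{x}}$ for $\ell\ge\ell'$.

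First I would rewrite the ordered norm as a non-negative combination of top-$\ell$ norms. Setting $c_\ell := w_\ell - w_{\ell+1}\ge 0$ with the convention $w_{k+1}:=0$, summation by parts yields $\ord{\bm{w}}{\bm{x}} = \sum_{\ell=1}^{k} c_\ell\,\topl{\ell}{\bm{x}}$ for every $\bm{x}$, reducing the problem to sparsifying the support of the non-negative coefficients $(c_\ell)$ while preserving the value up to $1+\eps$. I would then partition $\{1,\dots,k\}$ into geometric buckets $B_j = [\ell_j,\ell_{j+1})$ with $\ell_0:=1$ and $\ell_j:=\lceil(1+\eps)^j\rceil$, giving at most $m=O(\log_{1+\eps} k)=O(\log k)$ non-empty buckets (the $1/\eps$ dependence is absorbed into the big-$O$). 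Inside each bucket I collapse all mass onto the smallest index: put $\tilde c_{\ell_j}:=\sum_{\ell\in B_j} c_\ell$, $\tilde c_\ell:=0$ otherwise, and finally $\tildw_i:=\sum_{\ell\ge i}\tilde c_\ell$. By construction $\tildw$ is non-increasing and constant on every interval $(\ell_{j-1},\ell_j]$, so $\dif{\tildw}\le m=O(\log k)$.

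For the two-sided approximation I would apply the key inequality $\topl{\ell'}{\bm{x}}\le\topl{\ell}{\bm{x}}\le(\ell/\ell')\,\topl{\ell'}{\bm{x}}$ valid for $\ell\ge\ell'\ge1$; the upper bound holds because each of the coordinates $\bm{x}^\downarrow[\ell'+1],\dots,\bm{x}^\downarrow[\ell]$ is bounded by the average $\topl{\ell'}{\bm{x}}/\ell'$. Using this inside each bucket with $\ell'=\ell_j$, together with the arithmetic estimate $\ell/\ell_j\le(\ell_{j+1}-1)/\ell_j\le 1+\eps$ for $\ell\in B_j$, yields $\topl{\ell_j}{\bm{x}}\le\topl{\ell}{\bm{x}}\le(1+\eps)\,\topl{\ell_j}{\bm{x}}$. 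Multiplying by $c_\ell\ge0$ and summing across buckets gives $\ord{\tildw}{\bm{x}}=\sum_j\tilde c_{\ell_j}\topl{\ell_j}{\bm{x}}\le\ord{\bm{w}}{\bm{x}}\le(1+\eps)\,\ord{\tildw}{\bm{x}}$ for every $\bm{x}$, and the whole construction runs in linear time after the $O(\log k)$ bucket endpoints are computed.

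The main (mild) obstacle is the integer-level bookkeeping guaranteeing that the within-bucket ratio $(\ell_{j+1}-1)/\ell_j$ stays at most $1+\eps$; this is automatic for $\ell_j:=\lceil(1+\eps)^j\rceil$ since $\lceil(1+\eps)^{j+1}\rceil-1\le(1+\eps)^{j+1}\le(1+\eps)\,\lceil(1+\eps)^j\rceil$, and empty buckets (which can arise at the lower end when $(1+\eps)^j$ and $(1+\eps)^{j+1}$ round to the same integer) can simply be dropped without affecting the argument. Hence no non-trivial optimization is needed—the proof is essentially the combination of the top-$\ell$ decomposition of ordered norms with a one-shot geometric rounding of the index set.
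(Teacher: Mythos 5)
The paper does not reprove this statement---it is cited verbatim as Lemma~4.2 of Chakrabarty--Swamy~\cite{chakrabarty-swamy19:norm-k-clustering}---so there is no in-paper proof to compare against, but your argument is a correct and essentially self-contained reconstruction of the standard one. The top-$\ell$ decomposition $\ord{\bm{w}}{\bm{x}}=\sum_{\ell}(w_\ell-w_{\ell+1})\topl{\ell}{\bm{x}}$, the averaging inequality $\topl{\ell'}{\bm{x}}\le\topl{\ell}{\bm{x}}\le(\ell/\ell')\topl{\ell'}{\bm{x}}$, and the geometric rounding of indices with endpoints $\ell_j=\lceil(1+\eps)^j\rceil$ are exactly the ingredients used in~\cite{chakrabarty-swamy19:norm-k-clustering}, and your integer-bookkeeping check that $(\ell_{j+1}-1)/\ell_j\le 1+\eps$ closes the one place where the argument could slip. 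The only cosmetic point worth flagging is that the resulting bound is really $O(\log_{1+\eps}k)=O(\eps^{-1}\log k)$; you note this, and it matches the convention here of treating $\eps$ as a fixed constant when writing $O(\log k)$.
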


Furthermore, we can compute a threshold vector, that is close enough to the cluster cost vector of the optimal solution, in polynomial time.

\begin{lemma}[Lemma 6.8 and 6.9 in \cite{chakrabarty-swamy19:norm-k-clustering}]\label{lem:nearoptimal}
    
    For all $\eps>0$ and instances $\I=(P,F,\delta,k,\bm{w})$ with $\dif{\bm{w}}\le O(\log k)$, in polynomial time we can compute a set of threshold vectors $A$ containing a threshold vector $\bm{t}$ with $\proxyz{\bm{r}^*_\I(X^*_\I)}{\bm{w}}{\bm{t}}\le (1+\eps)\OPT_\I$.
\end{lemma}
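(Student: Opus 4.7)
The plan is to exploit that $\bm{w}$ has only $O(\log k)$ distinct values, so the proxy cost $\proxyz{\cdot}{\bm{w}}{\bm{t}}$ depends on $\bm{t}$ only at those coordinates. Let $I := \{i : w_i \ne w_{i+1}\}$, so $|I| = O(\log k)$, and abbreviate $\bm{r}^* := \bm{r}^*_\I(X^*_\I)$ with $r^*_{(i)} := (\bm{r}^*)^\downarrow[i]$. From the definition of the proxy cost,
\[
\proxyz{\bm{r}^*}{\bm{w}}{\bm{t}} = \sum_{i \in I} (w_i - w_{i+1})\, \proxy{t_i}{\bm{r}^*}{i},
\]
so only $(t_i)_{i\in I}$ is relevant. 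I would construct $A$ by enumerating each such $t_i$ over a candidate set $D$ and filling in the remaining coordinates so as to preserve non-increasing monotonicity of $\bm{t}$. A natural choice for $D$ is either the set of all pairwise input distances (at most $n^2$ values) or a geometric grid of powers of $(1+\eps')$ covering the range of possible radii; either way $|D|$ is polynomial in the input, giving $|A| \le |D|^{|I|} = n^{O(\log k)}$.

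The key pointwise estimate is: if $r^*_{(i)} \le t_i \le (1+\eps')\, r^*_{(i)}$, then $\proxy{t_i}{\bm{r}^*}{i} \le (1+\eps')\,\topl{i}{\bm{r}^*}$. To see this, let $J := \{j : r^*_{(j)} > t_i\}$, which is contained in $\{1,\dots,i-1\}$ and in fact equals $\{1,\ldots,|J|\}$ by monotonicity of $\bm{r}^{*\downarrow}$. Expanding the definition and using $(i-|J|)\, r^*_{(i)} \le \sum_{j=|J|+1}^{i} r^*_{(j)}$,
\[
\proxy{t_i}{\bm{r}^*}{i} = (i-|J|)\, t_i + \sum_{j \in J} r^*_{(j)} \le (1+\eps') \sum_{j=|J|+1}^{i} r^*_{(j)} + \sum_{j \in J} r^*_{(j)} \le (1+\eps')\,\topl{i}{\bm{r}^*}.
\]
Choosing $\bm{t} \in A$ so that each $t_i$ is the smallest element of $D$ not below $r^*_{(i)}$, and combining with the telescoping identity $\sum_i (w_i - w_{i+1})\topl{i}{\bm{x}} = \ord{\bm{w}}{\bm{x}}$, then yields $\proxyz{\bm{r}^*}{\bm{w}}{\bm{t}} \le (1+\eps')\,\ord{\bm{w}}{\bm{r}^*} = (1+\eps')\OPT_\I$ after rescaling $\eps'$.

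The main obstacle I expect is the undershoot-versus-overshoot asymmetry of $\proxy{\cdot}{\bm{r}^*}{i}$: while overshooting $r^*_{(i)}$ by a $(1+\eps')$ factor costs only a $(1+\eps')$ factor, even a slight undershoot can inflate the proxy cost because entries $r^*_{(j)}$ with $j > i$ then begin to contribute. The remedy is always to round $t_i$ up to a candidate above $r^*_{(i)}$. A secondary concern is the polynomial-time claim in the lemma: the enumeration above is only quasi-polynomial, so obtaining a truly polynomial-size $A$ would likely require a more refined discretization exploiting the particular structure of $\bm{w}$ produced by \Cref{lem:reducedcomplex}, for instance by re-using the same small set of candidate thresholds across blocks of equal weights rather than enumerating each coordinate in $I$ independently.
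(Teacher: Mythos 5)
This lemma is cited from Chakrabarty and Swamy rather than proved in the paper, so I compare your proposal against their argument. Your decomposition of $\proxyz{\bm{r}^*}{\bm{w}}{\bm{t}}$ over the breakpoint set $I$, the pointwise robustness bound $\proxy{t_i}{\bm{r}^*}{i} \le (1+\eps')\topl{i}{\bm{r}^*}$ whenever $r^*_{(i)} \le t_i \le (1+\eps')\,r^*_{(i)}$, and the insistence on rounding thresholds \emph{up} are all correct and match the proxy-function property Chakrabarty--Swamy use. You have also honestly flagged the real gap: independent enumeration of the $|I|=O(\log k)$ breakpoint thresholds over a polynomial grid $D$ yields $|A|=|D|^{O(\log k)}$, which is quasi-polynomial, whereas the lemma promises polynomial time.

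The missing piece is a counting argument. Roughly: one guesses a single anchor value (e.g.\ the largest relevant threshold) from a polynomial candidate set; one then observes that any coordinate of $\bm{t}$ that is much smaller than that anchor --- say, below an $\Omega(\eps/k)$ fraction of it --- can be rounded up to that floor at a total additional cost of $O(\eps)\,\OPT_\I$, so the usable geometric grid has only $Q=O(\eps^{-1}\log(k/\eps))$ levels. Crucially, because $\bm{t}$ restricted to $I$ must be non-increasing, one needs only the monotone sequences, of which there are $\binom{Q+|I|}{|I|}$, not $Q^{|I|}$. With $Q$ and $|I|$ both $O(\eps^{-1}\log k)$ this is $2^{O(\eps^{-1}\log k)}=k^{O(1/\eps)}$, polynomial for fixed $\eps$. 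Your closing guess that one should ``re-use the same small set of candidate thresholds across blocks'' gestures at the right intuition, but the concrete mechanism is the bounded-range plus monotone-sequence count; without it the polynomial-time claim of the lemma does not follow from your enumeration.
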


We are now ready to present our reduction:

\begin{lemma}\label{lem:reducetogeneralmsr}
    Assume there is an algorithm that computes for an instance $\I=(P,F,\delta,k,h)$ of \MSRDCS{} a solution $\X= (X,r)$ such that 
    \begin{align*}
        \sum_{x\in X}h\left(\frac{r(x)}{9}\right) \le (2+2\eps')\OPT_\I
    \end{align*}
    for all $\eps'>0$, then there is a factor-$(18+\eps)$ approximation for \NCCS{\LP{\infty}}{\textnormal{\textsf{Ord}}} for all $\eps>0$.
\end{lemma}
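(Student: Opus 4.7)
My plan is to turn the ordered-norm cost into a per-facility additive cost via the proxy trick from \Cref{sec:proxy}, so that the bicriteria \MSRDCS{} black-box applies directly. First, I apply \Cref{lem:reducedcomplex} to replace $\bm{w}$ by $\tildw$ with $\dif{\tildw}=O(\log k)$, losing only a factor of $(1+\eps_1)$. Next, I enumerate the polynomially many threshold vectors in the set $A$ produced by \Cref{lem:nearoptimal} (applied to the instance with weights $\tildw$); one of them, call it $\bm{t}^*$, satisfies $\proxyz{\bm{r}^*(X^*)}{\tildw}{\bm{t}^*}\le (1+\eps_2)\OPT_\I$, where $(X^*,r^*)$ is the optimum for weights $\tildw$. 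I will run the whole reduction once for every $\bm{t}\in A$ and keep the best solution.

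For each fixed $\bm{t}$ I define the non-decreasing cost function $h_{\bm{t}}(r) = \sum_{i}(\tildw_i-\tildw_{i+1})(r\dotdiv t_i)$ and form the \MSRDCS{} instance $\I'=(P,F,\delta,k,h_{\bm{t}})$. Unfolding the definition of $\proxyz{}{}{}$ yields the key identity
\[
\proxyz{\bm{r}(X)}{\tildw}{\bm{t}} \;=\; C(\bm{t}) \;+\; \sum_{x\in X} h_{\bm{t}}(r(x)),
\]
where $C(\bm{t}) = \sum_i (\tildw_i-\tildw_{i+1})\cdot i\cdot t_i \ge 0$ is a solution-independent constant. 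Applied to $(X^*,r^*)$, this gives $\OPT_{\I'} \le \sum_{x\in X^*} h_{\bm{t}^*}(r^*(x)) \le (1+\eps_2)\OPT_\I - C(\bm{t}^*)$. Then I invoke the assumed MSRDC algorithm on $\I'$ to obtain $(X,r)$ with $|X|=k$, $B^r(X)=P$ (so it is automatically a feasible NCCS solution), and $\sum_{x\in X} h_{\bm{t}^*}(r(x)/9)\le (2+2\eps')\OPT_{\I'}$.

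Finally, using positive homogeneity of $\ord{\tildw}{\cdot}$ together with $\ord{\tildw}{\cdot}\le\proxyz{\cdot}{\tildw}{\bm{t}^*}$, I bound
\[
\ord{\tildw}{\bm{r}(X)} \;=\; 9\,\ord{\tildw}{\bm{r}(X)/9} \;\le\; 9\,C(\bm{t}^*) + 9\sum_{x\in X} h_{\bm{t}^*}(r(x)/9) \;\le\; 9\,C(\bm{t}^*) + 9(2+2\eps')\bigl[(1+\eps_2)\OPT_\I-C(\bm{t}^*)\bigr].
\]
Since $C(\bm{t}^*)\ge 0$ and $2+2\eps'\ge 1$, the $C(\bm{t}^*)$ contributions cancel favourably, leaving $\ord{\tildw}{\bm{r}(X)}\le 9(2+2\eps')(1+\eps_2)\OPT_\I$. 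Reverting from $\tildw$ back to $\bm{w}$ via \Cref{lem:reducedcomplex} costs a further $(1+\eps_1)$ factor, so choosing $\eps_1,\eps_2,\eps'$ small enough yields the claimed $(18+\eps)$-approximation. The main subtlety I expect is exactly this cancellation of $C(\bm{t}^*)$: the MSRDC guarantee is only on $\sum h_{\bm{t}^*}(r(x)/9)$, and the trick of pulling a factor of $9$ out of the norm before invoking the proxy bound is what ensures only one copy of $C(\bm{t}^*)$ appears, so that the multiplier $2+2\eps'$ in front of $-C(\bm{t}^*)$ kills the $+9\,C(\bm{t}^*)$ term rather than letting the constant blow up.
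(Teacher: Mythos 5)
Your proposal is correct and takes essentially the same route as the paper's proof: both use Lemmas~\ref{lem:reducedcomplex} and~\ref{lem:nearoptimal} to obtain $\tildw$ and a good threshold vector, define $h$ from the consecutive differences of $\tildw$, invoke the MSRDC black box, and account for the solution-independent additive constant $C(\bm{t})$ via the proxy-cost identity, noting that the $(2+2\eps')$ multiplier on the negative $C$-term dominates the positive $C$-term. The only cosmetic difference is that you pull the factor of $9$ out by positive homogeneity of $\ord{\tildw}{\cdot}$ and keep the threshold $\bm{t}^*$, whereas the paper absorbs the $9$ into a rescaled threshold $9\bm{t}$; these are the same computation since $9\left(\frac{r}{9}\dotdiv t_i\right)=r\dotdiv 9t_i$.
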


\begin{proof}
    Let $\I'=(P,F,\delta,k,\bm{w})$ and $\eps'>0$ such that $(1+\eps')^3\le 1+\eps$.
    We use~\Cref{lem:reducedcomplex} and~\Cref{lem:nearoptimal} to find a weight vector $\bm{\Tilde{w}}\in \mathbb{R}^k_{\ge 0}$ and a threshold vector $\bm{t}\in \mathbb{R}^k_{\ge0}$, such that 
    \begin{enumerate}
        \item $\proxy{\bm{t}}{\bm{r}^*_\I(X^*_\I)}{\bm{\tildw}}\le (1+\eps')\OPT_{\I'}$ and
        \item $(1+\eps')\ord{\tildw}{\bm{x}} \ge \ord{\bm{w}}{\bm{x}}$ for all $\bm{x}\in \mathbb{R}^k_{\ge 0}$.
    \end{enumerate}

Assume we have an approximation algorithm as in the statement of the lemma.
Let $h(a) = \sum_{i=1}^{k} (\bm{\tildw}_i-\bm{\tildw}_{i+1})(a\dotdiv \bm{t}_i)$.
    We show how to obtain a $(18+\eps)$-approximation for the objective function $\proxy{\bm{\tildw}}{\bm{r}(X)}{\bm{t}}$. 
    Let $\X=(X,r)$ be the solution of the approximation algorithm.
    \begin{align*}
        18 (1+\eps')^3\OPT_{\I'} &\ge 18(1+\eps')^2 \proxyz{\bm{r}_{\I'}^*(X^*_{\I'})}{\bm{\tildw}}{\bm{t}}\\
        &\ge 18(1+\eps')^2\sum_{x\in X^*_{\I'}}h(r_{\I'}^*(x))+18(1+\eps')\sum_{i=1}^k i(\bm{\tildw}_i-\bm{\tildw}_{i+1})\bm{t}_i\\
        &\ge 18(1+\eps')^2\sum_{x\in X^*_{\I}}h(r_{\I'}^*(x))+18(1+\eps')\sum_{i=1}^k i(\bm{\tildw}_i-\bm{\tildw}_{i+1})\bm{t}_i\\
        &\ge  9(1+\eps')\sum_{x\in X}h\left(\frac{r(x)}{9}\right)+9(1+\eps')\sum_{i=1}^k i(\bm{\tildw}_i-\bm{\tildw}_{i+1})\bm{t}_i\\
        &=  9(1+\eps')\sum_{x\in X}\sum_{i=1}^{k} (\bm{\tildw}_i-\bm{\tildw}_{i+1})\left(\frac{r(x)}{9}\dotdiv \bm{t}_i\right)+9(1+\eps')\sum_{i=1}^k i(\bm{\tildw}_i-\bm{\tildw}_{i+1})\bm{t}_i\\ 
        &\ge (1+\eps') \sum_{x\in X}\sum_{i=1}^{k} (\bm{\tildw}_i-\bm{\tildw}_{i+1})\left(r(x)\dotdiv 9\bm{t}_i\right)+(1+\eps')\sum_{i=1}^k i(\bm{\tildw}_i-\bm{\tildw}_{i+1})9\bm{t}_i\\ 
        &=(1+\eps')\proxyz{\bm{r}(X)}{\tildw}{\bm{9t}}\ge (1+\eps')\ord{\tildw}{\bm{r}(X)} \ge \ord{\bm{w}}{\bm{r}(X)}
    \end{align*}       
\end{proof}

\subsection{Approximating \MSRDC}
By \Cref{lem:reducetogeneralmsr}, it suffices to approximate \MSRDC{}.
We start with a straightforward observation, namely that it only makes sense to consider radii equal to distances between facilities and clients.

\begin{observation}\label{obs:realdist}
    For any instance $\I = (P,F,\delta,k,h)$ of \MSRDCS{} and for all solutions $(X,r)$ to $\I$ there is a solution $(X,r')$ such that $r'(x)\in R_x$ for all $x\in X$ and \begin{align*}
        h_{r'}(X)\le h_r(X),
    \end{align*}
    where $R_x = \{\dist{x}{p}\mid p\in P\}$.
\end{observation}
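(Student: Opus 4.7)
The proof of this observation is essentially immediate from the fact that $h$ is non-decreasing. The plan is to define $r'$ pointwise. For each facility $x \in X$, consider the ball $B^r(x)$ of clients it covers under the original radius function. In the main case where $B^r(x) \neq \emptyset$, set
\[
r'(x) := \max\{\dist{x}{p} : p \in B^r(x)\}.
\]
The maximum is attained by some client, so $r'(x) \in R_x$ by definition of $R_x$. Moreover, $r'(x) \le r(x)$ by construction, and since $h$ is non-decreasing we obtain $h(r'(x)) \le h(r(x))$. Summing over all $x \in X$ yields the desired inequality $h_{r'}(X) \le h_r(X)$.

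It remains to check feasibility, i.e., $B^{r'}(X) = P$. For any client $p \in B^r(x)$ we have $\dist{x}{p} \le r'(x)$ by definition of $r'(x)$, so $p \in B^{r'}(x)$. Hence $B^{r'}(x) \supseteq B^r(x)$ for every $x \in X$, which gives $B^{r'}(X) \supseteq B^r(X) = P$.

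The only minor technicality is the degenerate case in which $B^r(x) = \emptyset$ for some $x \in X$: then $r(x)$ is strictly smaller than $\dist{x}{p}$ for every $p \in P$, so there is no value in $R_x$ that is at most $r(x)$. However, such a facility is vacuous for coverage; we may assume without loss of generality that this situation does not arise (e.g., by adopting the convention $0 \in R_x$ and setting $r'(x) = 0$, or by replacing $x$ with any other facility in $F$ playing the role of a duplicate whose radius we set analogously). This cleanup step does not affect the essence of the argument, and the main bound $h_{r'}(X) \le h_r(X)$ together with the coverage property goes through unchanged.
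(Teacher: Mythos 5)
The paper does not prove this observation (it is labeled ``straightforward''), so there is no official argument to compare against; your proof is the natural one and it is correct. Setting $r'(x)$ to the farthest covered client yields $r'(x) \le r(x)$, hence $h(r'(x)) \le h(r(x))$ by monotonicity of $h$, and the balls themselves are unchanged ($B^{r'}(x) = B^r(x)$), so coverage is preserved. The empty-ball edge case you flag is a genuine, if minor, imprecision in the statement as written: if $B^r(x) = \emptyset$ and no client is collocated with $x$, then every member of $R_x$ exceeds $r(x)$, so one cannot simultaneously have $r'(x) \in R_x$ and $h(r'(x)) \le h(r(x))$ without an extra convention. Your fix (augment $R_x$ with $0$, or simply assume the optimum has no vacuous balls) is exactly how the paper implicitly uses the observation, since it only invokes it to normalize the optimal solution before guessing.
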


\subsubsection{Guessing the Largest Balls}\label{subsubsec:guessz}
As in \Cref{sec:topl}, a technical nuance of our analysis is that we are not able to bound the cost regarding one ``special'' ball.
For fix that, we first guess the largest balls from the optimal solution, ensuring that in the reduced instance it suffices to focus on low-cost balls.

Algorithm~\ref{alg:guessoptz} guesses the $O(\nicefrac{1}{\eps})$ facilities (and their respective radii) with the largest radii, from the optimal solution $\X^*_\I=(X^*_\I,r^*_\I)$ of instance $\I$.
We assume without loss of generality that for all $x\in X^*_\I$, we have $r(x)=\dist{p}{x}$ for some $p\in P$ by~\Cref{obs:realdist}.

\begin{algorithm2e}[!ht]
  \SetKwFunction{Guess}{GUESS}

  \setcounter{AlgoLine}{0}
  \SetKwProg{procedure}{Procedure}{}{}
  
   \procedure{\Guess{$\I = (P,F,\delta,k,\ell), \eps$}}{
  $t\gets \ceil{\nicefrac{1}{\eps}}$\;
  \ForEach{$T\in \binom{F}{t}$ and $r\colon T\rightarrow \{\dist{p}{x}\mid p\in P, x\in T\} $}{
    \textbf{output} $(T,r)$\;
    }
  }

\caption{Guessing the largest balls.}
\label{alg:guessoptz}

\end{algorithm2e}

We claim that at least one of the pairs that Algorithm~\ref{alg:guessoptz} outputs is a correct guess.
Therefore, all radii of facilities outside of the guessed facilities have small radii. 
We omit the proof as it is very similar to the proof of~\Cref{lem:guessopt}.

\begin{lemma}\label{lem:guessoptz}
    Algorithm~\ref{alg:guessoptz} outputs in time $n^{O(\nicefrac{1}{\eps})}$ a list of pairs so that at least one pair, say $(T,r)$, satisfies
    \begin{itemize}
        \item $T\subseteq X^*_{\I}$ and for all $x\in T$ it holds that $r(x)=r^{*}_\I(x)$,
        \item For all $x\in X^*_\I\setminus T$ it holds that $r^*_\I(x) \le  \min_{x'\in T}  r(x')$
        \item $\min_{x'\in T} h(r(x')) \le \eps\cdot\OPT_\I$. 
    \end{itemize}
\end{lemma}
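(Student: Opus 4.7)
The plan is to mirror the proof of Lemma~\ref{lem:guessopt}, with only minor adaptations for the MSRDC setting (no connection cost and no scaling factor $\rho$). First I would observe that Algorithm~\ref{alg:guessoptz} enumerates all $t$-element subsets $T\subseteq F$ with $t=\lceil 1/\eps\rceil$, together with all radius functions whose values lie in the set of facility-client distances. By Observation~\ref{obs:realdist} we may assume that $r^*_\I$ takes only such values, so one specific iteration outputs the pair $(T,r)$ where $T$ comprises the $t$ facilities of $X^*_\I$ with the largest values of $h(r^*_\I(\cdot))$ and $r$ agrees with $r^*_\I$ on $T$. Because $h$ is non-decreasing, this $T$ equivalently collects the $t$ facilities of $X^*_\I$ with the largest radii (after consistent tie-breaking). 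This immediately yields the first two bullet points: $T\subseteq X^*_\I$, $r(x)=r^*_\I(x)$ for $x\in T$, and $r^*_\I(x)\le \min_{x'\in T}r(x')$ for every $x\in X^*_\I\setminus T$.

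For the third bullet point I would apply the same averaging argument as in the proof of Lemma~\ref{lem:guessopt}:
\begin{align*}
\min_{x'\in T}h(r(x'))\le \frac{\sum_{x'\in T}h(r(x'))}{|T|}\le \frac{\sum_{x\in X^*_\I}h(r^*_\I(x))}{\lceil 1/\eps\rceil}\le \eps\cdot\OPT_\I.
\end{align*}
The first inequality bounds a minimum by a mean. The second uses $T\subseteq X^*_\I$ together with $h\ge 0$ (which holds because $h\colon\nnr\to\nnr$ is non-decreasing). The third uses the defining identity $\sum_{x\in X^*_\I}h(r^*_\I(x))=\OPT_\I$.

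The running time $n^{O(1/\eps)}$ follows from the fact that there are $\binom{|F|}{t}\le n^{O(1/\eps)}$ choices of $T$ and at most $n^{O(1/\eps)}$ radius assignments per $T$. I do not foresee a real obstacle here: the only conceptual changes from Lemma~\ref{lem:guessopt} are that $h(r(x))$ replaces $\rho\cdot r(x)$ and the constant factor of $3$ disappears, which is precisely the reason Algorithm~\ref{alg:guessoptz} uses $t=\lceil 1/\eps\rceil$ instead of $\lceil 3/\eps\rceil$.
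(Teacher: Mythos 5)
Your proof is correct and matches the approach the paper intends: the paper explicitly omits the proof of this lemma, stating only that it is "very similar to the proof of Lemma~\ref{lem:guessopt}", and your argument is exactly that adaptation, with $h(r(\cdot))$ in place of $\rho\cdot r(\cdot)$ and $\lceil 1/\eps\rceil$ in place of $\lceil 3/\eps\rceil$. The only cosmetic redundancy is defining $T$ via the largest $h$-values and then noting this coincides with the largest radii; one could simply take the $t$ facilities of $X^*_\I$ with the largest radii directly (as in Lemma~\ref{lem:guessopt}), since the averaging bound does not require $T$ to maximize the $h$-values, only $T\subseteq X^*_\I$ and $h\ge 0$.
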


\subsubsection{LMP Approximation}
In this section we define the facility location version of \MSRDCS{} and show a special type of approximation algorithm for it.

\begin{Definition}[\FLMSRDCS{}]
    An instance $\I = (P,F,\delta,h,\lambda)$ consists of a point set $P$, a set $F$ of facilities, a metric $\delta: (P\cup F) \times (P\cup F) \rightarrow \nnr$, a non-decreasing cost function $h:\nnr\rightarrow \nnr$ and an opening cost $\lambda \ge 0$. 
    The goal is to find a solution $\X=(X,r)$ containing a set $X\subseteq F$ of facilities with radii $r:X\rightarrow \nnr$ such that $B^r(X)=P$ which minimizes 
    \begin{align*}
       h_r(X)+|X|\lambda = \sum_{x\in X}h(r(x)) + |X|\lambda.
    \end{align*}
\end{Definition}

\Cref{fig:MSRLP} shows an LP-relaxation of \MSRDCS{} where the distances are bounded by an additional parameter $\mu \ge 0$, that is $R_x^\mu =\{\dist{x}{p}\mid p\in P \wedge h(\dist{x}{p})\le \mu\}$.
Note that for $\mu \ge \max_{x\in X^*_\I}h(r^*_\I(x))$ the optimal solution $\X^*_\I$ is a valid solution to the LP.
\Cref{fig:dualmsr} show the corresponding dual-LP. 
They are the basis for Algorithm~\ref{alg:lmpmsr}, which works as follows:
    \begin{enumerate}
        \item Increase all $\alpha_p$ simultaneously at the same rate. When a constraint gets tight, that is $\sum_{\dist{x}{p}\le r}\alpha_p = h(r)+\lambda$ for some $x\in F,r\in R_x^\mu$, stop increasing the corresponding $\alpha_p$s, open facility $x$ temporarily and set $r'(x)\gets \max\{r'(x),r\}$.
        Stop when no $\alpha_p$s are increasing.
        \item Let $Y$ be the set of temporarily opened facilities. Consider the graph $G$ on $Y$ where $x_1,x_2\in Y$ are connected iff $B^{r'}(x_1)\cap B^{r'}(x_2) \ne \emptyset$.
        \item Construct a maximum independent set $X$ by greedily selecting the temporarily opened facility with the largest radius $r'(x)$.
        \item Set $r(x)\gets 3r'(x)$ for all $x\in X$ and return $(X,r)$.
    \end{enumerate}

\begin{figure}[!ht]
\caption{\,LP for \FLMSRDCS{}.} \label{fig:MSRLP}
\vskip -1.5ex\rule{\linewidth}{.5pt}
\begin{mini}|s|<b>{}{\sum_{x\in F,r\in R^\mu_x}u_x^r(\lambda+h(r))}{}{}
    \addConstraint{\sum_{x\in F,r\in R^\mu_x:\dist{p}{x}\le r}u_x^r }{\geq 1\qquad}{\forall p \in P}
\end{mini}
\rule{\linewidth}{.5pt}
\end{figure}

\begin{figure}[!ht]
\caption{\,Dual-LP for Figure~\ref{fig:MSRLP}.} \label{fig:dualmsr}
\vskip -1.5ex\rule{\linewidth}{.5pt}
\begin{maxi}|s|<b>{}{\sum_{p\in P} \alpha_p}{}{}
    \addConstraint{\sum_{p\in P:\dist{p}{x}\le r}\alpha_{p}}{\leq \lambda+h(r)\qquad}{\forall x \in F, r\in R^\mu_x}
\end{maxi}
\rule{\linewidth}{.5pt}
\end{figure}

\begin{algorithm2e}[!ht]
  \SetKwFunction{Msrdc}{MSRDCF}

  \setcounter{AlgoLine}{0}
  \SetKwProg{procedure}{Procedure}{}{}
  
   \procedure{\Msrdc{$\I = (P,F,\delta,h,\lambda), \mu$}}{
  $Y\gets \emptyset$\;
  $r'(x)\gets 0$ for all $x\in F$\;
  $\alpha_p \gets 0$ for all $p\in P$\;
  Start increasing all $\alpha_p$ simultaneously at the same rate\;
  \While{$\alpha_p$ is increasing for some $p\in P$}{
    \If{$\sum_{p\in P:\dist{p}{x}\le r}\alpha_{p}= \lambda + h(r)$ for some $x\in F,r\in R_x^\mu$}{
        $Y\gets Y\cup \{x\}$\;
        $r'(x)\gets \max\{r'(x), r\}$\;
        Stop increasing $\alpha_p$ for $p$ such that $\dist{p}{x}\le r$\;
    }
    }
  }
  $X\gets \emptyset$\;
  \While{$Y\ne \emptyset$}{
  Pick $x = \arg\max_{x\in Y}r'(x)$\;
  $X\gets X\cup \{x\}$\;
  $r(x)\gets 3r'(x)$\;
  $Y\gets Y \setminus \{x'\in Y\mid B^{r'}(x)\cap B^{r'}(x')\ne\emptyset\}$\;
  }
  \Return $(X,r)$\;

\caption{Approximate \FLMSRDCS{}.}
\label{alg:lmpmsr}

\end{algorithm2e}

\begin{lemma}\label{lem:LMP}
    Given an instance $\I=(P,F,\delta,k,h)$ of \MSRDCS, $\mu \ge \max_{x\in X^*_\I}h(r^{*}_\I(x))$ and $\lambda \ge 0$, Algorithm~\ref{alg:lmpmsr} returns a  pair $\X=(X,r)$ in polynomial time such that
    \begin{align*}
        \sum_{x\in X}h\left(\frac{r(x)}{3}\right)+|X|\lambda \le h(\bm{r}^*_\I(X^*_\I)) + k\lambda
    \end{align*}
    and for all facilities $x\in X$ we have $h(\frac{r(x)}{3})\le \mu$.
\end{lemma}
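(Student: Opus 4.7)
The plan is to analyze Algorithm 4 through LP duality, using the primal LP in Figure~\ref{fig:MSRLP} and its dual in Figure~\ref{fig:dualmsr}. First I would observe that the dual vector $\alpha$ maintained during the ascent is feasible at termination: whenever a constraint becomes tight the algorithm freezes every $\alpha_p$ appearing on its left-hand side, which prevents any future violation. Next, the hypothesis $\mu \ge \max_{x \in X^*_\I} h(r^*_\I(x))$ ensures that the optimal MSRDCS solution $(X^*_\I,r^*_\I)$ is itself a feasible integral primal (its radii lie in the restricted sets $R^\mu_x$), and it has value $h(\bm{r}^*_\I(X^*_\I)) + k\lambda$. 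By weak duality, $\sum_p \alpha_p \le h(\bm{r}^*_\I(X^*_\I)) + k\lambda$.

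The core accounting step is to relate the output cost to $\sum_p \alpha_p$. For every $x \in X$ returned by the pruning phase, define $P_x = \{p \in P : \delta(p,x) \le r'(x)\}$ and show that $\sum_{p \in P_x} \alpha_p = \lambda + h(r'(x))$ at termination: the $\alpha_p$ for $p \in P_x$ are all frozen at (or before) the moment $r'(x)$ last increased, and that event is triggered exactly by this equality, so it is preserved until the end. I would then argue that the sets $\{P_x\}_{x\in X}$ are pairwise disjoint --- a common client $p$ would witness $B^{r'}(x_1) \cap B^{r'}(x_2) \ne \emptyset$ for two facilities in $X$, contradicting the greedy construction of a maximal independent set in the overlap graph. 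Summing across $x \in X$ and using $r(x)=3r'(x)$ then gives
\[
  \sum_{x\in X} h\!\left(\tfrac{r(x)}{3}\right) + |X|\lambda
  = \sum_{x\in X}\bigl(\lambda + h(r'(x))\bigr)
  \le \sum_{p\in P}\alpha_p
  \le h(\bm{r}^*_\I(X^*_\I)) + k\lambda,
\]
as required. The bound $h(r(x)/3) = h(r'(x)) \le \mu$ follows immediately from $r'(x) \in R^\mu_x$ and the monotonicity of $h$.

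It remains to certify that $(X,r)$ is a feasible FL-MSRDCS solution, i.e.\ $B^r(X) = P$, and this is where the factor~$3$ on the output radii plays its role. For every $p \in P$ the value $\alpha_p$ must freeze at some finite moment (the hypothesis on $\mu$ again guarantees that some constraint involving $p$ tightens in bounded time), so $p \in B^{r'}(x')$ for some $x' \in Y$. If $x' \in X$ we are done; otherwise $x'$ was pruned because some $x \in X$ with $r'(x) \ge r'(x')$ was selected first, and $B^{r'}(x) \cap B^{r'}(x')$ contains a witness $q$. Two triangle-inequality applications then yield $\delta(p,x) \le \delta(p,x') + \delta(x',q) + \delta(q,x) \le 2r'(x') + r'(x) \le 3r'(x) = r(x)$, so $p \in B^{r(x)}(x)$.

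The main obstacle is precisely this coverage argument: it is the interaction between the primal-dual tightness (which drives the cost bound) and the triangle-inequality overhead via the witness client (which drives feasibility) that forces the factor~$3$ blow-up of radii; any smaller scaling would either break coverage or the LMP guarantee. Polynomial runtime follows by an event-driven simulation of the dual ascent in the style of Jain and Vazirani~\cite{JainVaz}, since there are only polynomially many constraints and hence events.
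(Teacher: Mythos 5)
Your proof is correct and follows essentially the same primal-dual argument as the paper: dual feasibility is maintained during the ascent, the greedy pruning ensures the contribution sets $P_x$ are disjoint (so the tight-constraint charges can be summed without double-counting), weak duality against the feasible optimum gives the budget, and two triangle-inequality applications through the witness client justify tripling the radii for coverage. The only cosmetic difference is that you state the tight-constraint relation $\sum_{p\in P_x}\alpha_p=\lambda+h(r'(x))$ as an equality while the paper only uses the (sufficient) inequality $\lambda+h(r'(x))\le\sum_{p\in B^{r'}(x)}\alpha_p$; both are valid, and this does not change the argument.
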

\begin{proof}
    We start by showing that all radii in $X$ are small, that is $h(r(x)/3)\le \mu$ for all $x\in X$. 
    As for all $x\in X$ we have $r'(x) \in R_x^\mu$, then $h(r'(x))\le \mu$. 
    Since the radii are tripled in the construction of $X$, we conclude $h(r(x)/3)\le \mu$ for all $x\in X$.

    We now bound the cost of $X$.
    As we choose the facilities in $X$ to be independent, we know 
    \begin{align*}
        |X|\lambda +\sum_{x\in X}h\left(\frac{r(x)}{3}\right)= |X|\lambda +\sum_{x\in X}h(r'(x))\le \sum_{x\in X}\sum_{p\in B^{r'}(x)}\alpha_p\le \sum_{p\in P}\alpha_p\le h(\bm{r}^*_\I(X^*_\I))+k\lambda.
    \end{align*}
    
    In the remaining part of the proof we show that $B^r(X) = P$. Consider an arbitrary client $p\in P$.
    By the definition of $Y$ there is an $x\in Y$ such that $p\in B^{r'}(x)$. 
    If $x\notin X$, then there is a $x' \in X$ and a $p'\in P$ such that $p'\in B^{r'}(x)\cap B^{r'}(x')$ and $r'(x')\ge r'(x)$. 
    We use triangle inequality to bound $\dist{p}{x'}$.
    \begin{align*}
        \dist{p}{x'}\le \dist{p}{x}+\dist{x}{p'}+\dist{p'}{x'}\le 2r'(x)+r'(x')\le 3r'(x') = r(x')
    \end{align*}
    Thus, $p\in B^r(x)$.
\end{proof}

As is standard, we use the LMP approximation to obtain a bi-point solution, which we then use to obtain our approximation.

\begin{lemma}\label{lem:binarysear}
    Given an instance $\I=(P,F,\delta,k,h)$ of \MSRDCS{}, $\eps >0$, and $\mu \ge \max_{x\in X^*_\I}h(r^{*}_\I(x))$, we can compute a bi-point solution $(\X_1=(X_1,r_1),\X_2=(X_2,r_2))$ of cost
    \begin{align*}
        a\cdot \sum_{x_1\in X_1}h\left(\frac{r_1(x_1)}{3}\right)+b\cdot \sum_{x_2\in X_2}h\left(\frac{r_2(x_2)}{3}\right)\le (1+\eps) \OPT_\I
    \end{align*}
    such that $a,b\ge 0, a+b=1, a|X_1|+b|X_2|=k$, for all $x\in X_1$ it holds that $h(\frac{r_1(x)}{3})\le \mu$ and for all $x\in X_2$ it holds that $h(\frac{r_2(x)}{3})\le \mu$.
\end{lemma}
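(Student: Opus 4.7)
}
The plan is to use the LMP approximation of \Cref{lem:LMP} as a black box inside a binary search on the opening cost $\lambda$, following the standard Jain--Vazirani Lagrangian framework (as already used for \Ballk{} in \Cref{subsubsec:binary}). At $\lambda=0$ the LMP algorithm is free to open every facility, so the returned solution has $|X|\ge k$; at $\lambda$ sufficiently large (e.g.\ $\lambda$ larger than any $h$-value of an entire single-facility solution covering $P$), it only opens a single facility, so $|X|=1\le k$. Thus, as $\lambda$ grows, $|X|$ is (weakly) non-increasing and at some point crosses the value $k$. If some intermediate $\lambda$ yields $|X|=k$ exactly, we are done by taking $\X_1=\X_2$; otherwise we will locate two values $\lambda_1>\lambda_2$ close to each other such that running Algorithm~\ref{alg:lmpmsr} with parameter~$\mu$ returns solutions $\X_1=(X_1,r_1)$ with $|X_1|\le k$ and $\X_2=(X_2,r_2)$ with $|X_2|>k$. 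Setting $a=(|X_2|-k)/(|X_2|-|X_1|)$ and $b=(k-|X_1|)/(|X_2|-|X_1|)$ gives the required convex combination with $a+b=1$ and $a|X_1|+b|X_2|=k$. The bound $h(r_i(x)/3)\le \mu$ for $x\in X_i$ is inherited directly from \Cref{lem:LMP} applied at the corresponding $\lambda_i$.

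To bound the cost of the bi-point solution, I would plug the two LMP guarantees into the convex combination. From \Cref{lem:LMP} we have
\begin{align*}
\sum_{x\in X_1}h(r_1(x)/3)\le h(\bm{r}^*_\I(X^*_\I))+(k-|X_1|)\lambda_1,\quad \sum_{x\in X_2}h(r_2(x)/3)\le h(\bm{r}^*_\I(X^*_\I))-(|X_2|-k)\lambda_2.
\end{align*}
Multiplying by $a$ and $b$ respectively and adding, and using $a(k-|X_1|)=b(|X_2|-k)$, the cross term telescopes to $a(k-|X_1|)(\lambda_1-\lambda_2)$. Since $0\le a(k-|X_1|)\le k$, the combined cost is at most $\OPT_\I+k(\lambda_1-\lambda_2)$. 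Thus, to achieve the factor $(1+\eps)$ in the lemma, it suffices to drive $k(\lambda_1-\lambda_2)\le \eps\cdot\OPT_\I$.

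The main technical point, and essentially the only obstacle, is carrying out the binary search so that $\lambda_1-\lambda_2$ becomes small enough relative to an unknown $\OPT_\I$. There are two standard routes: (i) one shows that only polynomially many values of $\lambda$ are ``interesting,'' namely those at which some dual constraint in \Cref{fig:dualmsr} becomes tight simultaneously with another, so a combinatorial binary search over these critical values finds adjacent $\lambda_1,\lambda_2$ where the output cardinality jumps across $k$ after a polynomial number of LMP invocations; (ii) alternatively, one first guesses $\OPT_\I$ up to a factor of $(1+\eps)$ by enumerating powers of $(1+\eps)$ between the minimum and maximum possible objective values (a polynomially bounded range since $h$ is evaluated only on distances between input points) and then performs numerical bisection on $\lambda\in[0,\OPT_\I]$ with precision $\eps\cdot\OPT_\I/k$, which requires only $O(\log(k/\eps))$ LMP calls per guess. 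Either approach yields $k(\lambda_1-\lambda_2)\le\eps\cdot\OPT_\I$ in polynomial time and therefore the desired guarantee
\begin{align*}
a\cdot\sum_{x_1\in X_1}h(r_1(x_1)/3)+b\cdot\sum_{x_2\in X_2}h(r_2(x_2)/3)\le (1+\eps)\OPT_\I,
\end{align*}
completing the proof. The rest of the argument (feasibility of $\X_1,\X_2$, the per-facility bound $h(r_i(x)/3)\le\mu$, and the convex combination identities) is immediate from \Cref{lem:LMP} and elementary algebra.
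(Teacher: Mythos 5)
Your proposal follows the same Jain--Vazirani Lagrangian framework the paper invokes: binary-search on $\lambda$, run the LMP algorithm of \Cref{lem:LMP} at each $\lambda$, locate two nearby values $\lambda_1>\lambda_2$ with $|X_1|\le k<|X_2|$, form the standard convex combination, and use the two LMP inequalities together with $a(k-|X_1|)=b(|X_2|-k)$ so that the $\lambda$-terms nearly cancel, leaving an error proportional to $\lambda_1-\lambda_2$. This is exactly the argument the paper references ("proven similar to \Cref{lem:bipoint}"), so the proposal is essentially correct and matches the paper's intent. Two small remarks. First, the paper does not rely on the LMP algorithm returning $\ge k$ (resp.\ $\le k$) facilities at $\lambda=0$ (resp.\ large $\lambda$) directly; it simply \emph{pads} the returned solution with zero-radius facilities at $\lambda_2=0$ and \emph{truncates} it at $\lambda_1$ large, whereas you argue about the algorithm's internal behavior at the extremes. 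Padding is cleaner and avoids reasoning about the primal-dual algorithm's output cardinality. (For large $\lambda$ your claim also follows directly from the LMP inequality: $|X|\lambda\le \OPT_\I + k\lambda$ forces $|X|\le k$ once $\lambda>\OPT_\I$, so that half is fine.) Second, on precision: your option (i) (combinatorial enumeration of breakpoints) is not justified in general for primal-dual algorithms, whose output need not change only at a polynomial set of critical $\lambda$, so I would drop it. Your option (ii) (guess $\OPT$ to within $(1+\eps)$ and bisect to precision $\eps\OPT/k$) works; the paper's \Cref{lem:bipoint} instead bisects to a fixed, computable precision of the form $\eps\delta_{\min}/(c|F|)$, which avoids the outer guessing loop and is the route you should emulate here, replacing $\delta_{\min}$ by the smallest positive value of $h$ on the allowed radii.
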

This is proven similar to~\Cref{lem:bipoint}, using \Cref{lem:LMP}.

\begin{lemma}\label{lem:bipointround}
    Given an instance $\I=(P,F,\delta,k,h)$ of \MSRDCS{} and $\eps >0$, we can compute a solution $\X=(X,r)$ such that 
    \begin{align*}
        \sum_{x\in X}h\left(\frac{r(x)}{9}\right)\le (2+3\eps)\OPT_\I.
    \end{align*}
\end{lemma}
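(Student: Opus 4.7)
The plan is to adapt the bi-point rounding framework from Section~\ref{subsubsec:bipointrounding} (developed for \Ballk{}) to the \MSRDCS{} setting. Two features of the \MSRDCS{} objective permit simplifications: there are no client-to-facility connection costs, so the knapsack LP will encode only ball opening costs. A third feature makes it slightly harder: $h$ is only monotone and non-decreasing, not subadditive or convex. This monotonicity-only property is what forces the additional factor-$3$ slack inside $h$, explaining the $r(x)/9$ in the final bound.

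As a first step I would invoke Algorithm~\ref{alg:guessoptz} and Lemma~\ref{lem:guessoptz} to guess (by iterating over all choices) a set $T$ of the $O(1/\eps)$ largest balls of the optimum solution, with $\mu := \min_{x\in T} h(r(x)) \le \eps\,\OPT_\I$ and $h(r^*_\I(x)) \le \mu$ for every $x\in X^*_\I\setminus T$. I would then remove $T$ from the instance together with the clients covered by the guessed balls, reducing $k$ by $|T|$, and invoke Lemma~\ref{lem:binarysear} on the residual instance with parameter $\mu$ to obtain a bi-point $(\X_1,\X_2)$ satisfying $h(r_i(x)/3) \le \mu \le \eps\,\OPT_\I$ for every $x\in X_1 \cup X_2$ and $a\sum_{x\in X_1} h(r_1(x)/3) + b\sum_{x\in X_2} h(r_2(x)/3) \le (1+\eps)\,\OPT_\text{res}$, where $\OPT_\text{res} = \OPT_\I - \sum_{x\in T} h(r^*_\I(x))$. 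If either $a\ge 1/2$ or $\costz{\X_1}\le \costz{\X_2}$, then $\X_1$ together with $T$ already yields a feasible solution of cost at most $2(1+\eps)\,\OPT_\I$ and we are done.

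Otherwise I would group facilities as in Section~\ref{subsubsec:bipointrounding}: for each $x_2\in X_2$ let $cl_1(x_2)$ be its closest facility in $X_1$, and set $G_{x_1}=\{x_2 : cl_1(x_2) = x_1\}$ with $M_{x_1}=\max_{x_2\in G_{x_1}} r_2(x_2)$. For each $x_1$ the binary choice is between opening all of $G_{x_1}$ (closing $x_1$) or opening only $x_1$ with the enlarged radius $r_1(x_1)+2M_{x_1}$; two applications of the triangle inequality and the choice of $cl_1$ show that the enlarged ball covers every client previously covered by $G_{x_1}$. The decision is made by a knapsack LP analogous to Figure~\ref{fig:knapsackLP}, with variables $u_{x_1}\in[0,1]$ (where $1$ denotes opening $G_{x_1}$), constraint $\sum_{x_1}u_{x_1}(|G_{x_1}|-1)\le k-|T|-|X_1|$ enforcing cardinality, and objective minimizing the total rounded cost. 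Using only monotonicity of $h$ together with $r_1(x_1)+2M_{x_1}\le 3\max\{r_1(x_1),M_{x_1}\}$, the key estimate is
\begin{align*}
h\!\left(\tfrac{r_1(x_1)+2M_{x_1}}{9}\right) \le h\!\left(\tfrac{\max\{r_1(x_1),M_{x_1}\}}{3}\right) \le h\!\left(\tfrac{r_1(x_1)}{3}\right)+\sum_{x_2\in G_{x_1}} h\!\left(\tfrac{r_2(x_2)}{3}\right).
\end{align*}
Setting $u_{x_1}=b$ for every $x_1$ is feasible because $b(|X_2|-|X_1|)=k-|T|-|X_1|$, and yields a fractional LP value at most $a\sum h(r_1/3)+\sum h(r_2/3)$; combined with $a\le b$ and the bi-point bound this is at most $2(1+\eps)\,\OPT_\text{res}$.

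Finally, any optimal knapsack-LP solution has at most one fractional variable $u^*_{\tildx}$, which I would round down to $0$ (i.e., open only $\tildx$ with the enlarged radius); this preserves $|X|\le k-|T|$, so together with $T$ the total is at most $k$ facilities. The main obstacle of the proof is bounding this single rounding penalty, which equals $h((r_1(\tildx)+2M_{\tildx})/9)\le h(r_1(\tildx)/3)+h(M_{\tildx}/3)$. Here the guessing step is essential: the bi-point guarantee gives $h(r_1(\tildx)/3)\le \mu \le \eps\,\OPT_\I$, and since every $x_2\in G_{\tildx}$ satisfies $h(r_2(x_2)/3)\le \mu$ we also have $h(M_{\tildx}/3)\le \eps\,\OPT_\I$, so the penalty is at most $2\eps\,\OPT_\I$. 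Summing with the $2(1+\eps)\,\OPT_\I$ fractional-LP bound and re-tuning $\eps$ yields $\sum_{x\in X} h(r(x)/9) \le (2+3\eps)\,\OPT_\I$ as required.
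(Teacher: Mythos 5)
Your proposal is correct and follows essentially the same route as the paper's proof: guess the largest balls via Lemma~\ref{lem:guessoptz}, reduce $k$ and the client set, invoke Lemma~\ref{lem:binarysear} to get a bi-point with the per-facility bound $h(r_i(x)/3)\le\mu\le\eps\,\OPT_\I$, group $X_2$-facilities to $X_1$-facilities, set up the knapsack LP analogous to Figure~\ref{fig:knapsackLPz}, round the unique fractional variable down (opening only $\tildx$ with the enlarged radius), bound the LP value by setting $u=b$, and use the guessing to absorb the penalty from the special facility. The only cosmetic differences are that the paper sums over $\Tilde{X_1}$ (facilities with nonempty groups) rather than all of $X_1$, it phrases the knapsack LP as maximizing savings rather than minimizing cost, and it bounds the special-facility term by $h(r(\tildx)/9)\le h(\max\{r_1(\tildx),M_{\tildx}\}/3)\le\mu\le\eps\,\OPT_\I$ directly (giving $\eps$ rather than your $2\eps$, which you correctly note can be absorbed by retuning $\eps$); the key monotonicity-only estimate $h((r_1+2M)/9)\le h(\max\{r_1,M\}/3)\le h(r_1/3)+h(M/3)$ is exactly the one the paper implicitly uses.
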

\begin{proof}
We first use \Cref{lem:guessoptz} to guess the largest $\lceil \nicefrac{1}{\eps} \rceil$ balls in the optimal solution; let $r_m$ be the minimum out of these radii, and $\mu = h(r_m)$.
Let $P'\subseteq P$ be the clients not covered by these balls and $k'=k-\lceil \nicefrac{1}{\eps} \rceil$.
    We are thus left with solving the reduced instance 
    $\I'=(P',F,\delta,k',h)$.
    Furthermore, by \Cref{lem:guessoptz} it follows that $\mu \in [\max_{x\in X^*_{\I'}}h(r^{*}_{\I'}(x)), \eps OPT_{\I}]$.

    We can now use \Cref{lem:binarysear} to obtain a bi-point solution $(\X_1=(X_1,r_1),\X_2=(X_2,r_2))$ of cost

    \begin{align*}
        a\cdot \sum_{x_1\in X_1}h\left(\frac{r_1(x_1)}{3}\right)+b\cdot \sum_{x_2\in X_2}h\left(\frac{r_2(x_2)}{3}\right)\le (1+\eps) \OPT_{\I'}
    \end{align*}
    such that $a,b\ge 0, a+b=1, a|X_1|+b|X_2|=k'$, for all $x\in X_1$ it holds that $h(\frac{r_1(x)}{3})\le \eps\OPT_\I$ and for all $x\in X_2$ it holds that $h(\frac{r_2(x)}{3})\le \eps\OPT_\I$.

    Assume, w.l.o.g. $\sum_{x_1\in X_1}h\left(\frac{r_1(x_1)}{3}\right) \ge \sum_{x_2\in X_2}h\left(\frac{r_2(x_2)}{3}\right)$.
    For a facility $x\in X_2$ , let $cl_1(x)$ be the facility $x_1\in X_1$ such that $B^{r_1}(x_1)\cap B^{r_2}(x)\ne \emptyset$ (we break ties by picking the one minimizing $\dist{x}{x_1}$, and arbitrarily but consistently in case these are still equal).
For a facility $x\in X_1$, let $G_{x}$ denote the set of facilities $x_2$ in $X_2$ such that $cl_1(x_2)=x$.
Let $\Tilde{X_1}$ be the set of facilities $x_1$ such that $G_{x_1}\ne \emptyset$.
The maximum radius in $G_x$ is $M_x = \max_{x_2 \in G_x} r_2(x_2)$.

We consider two options of covering the clients in $B^{r_2}(G_{x_1})$ for $x_1\in\Tilde{X_1}$. Either we open $x_1$ with radius $r(x_1)=r_1(x_1)+2M_{x_1}$ or we open all facilities in $G_{x_1}$ with radius $r(x_2) = r_2(x_2)$ for all $x_2\in G_{x_1}$.

This motivates the LP in~\Cref{fig:knapsackLPz}, where we (fractionally) open at most $k$ facilities. When $u_{x_1}=0$ we can think of it as opening the facility $x_1$ and when $u_{x_1}=1$ we open all facilities in $G_{x_1}$.
The intuition behind the objective function is that it represents the amount saved, compared to the solution that opens all facilities from $\tilde{X_1}$.
Since this is a knapsack LP, it has an optimal solution where all but one variables are integral. 
Let $u_{\tilde{x}}$ be the only fractional variable.
Then, we include $\tilde{x}$ in $X$.

For all other $x_1\in \Tilde{X_1} \setminus \{\tilde{x}\}$: if $u_{x_1}=1$ then we include all facilities from $G_{x_1}$ in $X$. 
Else $u_{x_1}=0$ and we include $x_1$ in $X$.

    \begin{figure}[!ht]
\caption{\,LP to decide which facilities from $\tilde{X_1},X_2$ to open.} \label{fig:knapsackLPz}
\vskip -1.5ex\rule{\linewidth}{.5pt}
\begin{maxi}|s|<b>{}{\sum_{x_1\in \Tilde{X_1}} u_{x_1}\left(h\left(\frac{r_1(x_1)+2M_{x_1}}{9}\right)-\sum_{x_2\in G_{x_1}}h\left(\frac{r_2(x_2)}{9} \right)\right) }{}{}
    \addConstraint{\sum_{x_1\in \Tilde{X_1}} u_{x_1}(|G_{x_1}|-1)}{\leq k'-|\Tilde{X_1}|}{}
    \addConstraint{u_{x_1}}{\in [0,1]\quad}{\forall x_1 \in \Tilde{X_1}}
\end{maxi}
\rule{\linewidth}{.5pt}
\end{figure}
    \begin{claim}
        $|X|\le k'$
    \end{claim}
    \begin{proof}
        When $u_{x_1}=1$ we open $|G_{x_1}|$ facilities and when $u_{x_1}=0$ we open one facility.
        Therefore, when $x_1\ne \tilde{x}$ we open $\sum_{x_1\in \Tilde{X_1}\setminus \{\tilde{x}\}}u_{x_1}(|G_{x_1}|-1)+(|\Tilde{X_1}|-1)$ facilities in total.
        Regarding the group of $\tilde{x}$: we open at most one facility.

        We conclude $\sum_{x_1\in \Tilde{X_1}\setminus \{\tilde{x}\}}u_{x_1}(|G_{x_1}|-1) + (|\Tilde{X_1}|-1) + 1 \le (k'-|\Tilde{X_1}|) + |\Tilde{X_1}| = k'$
    \end{proof}
    \begin{claim}
        The optimal solution to the LP has value
        \begin{align*}
            U \ge b\sum_{x_1\in \Tilde{X_1}}\left(h\left(\frac{r_1(x_1)+2M_{x_1}}{9}\right)-\sum_{x_2\in G_{x_1}}h\left(\frac{r_2(x_2)}{9} \right)\right) .
        \end{align*}
    \end{claim}
    \begin{proof}
        It suffices to show that setting $u_{x_1}=b$ for all $x_1$ is a valid solution to the LP.
        
        We have $\sum_{x\in \tilde{X_1}} b(|G_x|-1) = b (|X_2|-|\tilde{X_1}|) = b|X_2| - (1-a) |\tilde{X_1}| \le a|X_1| + b|X_2| - |\tilde{X_1}| = k'-|\tilde{X_1}|$.
        Also $b\in [0,1]$, which concludes the proof.
    \end{proof}

    We now analyze the cost of $(X,r)$.
    \begin{align*}
        \sum_{x\in X}h\left(\frac{r(x)}{9}\right) \le& \sum_{x_1\in \Tilde{X_1}}h\left(\frac{r_1(x_1)+2M_{x_1}}{9}\right) - U + h\left(\frac{r(\tilde{x})}{9}\right)\\
        \le& \sum_{x_1\in \Tilde{X_1}}h\left(\frac{r_1(x_1)+2M_{x_1}}{9}\right) - U + \eps OPT_\I\\
        \le& \sum_{x_1\in \Tilde{X_1}}h\left(\frac{r_1(x_1)+2M_{x_1}}{9}\right) - b\sum_{x_1\in \Tilde{X_1}} \left(h\left(\frac{r_1(x_1)+2M_{x_1}}{9}\right)-\sum_{x_2\in G_{x_1}}h\left(\frac{r_2(x_2)}{9} \right)\right) + \eps \OPT_\I\\
        =& \sum_{x_1\in \Tilde{X_1}}\left(a\cdot h\left(\frac{r_1(x_1)+2M_{x_1}}{9}\right) + b \cdot\sum_{x_2\in G_{x_1}} h\left(\frac{r_2(x_2)}{9} \right)\right) + \eps \OPT_\I\\
        \le& \sum_{x_1\in \Tilde{X_1}}\left(a\cdot \left(h\left(\frac{3r_1(x_1)}{9}\right)+ h\left(\frac{3M_{x_1}}{9}\right)\right)+ b \cdot\sum_{x_2\in G_{x_1}} h\left(\frac{r_2(x_2)}{9} \right)\right)  +\eps \OPT_\I\\
        \le& \sum_{x_1\in \Tilde{X_1}}\left(a\cdot h\left(\frac{r_1(x_1)}{3}\right)+  \sum_{x_2\in G_{x_1}} h\left(\frac{r_2(x_2)}{3} \right)\right) + \eps \OPT_\I\\
        \le~& a\cdot\sum_{x_1\in \Tilde{X_1}} h\left(\frac{r_1(x_1)}{3}\right)+  \sum_{x_2\in X_2} h\left(\frac{r_2(x_2)}{3} \right) +\eps \OPT_\I\\
        \le~& (2+2\eps)\OPT_{\I'}+\eps\OPT_\I \\
    \end{align*}
As the cost for the guessed balls is $OPT_\I - OPT_{\I'}$, the result follows.
\end{proof}

\thmlinford*
\begin{proof}
    This is a direct consequence of Lemmas~\ref{lem:reducetogeneralmsr}~and~\ref{lem:bipointround}.
\end{proof}

\section{\texorpdfstring{\NCC{\LP{\infty}}{\textsf{Sym}}}{\NCCH{\textnormal{l inf}}{\textsf{Sym}}}}

In this section we design an $(O(1),O(1))$-bicriteria approximation algorithm for \NCCS{\LP{\infty}}{\textsf{Sym}}, that is an $O(1)$ approximation opening $O(k)$ many facilities.

\begin{theorem}\label{thm:bicriteria}
    There is a universal constant $\alpha >0$ such that given an instance $\I=(P,F,\delta,k,\LP{\infty},g)$ of \NCCS{\LP{\infty}}{\textsf{Sym}} and a $\kappa$-approximate ball-optimization oracle for $g$, we can find a solution $\X=(X,r)$ in polynomial time such that $g(\bm{r}(X)) \le \kappa \alpha \OPT_{\I'}$
    where $\I'=(P,F,\delta,\floor{\nicefrac{k}{\alpha}},\LP{\infty},g')$ where $g'(\bm{x'}) = g(\bm{x})$ for all $\bm{x}'\in \mathbb{R}^{k'}_{\ge0}$ and $\bm{x}\in\mathbb{R}^k_{\ge0}$ where $\bm{x}$ is $\bm{x}'$ padded with zeros.
\end{theorem}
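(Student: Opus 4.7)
The plan is to reduce the problem to the Non-Uniform $k$-Center (NUkC) problem and invoke an $O(1)$-bicriteria approximation for NUkC~\cite{NUkC}. Let $\X^*_{\I'} = (X^*, r^*)$ denote the (unknown) optimum of $\I'$, which uses at most $k' := \floor{k/\alpha}$ balls. First I would guess $\OPT_{\I'}$ up to a factor of $2$: by enumerating $O(\log n)$ geometrically spaced candidate values and invoking the $\kappa$-approximate ball-optimization oracle on each, the loss is bounded by $O(\kappa)$. Then, by rounding every $r^*(x)$ up to the nearest power of $2$, the relevant radii are reduced to at most $O(\log n)$ distinct values $\rho_1 > \rho_2 > \cdots > \rho_t$, and I would guess a profile $(k_i)_{i=1}^t$ specifying how many optimum balls lie in each class, with $\sum_i k_i \le k'$.

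Next, feed each candidate profile $(k_i, \rho_i)_i$ into the NUkC algorithm of~\cite{NUkC}: it returns a cover of $P$ using at most $\beta k_i$ balls of radius $c\rho_i$ per class $i$, for universal constants $\beta$ and $c$. Setting $\alpha := \beta$ guarantees that the total number of output balls is at most $\beta \sum_i k_i \le \beta k' \le k$, so the resulting $(X, r)$ is feasible for $\I$. Across all candidate guesses one would retain the solution of smallest $g$-value, measured through the oracle.

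It remains to bound $g(\bm{r}(X))$. Let $\bm{v}^*$ be the radius vector of $X^*$ zero-padded to length $k$, so $g(\bm{v}^*) = \OPT_{\I'}$, and let $\bm{v}$ be the zero-padded output vector. The key step is to partition the nonzero coordinates of $\bm{v}$ into $\beta$ sub-vectors $\bm{v}^{(1)},\dots,\bm{v}^{(\beta)}$, each containing at most $k_i$ entries of value $\le c\rho_i$ per class. Monotonicity and symmetry of $g$ then yield $g(\bm{v}^{(j)}) \le g(c\bm{v}^*) = c \cdot \OPT_{\I'}$ for each $j$, and subadditivity of the norm gives
\begin{align*}
g(\bm{v}) \le \sum_{j=1}^{\beta} g(\bm{v}^{(j)}) \le \beta c \cdot \OPT_{\I'},
\end{align*}
so $g(\bm{r}(X)) \le O(\kappa)\cdot \OPT_{\I'}$ after absorbing the oracle factor. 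The main obstacle will be enumerating the profiles $(k_i)_i$ within polynomial time: naively there are $\binom{k'+t}{t}$ candidates, so the plan is either to discretize the counts geometrically (losing a further constant that is absorbed into $\alpha$) or to exploit the ball-optimization oracle to guide the search directly, which is precisely where the dependence on $\kappa$ ultimately enters.
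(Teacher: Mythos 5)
Your high-level plan (reduce to Non-Uniform $k$-Center and bound via a sub-vector decomposition plus norm subadditivity) is the same as the paper's, and your cost analysis sketch is essentially sound: split the output radius vector into $\beta$ blocks, dominate each block coordinate-wise by a constant times the optimum's (sorted) radius vector, and sum with the triangle inequality. This matches the role played by the paper's $\le^\beta$ relation and the lemma $\ord{\bm{w}}{\bm{x}}\le\beta^2\ord{\bm{w}'}{\bm{x}'}$.

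However, there is a genuine gap exactly where you flag it: constructing the NUkC instance in polynomial time. Guessing a radius profile $(k_i)_i$ over $t=O(\log n)$ classes naively has $\binom{k'+t}{t}=n^{\Theta(\log n)}$ candidates, and geometric discretization of the counts still leaves $\Theta(\log k)^{\Theta(\log n)}$ profiles, which is quasi-polynomial, not polynomial. Your alternative --- using the ball-optimization oracle to ``guide the search'' --- is not a concrete algorithm and does not obviously reduce the search space. The paper sidesteps this entirely: it first invokes Theorem~5.4 of Chakrabarty--Swamy to reduce the general symmetric norm $g$ to a \textsf{MaxOrd} norm (this is where $\kappa$ enters --- not from the search, as you suggest), then uses their Lemmas~4.2 and~6.9 to produce, in polynomial time, a small set of candidate threshold vectors $\bm{t}$ guaranteed to contain one with $\bm{t}^\downarrow\ge\bm{r}^*_{\I'}(X^*_{\I'})^\downarrow$. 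Each such $\bm{t}$ is directly a valid NUkC radius profile, so no combinatorial guessing over $(k_i)_i$ is needed. Without importing this proxy-cost/threshold machinery (or something equivalent), your construction does not run in polynomial time, and the bound $g(\bm{v}^{(j)})\le g(c\bm{v}^*)$ in your analysis also presupposes a profile you cannot afford to enumerate.
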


By Theorem~5.4 in \cite{chakrabarty-swamy19:norm-k-clustering} it is sufficient to obtain an $(O(1),O(1))$-bicriteria algorithm for \NCCS{\LP{\infty}}{\textsf{MaxOrd}}.
A norm in \textsf{MaxOrd} is described by a set of weight vectors $W\subseteq \nnrvec$.
The value of a \textsf{MaxOrd} norm is defined as $\max_{\bm{w}\in W}\ord{\bm{w}}{\bm{x}}$ for all $\bm{x}\in \nnrvec$.

\begin{Definition}
    For two vectors $\bm{x}\in \mathbb{R}^k_{\ge0}$ and $\bm{x}'\in \mathbb{R}^{k'}_{\ge0}$ and a value $\beta\in \mathbb{N}$ such that $k'\beta \le k$, we say that
    $\bm{x}\le^\beta \bm{x}'$
    if $\bm{x}_i\le \beta\bm{x}'_{\floor{\nicefrac{i}{\beta}}}$ for all $i\in [k]$.
\end{Definition}

\begin{lemma}
    For two vectors $\bm{x}\in \mathbb{R}^k_{\ge 0}$ and $\bm{x}'\in \mathbb{R}^{k'}_{\ge 0}$ and a value $\beta\in \mathbb{N}$ such that $k'\beta \le k$, if it holds that
    $\bm{x}\le^\beta \bm{x}'$,
    then for all vectors $\bm{w}\in \mathbb{R}^k_{\ge0}$ and $\bm{w}'\in \mathbb{R}^{k'}_{\ge0}$ where $\bm{w'}$ is a prefix of $\bm{w}$, it holds that
    \begin{align*}
        \ord{\bm{w}}{\bm{x}} \le \beta^2 \ord{\bm{w}'}{\bm{x}'}
    \end{align*}
\end{lemma}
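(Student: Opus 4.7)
My plan is to reduce to a direct computation by first passing to sorted vectors, exploiting the fact that ordered norms are invariant under permutations of their vector argument. Concretely, I would replace $\bm{x}$ with $\bm{x}^{\downarrow}$ and $\bm{x}'$ with $(\bm{x}')^{\downarrow}$ and then establish the following index-wise domination on the sorted vectors: $\bm{x}^{\downarrow}_i \le \beta\, (\bm{x}')^{\downarrow}_{\lceil i/\beta \rceil}$ for every $i \in [k]$. This is a pigeonhole argument: for any fixed $j\in [k']$, only $j-1$ entries of $\bm{x}'$ can strictly exceed $(\bm{x}')^{\downarrow}_j$, and by the hypothesis $\bm{x} \le^{\beta} \bm{x}'$ each group of $\beta$ consecutive positions of $\bm{x}$ is dominated (up to a factor $\beta$) by a single entry of $\bm{x}'$; hence at most $(j-1)\beta$ entries of $\bm{x}$ can exceed $\beta (\bm{x}')^{\downarrow}_j$, so the $((j-1)\beta+1)$-st largest entry of $\bm{x}$ is already at most $\beta (\bm{x}')^{\downarrow}_j$.

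With both vectors sorted and the index-wise bound in hand, the rest is a short computation in which the two factors of $\beta$ come from $\bm{x}$ and from $\bm{w}$ respectively:
\begin{align*}
\ord{\bm{w}}{\bm{x}}
= \sum_{i=1}^{k} \bm{w}_i\, \bm{x}^{\downarrow}_i
\le \beta \sum_{j=1}^{k'} (\bm{x}')^{\downarrow}_j \sum_{i:\,\lceil i/\beta\rceil=j} \bm{w}_i
\le \beta^2 \sum_{j=1}^{k'} \bm{w}'_j\, (\bm{x}')^{\downarrow}_j
= \beta^2\, \ord{\bm{w}'}{\bm{x}'}.
\end{align*}
The second inequality uses that each block $\{i : \lceil i/\beta\rceil = j\}$ contains at most $\beta$ indices, all of size at least $(j-1)\beta+1 \ge j$, so monotonicity of $\bm{w}$ gives $\bm{w}_i \le \bm{w}_j = \bm{w}'_j$ for every such $i$; the final equality is just that $\bm{w}'$ is the length-$k'$ prefix of $\bm{w}$. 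If $k'\beta < k$, the tail indices $i > k'\beta$ would need to be handled separately; the natural reading of the hypothesis forces such entries of $\bm{x}$ to be zero (being bounded by $\beta$ times a nonexistent entry), so they contribute nothing to $\ord{\bm{w}}{\bm{x}}$.

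The main obstacle will be the sorting step: the hypothesis $\bm{x} \le^{\beta} \bm{x}'$ is stated index-wise without presupposing any ordering of $\bm{x}$ or $\bm{x}'$, so I need to argue carefully that a block-wise domination of the appropriate form carries over to the sorted vectors. Pinning down the exact indexing convention in $\le^{\beta}$ (the paper's definition uses $\lfloor i/\beta\rfloor$, which should be read as partitioning the indices of $\bm{x}$ into blocks of $\beta$ consecutive positions that share a common index of $\bm{x}'$) is essential for the counting argument to go through cleanly. Once sort-robustness is established, the $\beta^2$ bound is essentially just a bookkeeping of the two $\beta$-factors.
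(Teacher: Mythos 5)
Your proof is correct and follows the same basic route as the paper's: sort both vectors, transfer the block-wise domination to the sorted vectors, and then bound $\ord{\bm{w}}{\bm{x}}$ term by term, with one factor $\beta$ coming from $\bm{x}^\downarrow_i \le \beta(\bm{x}')^\downarrow_{\lceil i/\beta\rceil}$ and another from grouping $\beta$ weights $\bm{w}_i \le \bm{w}'_j$. The difference is that you make the sorting-robustness step explicit: the paper's proof is a one-line chain of inequalities that silently passes from the index-wise relation $\bm{x}\le^\beta\bm{x}'$ to a bound on the sorted entries, whereas you justify this with a clean pigeonhole count (at most $\beta$ entries of $\bm{x}$ map to any single entry of $\bm{x}'$, so the $m$-th largest entry of $\bm{x}$ is dominated by $\beta$ times the $\lceil m/\beta\rceil$-th largest entry of $\bm{x}'$). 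That is the genuine content of the lemma, and your write-up supplies it. Your concern about the indexing convention in $\le^\beta$ is also well founded: as printed, $\lfloor i/\beta\rfloor$ is $0$ for $i<\beta$ and exceeds $k'$ when $k>k'\beta$, so the sensible reading is exactly the $\lceil i/\beta\rceil$ convention you adopt, with entries of $\bm{x}$ beyond index $k'\beta$ forced to $0$. For what it is worth, the paper's displayed proof also appears to carry a superfluous factor of $\beta$ in the middle expression (the step $\bm{w}^\downarrow_i\le\bm{w}'^\downarrow_{\lfloor i/\beta\rfloor}$ needs no $\beta$) and writes $\bm{x}^\downarrow_i$ where $(\bm{x}')^\downarrow_i$ is intended in the final sum; your computation avoids both issues and lands cleanly on $\beta^2$.
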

\begin{proof}
    $\ord{\bm{w}}{\bm{x}} = \sum_{i=1}^k\bm{w}^\downarrow_i\bm{x}^\downarrow_i
         \le  \sum_{i=1}^k\beta \bm{w}'^\downarrow_{\floor{\nicefrac{i}{\beta}}}\bm{x}^\downarrow_i
         \le  \sum_{i=1}^{k'}\beta^2 \bm{w}'^\downarrow_{i}\bm{x}^\downarrow_i$
\end{proof}

\begin{lemma}
    There is a universal constant $\beta >0$(we assume w.l.o.g. that $\beta \in \mathbb{N}$) such that, given an instance $\I=(P,F,\delta,k,\LP{\infty},W)$ and $\eps>0$ , we can compute a solution $\X=(X,r)$ such that 
    \begin{align*}
        \costd{\X}{W} = \max_{\bm{w}\in W}\ord{w}{\bm{r}(X)} \le \beta^2(1+\eps)\OPT_{\I'}
    \end{align*}
    where $\I'=(P,F,\delta,\floor{\nicefrac{k}{\beta}},\LP{\infty},W')$ and $W'$ contains the prefixes of length $\floor{\nicefrac{k}{\beta}}$ from vectors in $W$.
\end{lemma}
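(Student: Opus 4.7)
The plan is to reduce the task to a Non-Uniform $k$-Center (NUkC) instance and then invoke the preceding lemma that compares the ordered norm of two vectors related by $\le^\beta$. Let $\X^* = (X^*, r^*)$ denote the optimum for $\I'$ and $\bm{r}^* := \bm{r}^*(X^*)$ its sorted radii vector of length $k' = \floor{k/\beta}$. Suppose we could open at most $\beta k' \le k$ balls whose sorted radii vector $\bm{r}(X)$ satisfies $\bm{r}(X) \le^\beta \bm{r}^*$. Then for every $\bm{w} \in W$, its length-$k'$ prefix $\bm{w}' \in W'$ is a valid witness for $\I'$, and the previous lemma yields $\ord{\bm{w}}{\bm{r}(X)} \le \beta^2 \ord{\bm{w}'}{\bm{r}^*} \le \beta^2 \OPT_{\I'}$. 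Taking the maximum over $\bm{w}\in W$ bounds $\costd{\X}{W}$ by $\beta^2 \OPT_{\I'}$.

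To obtain such an $(X, r)$, I would invoke a bicriteria NUkC algorithm (for example, the $(c_1, c_2)$-bicriteria of Chakrabarty, Goyal, and Krishnaswamy), which given radius demands $r^*_1 \ge \dots \ge r^*_{k'}$ returns a cover of $P$ using at most $c_1 k'$ balls where the $i$-th largest opened radius is at most $c_2\, r^*_{\ceil{i/c_1}}$. Setting $\beta := \max(c_1, c_2)$ makes the output simultaneously satisfy the cardinality bound $|X| \le \beta k' \le k$ and the vector inequality $\bm{r}(X) \le^\beta \bm{r}^*$, which is exactly what the application of the preceding lemma requires.

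The remaining subtlety is that we do not know $\bm{r}^*$ a priori. I would handle this by standard discretization: round all pairwise distances in $P \cup F$ to the nearest power of $1+\eps$. After rescaling so that distances lie in a bounded range, the number of distinct radius values reduces to $O(\log n / \eps)$, and hence the number of \emph{profiles} (specifying how many of the $k'$ optimum balls fall at each rounded level) is polynomial in $n$. For each candidate profile I would run the NUkC algorithm with those demands and retain the best feasible solution; the rounding contributes only a multiplicative $(1+\eps)$ loss, yielding the claimed $\beta^2(1+\eps)\OPT_{\I'}$ bound. The main obstacle is arranging this enumeration to remain polynomial-time while matching the NUkC input format; beyond that, the argument reduces to bookkeeping using the NUkC guarantee and the preceding lemma.
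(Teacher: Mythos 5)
Your overall strategy matches the paper's: reduce to a Non-Uniform $k$-Center (NUkC) instance with demands given by some threshold vector $\bm{t}$ that dominates the optimal radii, invoke Theorem~1.3 of \cite{NUkC} to obtain a cover with $\bm{r}(X) \le^\beta \bm{t}$, and then apply the preceding lemma about $\le^\beta$ and ordered norms. The place where you diverge is how to \emph{find} $\bm{t}$. The paper does not enumerate radius profiles; it invokes Lemmas~4.2 and~6.9 of Chakrabarty--Swamy \cite{chakrabarty-swamy19:norm-k-clustering} as a black box. Lemma~4.2 first sparsifies the family of weight vectors to a family $\widetilde{W}$ whose members have only $O(\log k)$ breakpoints (while $(1+\eps)$-approximating the original $\max$-of-ordered-norms objective), and Lemma~6.9 then exploits this sparsity to output a \emph{polynomial-size} set $A$ of candidate threshold vectors, at least one of which pointwise dominates $\bm{r}^*_{\I'}(X^*_{\I'})^\downarrow$ and has near-optimal proxy value. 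This sparsification-plus-discretization step is the nontrivial technical ingredient, and it is what makes the threshold search polynomial.

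Your replacement --- discretize all distances to powers of $1+\eps$ and enumerate all profiles $(n_1,\dots,n_L)$ with $\sum_i n_i \le k'$ over $L$ radius levels --- has a concrete polynomiality gap. Even after the standard aspect-ratio rescaling that brings $L$ down to $O(\log n/\eps)$, the number of profiles is $\binom{k'+L}{L}$, which for $k'=\Theta(n)$ and $L=\Theta(\log n)$ is $n^{\Theta(\log n)}$, i.e.\ quasipolynomial rather than polynomial. Note also that merely adding the weight-sparsification step (Lemma~4.2) does not, on its own, rescue the enumeration: it reduces the number of ``effective'' positions in $\bm{t}$ to $O(\log k)$, but each position still ranges over polynomially many discretized values, so a naive product is again $n^{O(\log k)}$. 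You would need to import the full machinery of Chakrabarty--Swamy's Lemma~6.9 (or an equivalent argument) to collapse this to a polynomial-size candidate set. Once that is fixed, the rest of your argument (feasibility of the dominating profile, the $\beta^2(1+\eps)$ bookkeeping, and taking the maximum over $\bm{w}\in W$ via prefixes $\bm{w}'\in W'$) is sound and essentially parallels the paper's calculation.
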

\begin{proof}
    Let $k'=\floor{\nicefrac{k}{\beta}}$.
    Then by Lemmas~4.2 and~6.9 from \cite{chakrabarty-swamy19:norm-k-clustering}, we can compute $\Tilde{W}\subseteq \mathbb{R}^{k'}_{\ge0}$, $\bm{t}\in \mathbb{R}^{k'}_{\ge0}$ such that for all $\bm{x}\in \mathbb{R}^{k'}_{\ge0}$ it holds that 

    \begin{itemize}
        \item $\max_{\bm{w}\in W'}\ord{\bm{w}}{\bm{x}}\le \max_{\bm{\tildw}\in \Tilde{W}}\ord{\tildw}{\bm{x}}\le (1+\eps)\max_{\bm{w}\in W'}\ord{\bm{w}}{\bm{x}}$,
        \item $\max_{\bm{\tildw}\in \Tilde{W}}\ord{\tildw}{\bm{t}}\le (1+\eps)\OPT_{\I'}$, and
        \item $\bm{t}^\downarrow \ge \bm{r}^*_{\I'}(X^*_{\I'})^\downarrow$
    \end{itemize}
    Thus, all points in $P$ can be covered by $k'$ balls each upper bounded by a different value of $\bm{t}$.
    Finding such a covering is known as the Non-Uniform k-Center Problem (cf.\ \cite{NUkC}).
    Due to Theorem 1.3 of \cite{NUkC}, we can compute a solution $\X=(X,r)$ covering all points in $P$ with $|X|\le k$ such that $\bm{r}(X) \le^{\beta} \bm{t}$.
    We show that $\X=(X,r)$ is an approximate solution.

    \begin{align*}
        \costd{\X}{W} \le \beta^2 \max_{\bm{w}\in W'}\ord{\bm{w}}{\bm{t}}
        \le  \beta^2 \max_{\bm{\tildw}\in \Tilde{W}}\ord{\bm{\tildw}}{\bm{t}}
        &\le \beta^2 (1+\eps) \OPT_{\I'}
    \end{align*}
    
\end{proof}

\bibliographystyle{plain}
\bibliography{biblio}
\clearpage

\appendix

\section{General Reductions}
In this section we show reductions between special cases of \NNCC{}. 
We achieve this by exploiting properties of ordered norms.

\begin{observation}\label{obs:lonetoord}
    Let $\bm{w} =(w_1,\dots, w_n) \in \nnrvec$ and $W=\sum_{i=1}^nw_i$. Then for all $\bm{x}\in \nnrvec$ it holds that
    \begin{align*}
        w_1\LP{\infty}(\bm{x})\le \ord{\bm{w}}{\bm{x}} \le W \LP{\infty}(\bm{x})
    \end{align*}
    and
    \begin{align*}
          \frac{W}{n}\LP{1}(\bm{x}) \le \ord{\bm{w}}{\bm{x}}\le w_1\LP{1}(\bm{x}) 
    \end{align*}
\end{observation}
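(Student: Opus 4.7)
The plan is to establish the four inequalities separately by unpacking the definition $\ord{\bm{w}}{\bm{x}} = \sum_{i=1}^n w_i\,\bm{x}^{\downarrow}[i]$ and exploiting the standing assumption (from the definition of the ordered norm) that $w_1 \ge w_2 \ge \cdots \ge w_n \ge 0$, together with $\bm{x}^{\downarrow}[1] \ge \bm{x}^{\downarrow}[2] \ge \cdots \ge \bm{x}^{\downarrow}[n] \ge 0$.

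Three of the four bounds reduce to immediate term-by-term comparisons. For $w_1 \LP{\infty}(\bm{x}) \le \ord{\bm{w}}{\bm{x}}$, observe that $\LP{\infty}(\bm{x}) = \bm{x}^{\downarrow}[1]$, so $w_1 \LP{\infty}(\bm{x})$ is exactly the first summand in $\ord{\bm{w}}{\bm{x}}$, and all remaining summands are non-negative. For $\ord{\bm{w}}{\bm{x}} \le W \LP{\infty}(\bm{x})$, replace each $\bm{x}^{\downarrow}[i]$ by the larger value $\bm{x}^{\downarrow}[1] = \LP{\infty}(\bm{x})$ and factor out $\sum_i w_i = W$. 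For $\ord{\bm{w}}{\bm{x}} \le w_1 \LP{1}(\bm{x})$, replace each $w_i$ by the larger value $w_1$ and recognize $\sum_i \bm{x}^{\downarrow}[i] = \LP{1}(\bm{x})$.

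The one substantive step — and hence the main obstacle — is the lower bound $\tfrac{W}{n}\LP{1}(\bm{x}) \le \ord{\bm{w}}{\bm{x}}$. My plan is to invoke Chebyshev's sum inequality: since the sequences $(w_i)_{i=1}^n$ and $(\bm{x}^{\downarrow}[i])_{i=1}^n$ are both sorted in non-increasing order, they are similarly ordered, so
\[
n \sum_{i=1}^n w_i\,\bm{x}^{\downarrow}[i] \;\ge\; \Bigl(\sum_{i=1}^n w_i\Bigr)\Bigl(\sum_{i=1}^n \bm{x}^{\downarrow}[i]\Bigr) \;=\; W\cdot\LP{1}(\bm{x}),
\]
and dividing by $n$ yields the claim. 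If a self-contained argument is preferred, one can instead average the inequality $\sum_i w_i\,\bm{x}^{\downarrow}[i] \ge \sum_i w_{\pi(i)}\,\bm{x}^{\downarrow}[i]$ (which follows from the rearrangement inequality) over all permutations $\pi\in S_n$: the average of $w_{\pi(i)}$ over $\pi$ is $W/n$ independently of $i$, so the right-hand side becomes $(W/n)\sum_i \bm{x}^{\downarrow}[i] = (W/n)\LP{1}(\bm{x})$.
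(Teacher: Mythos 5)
Your proof is correct. The paper states this as an \emph{Observation} without supplying a proof, so there is nothing to compare against, but your argument is the natural one: the three bounds $w_1\LP{\infty}(\bm{x})\le \ord{\bm{w}}{\bm{x}}$, $\ord{\bm{w}}{\bm{x}}\le W\LP{\infty}(\bm{x})$, and $\ord{\bm{w}}{\bm{x}}\le w_1\LP{1}(\bm{x})$ are all immediate term-by-term comparisons using $\bm{x}^{\downarrow}[1]\ge \bm{x}^{\downarrow}[i]\ge 0$ and $w_1\ge w_i\ge 0$, and the one non-trivial bound $\tfrac{W}{n}\LP{1}(\bm{x})\le \ord{\bm{w}}{\bm{x}}$ is exactly an instance of Chebyshev's sum inequality, which applies because the weight vector of an ordered norm is non-increasing by definition and $(\bm{x}^{\downarrow}[i])_i$ is non-increasing as well. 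Your alternative via averaging the rearrangement inequality over all permutations is also valid and arguably more self-contained, since it rederives Chebyshev from the rearrangement inequality in this special case.
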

The following lemma is a consequence of Lemma 16 in \cite{patton-etal23:submodular-norms} and Theorem 5.4 in \cite{chakrabarty-swamy19:norm-k-clustering}.
\begin{lemma}\label{lem:boundgeneralnorm}
    For any symmetric, monotone norm $f: \nnrvec\rightarrow\nnr$ with an $\kappa$-approximate ball oracle, we can efficiently compute a weight vector $\bm{w}\in \nnrvec$ such that for all $\bm{x}\in \nnrvec$ it holds that
    \begin{align*}
        \ord{\bm{w}}{\bm{x}}\le f(\bm{x}) \le \kappa (3\log n +1)\ord{\bm{w}}{\bm{x}}.
    \end{align*}
\end{lemma}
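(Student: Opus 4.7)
The plan is to combine the two results cited just above the statement. Theorem~5.4 of \cite{chakrabarty-swamy19:norm-k-clustering} characterizes any symmetric, monotone norm as a \textsf{MaxOrd} norm (a maximum over a family of ordered norms) up to a small factor, and Lemma~16 of \cite{patton-etal23:submodular-norms} compresses any \textsf{MaxOrd} norm into a single ordered norm at the cost of an $O(\log n)$ factor. Chaining the two bounds then yields the claimed inequality.

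Concretely, I would first query the $\kappa$-approximate ball oracle on the indicator vectors $\mathbf{1}_j$ (the $0$-$1$ vector with $j$ ones) to obtain estimates $\tilde b_j$ of $f(\mathbf{1}_j)$ for each $j\in [n]$; by symmetry of $f$ these are the only queries needed. Plugging these into the construction of Theorem~5.4 produces a \textsf{MaxOrd} norm of the form $h(\bm{x}) = \max_{j\in [n]} \tfrac{\tilde b_j}{j}\cdot \topl{j}{\bm{x}}$, which satisfies $h(\bm{x})\le f(\bm{x})\le \kappa\cdot h(\bm{x})$ for all $\bm{x}\in\nnrvec$. Intuitively, $\tfrac{1}{j}\topl{j}{\bm{x}}$ is the average of the top $j$ entries of $\bm{x}$, and $\tfrac{\tilde b_j}{j}\cdot\topl{j}{\mathbf{1}_j}=\tilde b_j$, so every vector is caught by the right scale $j$. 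I would then invoke Lemma~16 on this \textsf{MaxOrd} to extract a single non-increasing weight vector $\bm{w}\in\nnrvec$ — a natural candidate being $w_i = \max_{j\ge i}\tfrac{\tilde b_j}{j}$ — satisfying $\ord{\bm{w}}{\bm{x}}\le h(\bm{x})\le (3\log n + 1)\ord{\bm{w}}{\bm{x}}$. Composing with the previous chain yields exactly $\ord{\bm{w}}{\bm{x}}\le f(\bm{x})\le \kappa(3\log n+1)\ord{\bm{w}}{\bm{x}}$.

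The hard part, which is outsourced to Lemma~16, is the $(3\log n + 1)$ relation between a single ordered norm and a \textsf{MaxOrd}. The proof strategy there is a dyadic bucketing of the coordinates of $\bm{x}$ into $O(\log n)$ magnitude classes, and charging each class separately; the logarithmic loss is unavoidable because the weight vector must simultaneously handle all scales. A conceptual subtlety to watch for is that each of the ingredient approximations is one-sided — \textsf{MaxOrd} \emph{upper-bounds} $f$ up to a $\kappa$ factor, while a single ordered norm \emph{lower-bounds} the \textsf{MaxOrd} up to $O(\log n)$ — so one has to set up the composition carefully to obtain two-sided bounds. Beyond that, the proof is essentially bookkeeping: efficiency follows because the oracle is queried only $n$ times, the MaxOrd has $n$ constituent weight vectors, and $\bm{w}$ is computable from them in polynomial time.
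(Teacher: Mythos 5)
Your high-level route — invoke Theorem~5.4 of Chakrabarty--Swamy to relate a general symmetric monotone norm to a \textsf{MaxOrd} via ball-oracle queries, and then Lemma~16 of Patton~et~al.\ to replace the \textsf{MaxOrd} by a single ordered norm with $O(\log n)$ loss, then chain the two — is exactly what the paper does; the paper offers nothing beyond the one-sentence citation of these two results, so at this level your proposal is faithful.

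However, the formulas you fill in to illustrate the two ingredients are wrong, and in a way that breaks the factor accounting rather than being merely cosmetic. You assert that the ``diagonal'' \textsf{MaxOrd} $h(\bm{x}) = \max_{j\in[n]} \tfrac{\tilde b_j}{j}\,\topl{j}{\bm{x}}$ satisfies $h(\bm{x})\le f(\bm{x})\le \kappa\, h(\bm{x})$. The lower bound is fine (Schur-convexity), but the upper bound is false even with an exact oracle ($\kappa=1$): take $f=\LP{2}$, so $\tilde b_j=\sqrt j$ and $h(\bm{x})=\max_j \topl{j}{\bm{x}}/\sqrt j$; with $x_i = 1/\sqrt i$ one has $h(\bm{x})=\max_j \tfrac{1}{\sqrt j}\sum_{i\le j}\tfrac{1}{\sqrt i} < 2$ while $f(\bm{x})=\sqrt{H_n}=\Theta(\sqrt{\log n})$. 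So the gap between $f$ and the diagonal \textsf{MaxOrd} is already $\Theta(\sqrt{\log n})$, not a constant: the logarithmic loss cannot all be charged to the \textsf{MaxOrd}$\to$\textsf{Ord} compression as your decomposition asserts, and the actual Chakrabarty--Swamy \textsf{MaxOrd} uses a richer family of weight vectors than the $n$ diagonal ones precisely for this reason. Your ``natural candidate'' $w_i=\max_{j\ge i}\tilde b_j/j$ has the symmetric defect: for $f=\LP{2}$ it gives $w_i=1/\sqrt i$, and with $\bm{x}=(1,1,0,\dots)$ we get $\ord{\bm{w}}{\bm{x}}=1+1/\sqrt2 > \sqrt2 = f(\bm{x})$, violating the one-sided bound $\ord{\bm{w}}{\bm{x}}\le f(\bm{x})$ that the lemma requires (the correct $\bm{w}$ needs a global normalization that this formula omits). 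Since you do defer the substance to the cited lemmas, the overall chaining stands, but the confident claim about $h$ and the proposed form of $\bm{w}$ are not what those lemmas deliver and should be retracted.
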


\begin{lemma}\label{lem:generalred}
    Let $O$ be an arbitrary class for the outer norm and $I$ be an arbitrary class for the inner norm. 
    \begin{enumerate}[(i)]
        \item If there is an $\alpha$-approximation for \NCCS{\LP{1}}{O}, then there is an $\alpha w_1n/W$-approximation for \NCCS{\textnormal{\textnormal{\textsf{Ord}}}}{O}.
        \item If there is an $\alpha$-approximation for \NCCS{I}{\LP{1}}, then there is an $\alpha w_1k/W$-approximation for \NCCS{I}{\textnormal{\textsf{Ord}}}.
        \item If there is an $\alpha$-approximation for \NCCS{\LP{\infty}}{O}, then there is an $\alpha W/w_1$-approximation for \NCCS{\textnormal{\textsf{Ord}}}{O}.
        \item If there is an $\alpha$-approximation for \NCCS{I}{\LP{\infty}}, then there is an $\alpha W/w_1$-approximation for \NCCS{I}{\textnormal{\textsf{Ord}}}.
        \item If there is an $\alpha$-approximation for \NCCS{\LP{\infty}}{O} and an $\beta$-approximation for \NCCS{\LP{1}}{O}, then there is an $\min\{\alpha w_1n/W,\beta W/w_1\}$-approximation for \NCCS{\textnormal{\textsf{Ord}}}{O}.
        \item If there is an $\alpha$-approximation for \NCCS{I}{\LP{\infty}} and an $\beta$-approximation for \NCCS{I}{\LP{1}}, then there is an $\min\{\alpha w_1k/W,\beta W/w_1\}$-approximation for \NCCS{I}{\textnormal{\textsf{Ord}}}.
        \item If there is an $\alpha$-approximation for \NCCS{\LP{\infty}}{O} and an $\beta$-approximation for \NCCS{\LP{1}}{O}, then there is an $\sqrt{n\alpha\beta}$-approximation for \NCCS{\textnormal{\textsf{Ord}}}{O}.
        \item If there is an $\alpha$-approximation for \NCCS{I}{\LP{\infty}} and an $\beta$-approximation for \NCCS{I}{\LP{1}}, then there is an $\sqrt{k\alpha\beta}$-approximation for \NCCS{I}{\textnormal{\textsf{Ord}}}.
        \item If there is an $\alpha$-approximation for \NCCS{\textnormal{\textsf{Ord}}}{O}, then there is an $O(\alpha\log n)$-approximation for \NCCS{\textnormal{\textsf{Sym}}}{O}.
        \item If there is an $\alpha$-approximation for \NCCS{I}{\textnormal{\textsf{Ord}}}, then there is an $O(\alpha\log k)$-approximation for \NCCS{I}{\textnormal{\textsf{Sym}}}. 
    \end{enumerate} 
\end{lemma}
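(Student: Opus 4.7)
The plan is to derive all ten reductions from Observation~\ref{obs:lonetoord} and Lemma~\ref{lem:boundgeneralnorm}. For items~(i)--(iv), given an instance $\I$ whose inner (resp.\ outer) norm is $\ord{\bm{w}}{\cdot}$, I form an auxiliary instance $\I'$ on the same point/facility set with that norm replaced by $\LP{1}$ or $\LP{\infty}$, run the assumed $\alpha$-approximation on $\I'$, and translate the output back. Observation~\ref{obs:lonetoord} supplies two-sided pointwise bounds between $\ord{\bm{w}}{\cdot}$ and the replacement norm; using one direction coordinatewise on the cost vector of an optimal solution to $\I$, combined with monotonicity of the untouched norm, bounds $\OPT_{\I'}$ above by a scalar multiple of $\OPT_{\I}$, while the other direction translates the cost of the returned solution back to the original norm at the cost of a second scalar factor. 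Multiplying yields the stated ratio. Concretely, for~(i) the inequalities $\LP{1}(\bm{x}) \le (n/W)\,\ord{\bm{w}}{\bm{x}}$ and $\ord{\bm{w}}{\bm{x}} \le w_1\,\LP{1}(\bm{x})$ give the $\alpha w_1 n/W$ ratio; (ii) is symmetric with $k$ in place of $n$ since the outer norm has dimension $k$; (iii) and~(iv) use $w_1\,\LP{\infty}(\bm{x}) \le \ord{\bm{w}}{\bm{x}} \le W\,\LP{\infty}(\bm{x})$ and hence involve no dimensional factor.

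Items~(v) and~(vi) follow immediately by running both algorithms, from (i)+(iii) respectively (ii)+(iv), and returning the cheaper solution. Items~(vii) and~(viii) then follow from the observation that the minimum of two positive quantities is bounded above by their geometric mean: $\min\{\alpha W/w_1,\, \beta w_1 n/W\} \le \sqrt{(\alpha W/w_1)(\beta w_1 n/W)} = \sqrt{n\alpha\beta}$, with the analogous bound using $k$ in place of $n$ for~(viii).

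For~(ix) and~(x) I invoke Lemma~\ref{lem:boundgeneralnorm}: given a symmetric, monotone norm $f$, it efficiently produces a weight vector $\bm{w}$ with $\ord{\bm{w}}{\bm{x}} \le f(\bm{x}) \le O(\log n)\,\ord{\bm{w}}{\bm{x}}$ (assuming a constant-factor ball oracle, as the $\widetilde{O}(\cdot)$ notation in Table~\ref{tbl:results} already tolerates). Replacing $f$ by $\ord{\bm{w}}{\cdot}$ in the inner (resp.\ outer) position loses at most an $O(\log n)$ (resp.\ $O(\log k)$) factor both when bounding $\OPT_{\I'}$ above and when translating the computed cost back, giving the claimed $O(\alpha \log n)$ and $O(\alpha \log k)$ ratios. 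The only real obstacle is bookkeeping: one has to verify each time that the coordinatewise sandwich inequality propagates through the untouched norm, which is immediate from monotonicity and positive homogeneity of symmetric norms, and that the dimension appearing in the geometric-mean step is $n$ or $k$ according to whether the inner or outer norm is being replaced.
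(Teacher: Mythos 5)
Your approach matches the paper's exactly: for (i)--(iv) you form an auxiliary instance with the ordered norm replaced by $\LP{1}$ or $\LP{\infty}$, apply Observation~\ref{obs:lonetoord} coordinatewise (using monotonicity of the untouched norm) to bound $\OPT_{\I'}$ above by a scalar multiple of $\OPT_\I$ via the feasibility of $\X^*_\I$ for $\I'$, and use the other direction of the sandwich to translate the returned cost back; (v)--(viii) then follow by running both algorithms and taking the geometric-mean bound, and (ix)--(x) invoke Lemma~\ref{lem:boundgeneralnorm}. The one point to tighten: your sentence for (ix)/(x) reads as though an $O(\log n)$ factor is lost \emph{both} when bounding $\OPT_{\I'}$ and when translating the cost back, which taken literally would yield $O(\alpha\log^2 n)$, not the $O(\alpha\log n)$ you (correctly) claim. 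The reason only a single $\log n$ appears is that the sandwich $\ord{\bm{w}}{\bm{x}}\le f(\bm{x})\le O(\log n)\ord{\bm{w}}{\bm{x}}$ is one-sided-tight: the lower bound $\ord{\bm{w}}{\bm{x}}\le f(\bm{x})$ gives $\OPT_{\I'}\le\OPT_\I$ with no loss, so the $\log$ factor is paid only once, when converting the returned solution's $\ord{\bm{w}}{\cdot}$-cost back to an $f$-cost. Making that asymmetry explicit would close the only gap.
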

\begin{proof}
    \begin{enumerate}[(i)]
        \item Let $\I=(P,F,\delta,k,\ord{\bm{w}}{\cdot},g)$ be an instance of \NCCS{\textnormal{\textsf{Ord}}}{O} with optimal solution $\X^*_\I=(X^*_\I,\sigma^*_\I)$.
        Let $\I'=(P,F,\delta,k,\LP{1},g)$ be the corresponding instance of \NCCS{\LP{1}}{O} with the $\alpha$-approximate solution $(X,\sigma)$. We use the bounds from \Cref{obs:lonetoord} to obtain the following.

        \begin{align*}
            g((\ord{\bm{w}}{\distv{\sigma}{x}})_{x\in X})& \le g((w_1\LP{1}({\distv{\sigma}{x}}))_{x\in X})\\
            &\le w_1g((\LP{1}({\distv{\sigma}{x}}))_{x\in X})\\
            &\le w_1\alpha g((\LP{1}({\distv{\sigma^*}{x}}))_{x\in X^*})\\
            &\le w_1\alpha g\left(\left(\frac{n}{W}\ord{\bm{w}}{{\distv{\sigma^*}{x}}}\right)_{x\in X^*}\right)\\
            &\le w_1\frac{n}{W}\alpha g\left(\left(\ord{\bm{w}}{{\distv{\sigma^*}{x}}}\right)_{x\in X^*}\right)
        \end{align*}
        \item Let $\I=(P,F,\delta,k,f,\ord{\bm{w}}{\cdot})$ be an instance of \NCCS{I}{\textnormal{\textsf{Ord}}} with optimal solution $\X^*_\I=(X^*_\I,\sigma^*_\I)$.
        Let $\I'=(P,F,\delta,k,f,\LP{1})$ be the corresponding instance of \NCCS{I}{\LP{1}} with the $\alpha$-approximate solution $(X,\sigma)$. We use the bounds from \Cref{obs:lonetoord} to obtain the following.

        \begin{align*}
            \ord{\bm{w}}{f(\distv{\sigma}{x})_{x\in X}}& \le w_1\LP{1}({f(\distv{\sigma}{x})_{x\in X}})\\
            &\le w_1\alpha \LP{1}\left({f(\distv{\sigma^*}{x})_{x\in X^*}}\right)\\
            &\le  w_1\frac{k}{W}\alpha \ord{\bm{w}}{f(\distv{\sigma^*}{x})_{x\in X^*}}
        \end{align*}
        \item 
        Let $\I=(P,F,\delta,k,\ord{\bm{w}}{\cdot},g)$ be an instance of \NCCS{\textnormal{\textsf{Ord}}}{O} with optimal solution $\X^*_\I=(X^*_\I,\sigma^*_\I)$.
        Let $\I'=(P,F,\delta,k,\LP{\infty},g)$ be the corresponding instance of \NCCS{\LP{\infty}}{O} with the $\alpha$-approximate solution $(X,\sigma)$. We use the bounds from \Cref{obs:lonetoord} to obtain the following.
        \begin{align*}
            f((\ord{\bm{w}}{\distv{\sigma}{x}})_{x\in X})& \le f((W\LP{\infty}({\distv{\sigma}{x}}))_{x\in X})\\
            &\le Wf((\LP{\infty}({\distv{\sigma}{x}}))_{x\in X})\\
            &\le W\alpha f((\LP{\infty}({\distv{\sigma^*}{x}}))_{x\in X^*})\\
            &\le W\alpha f\left(\left(\frac{1}{w_1}\ord{\bm{w}}{{\distv{\sigma^*}{x}}}\right)_{x\in X^*}\right)\\
            &\le \frac{W}{w_1}\alpha f\left(\left(\ord{\bm{w}}{{\distv{\sigma^*}{x}}}\right)_{x\in X^*}\right)
            \end{align*}
        \item Let $\I=(P,F,\delta,k,f,\ord{\bm{w}}{\cdot})$ be an instance of \NCCS{I}{\textnormal{\textsf{Ord}}} with optimal solution $\X^*_\I=(X^*_\I,\sigma^*_\I)$.
        Let $\I'=(P,F,\delta,k,f,\LP{\infty})$ be the corresponding instance of \NCCS{I}{\LP{\infty}} with the $\alpha$-approximate solution $(X,\sigma)$. We use the bounds from \Cref{obs:lonetoord} to obtain the following.

        \begin{align*}
            \ord{\bm{w}}{f(\distv{\sigma}{x})_{x\in X}}& \le W\LP{\infty}({f(\distv{\sigma}{x})_{x\in X}})\\
            &\le W\alpha \LP{\infty}\left({f(\distv{\sigma^*}{x})_{x\in X^*}}\right)\\
            &\le  \frac{W}{w_1}\alpha \ord{\bm{w}}{f(\distv{\sigma^*}{x})_{x\in X^*}}
        \end{align*}
        \item Follows by (i) and (iii).
        \item Follows by (ii) and (iv).
        \item Follows by (v) and the fact that $\min\{\alpha w_1n/W,\beta W/w_1\}\le \sqrt{n\alpha\beta}$.
        \item Follows by (vi) and the fact that $\min\{\alpha w_1n/W,\beta W/w_1\}\le \sqrt{k\alpha\beta}$.
        \item Let $\I=(P,F,\delta,k,f,g)$ be an instance of \NCCS{\textnormal{\textsf{Sym}}}{O} with optimal solution $\X^*_\I=(X^*_\I,\sigma^*_\I)$.
        Due to \Cref{lem:boundgeneralnorm} we can efficiently compute a vector $\bm{w}\in \nnrvec$ such that for all vectors $\bm{x}\in \nnrvec$ it holds that
        \begin{align*}
        \ord{\bm{w}}{\bm{x}}\le f(\bm{x}) \le \kappa(3\log n +1)\ord{\bm{w}}{\bm{x}}.
    \end{align*}
        Let $\I'=(P,F,\delta,k,\ord{\bm{w}}{\cdot},g)$ be the corresponding instance of \NCCS{\textsf{Ord}}{O} with the $\alpha$-approximate solution $(X,\sigma)$. 

        \begin{align*}
            g((f(\distv{\sigma}{x}))_{x\in X})& \le g((\kappa(3\log n+1)\ord{\bm{w}
            }{\distv{\sigma}{x}})_{x\in X})\\
            &\le \kappa(3\log n+1) g((\ord{\bm{w}
            }{\distv{\sigma}{x}})_{x\in X})\\
            &\le \kappa(3\log n+1)\alpha g\left((\ord{\bm{w}
            }{\distv{\sigma^*}{x}})_{x\in X^*}\right)\\
            &\le  \kappa(3\log n+1)\alpha g\left((f(\distv{\sigma^*}{x}))_{x\in X^*}\right)
        \end{align*}
        \item Let $\I=(P,F,\delta,k,f,g)$ be an instance of \NCCS{I}{\textnormal{\textsf{Sym}}} with optimal solution $\X^*_\I=(X^*_\I,\sigma^*_\I)$.
        Due to \Cref{lem:boundgeneralnorm} we can efficiently compute a vector $\bm{w}\in \nnr^k$ such that for all vectors $\bm{x}\in \nnr^k$ it holds that
        \begin{align*}
        \ord{\bm{w}}{\bm{x}}\le g(\bm{x}) \le \kappa(3\log k +1)\ord{\bm{w}}{\bm{x}}.
    \end{align*}
        Let $\I'=(P,F,\delta,k,f,\ord{\bm{w}}{\cdot})$ be the corresponding instance of \NCCS{I}{\textsf{Ord}} with the $\alpha$-approximate solution $(X,\sigma)$.  

        \begin{align*}
            g((f(\distv{\sigma}{x}))_{x\in X})& \le \kappa(3\log k+1) \ord{\bm{w}}{(f(\distv{\sigma}{x}))_{x\in X}}\\
            &\le (2\log k+1)\alpha \ord{\bm{w}}{(f(\distv{\sigma^*}{x}))_{x\in X^*}}\\
            &\le (2\log k+1)\alpha g\left((f(\distv{\sigma^*}{x}))_{x\in X^*}\right)
        \end{align*}
    \end{enumerate}
\end{proof}

We conclude a number of results for special cases of \NNCC{} by applying \Cref{lem:generalred}.
\begin{corollary}\label{cor:listofcors}
    Let $W_{in}$ denote the sum of values for the inner ordered norm normalized by the first weight. Furthermore let $W_{out}$ denote the sum of values for the outer ordered norm normalized by the first weight.
    \begin{enumerate}[(i)]
        \item There is an $O(k/W_{out})$-approximation algorithm for \NCCS{\textnormal{\textsf{Top}}}{\textnormal{\textsf{Ord}}}. 
        \item There is an $O(k\log k)$-approximation algorithm for \NCCS{\textnormal{\textsf{Top}}}{\textnormal{\textsf{Sym}}}. 
        \item There is an $O(\log k)$-approximation algorithm for \NCCS{\LP{\infty}}{\textnormal{\textsf{Sym}}}. 
        \item There is an $O(\min\{W_{in},n/W_{in}\})$-approximation algorithm and an $O(\sqrt{n})$-approximation algorithm for \NCCS{\textnormal{\textsf{Ord}}}{\LP{1}}. 
        \item There is an $O(\min\{nk/(W_{in}W_{out}),W_{in}\}$-approximation algorithm and an $O(\sqrt{nk})$-approximation algorithm for \NCCS{\textnormal{\textsf{Ord}}}{\textnormal{\textsf{Ord}}}.
        \item There is an $O(\min\{nk/W_{in},W_{in}\}\log k)$-approximation algorithm and an $O(\sqrt{nk}\log k)$-approximation algorithm for \NCCS{\textnormal{\textsf{Ord}}}{\textnormal{\textsf{Sym}}}.
        \item There is an $O(\sqrt{n}\log n)$-approximation algorithm for \NCCS{\textnormal{\textsf{Sym}}}{\LP{1}}. 
        \item There is an $O(\sqrt{nk}\log n)$-approximation algorithm for \NCCS{\textnormal{\textsf{Sym}}}{\textnormal{\textsf{Ord}}}. 
        \item There is an $O(\sqrt{nk}\log n \log k)$-approximation algorithm for \NCCS{\textnormal{\textsf{Sym}}}{\textnormal{\textsf{Sym}}}. 
    \end{enumerate}
\end{corollary}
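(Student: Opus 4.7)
The plan is to derive each of items (i)--(ix) by a mechanical chaining of the general reductions collected in \Cref{lem:generalred}, fed by the constant-factor base algorithms already available: \Cref{thm:apxtoplone} and \Cref{thm:linford}, together with the $O(1)$-approximations for $k$-Median~\cite{JainVaz}, $k$-Center~\cite{hochbaum-shmoys85:k-center}, and Min-Sum of Radii~\cite{MinSumRadii}, and the $O(k)$-approximation for Min-Load $k$-Clustering~\cite{ahmadian-etal18:min-load-k-median}. Throughout, the normalizations $W_{\mathrm{in}}$ and $W_{\mathrm{out}}$ refer to $W/w_1$ for the inner and outer ordered weight vectors, respectively.

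For the purely outer-norm enrichments (i)--(iii), I would apply \Cref{lem:generalred}(ii) to \Cref{thm:apxtoplone} to obtain (i); then specialize (i) to the worst-case $W_{\mathrm{out}}=1$ and invoke \Cref{lem:generalred}(x) to get (ii); and apply \Cref{lem:generalred}(x) directly to \Cref{thm:linford} to get (iii).

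For the inner-norm enrichments (iv)--(vi), item (iv) would combine the $O(1)$-approximations for $k$-Median and Min-Sum of Radii via \Cref{lem:generalred}(v) for the $\min\{W_{\mathrm{in}}, n/W_{\mathrm{in}}\}$ bound and via \Cref{lem:generalred}(vii) for the $\sqrt{n}$ bound. For item (v), the bound $O(W_{\mathrm{in}})$ is obtained by applying \Cref{lem:generalred}(iii) to \Cref{thm:linford}; the bound $O(nk/(W_{\mathrm{in}}W_{\mathrm{out}}))$ is obtained by applying \Cref{lem:generalred}(ii) to the $O(n/W_{\mathrm{in}})$ part of (iv); and the bound $O(\sqrt{nk})$ is obtained by \Cref{lem:generalred}(vii) with $\alpha=O(1)$ from \Cref{thm:linford} and $\beta=O(k/W_{\mathrm{out}})$ obtained in turn by applying \Cref{lem:generalred}(ii) to $k$-Median. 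Item (vi) then follows from (v) via \Cref{lem:generalred}(x), specializing to $W_{\mathrm{out}}=1$ before paying the $\log k$ factor for the first bound.

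For the \textsf{Sym}-lifts (vii)--(ix), \Cref{lem:generalred}(ix) applied to (iv) gives (vii); \Cref{lem:generalred}(ix) applied to the $\sqrt{nk}$ bound of (v) gives (viii); and \Cref{lem:generalred}(x) applied to (viii) (equivalently, \Cref{lem:generalred}(ix) applied to (vi)) gives (ix). The main obstacle is the bookkeeping: for each item I must pick the correct subset of reductions so that the normalizations line up and each stated $\min$ is attained; once the chain of reductions is fixed, every step is a one-line arithmetic substitution.
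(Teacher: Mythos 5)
Your proposal is correct and follows essentially the same chain as the paper's proof: each item is derived by wiring the constant-factor base algorithms (\Cref{thm:apxtoplone}, \Cref{thm:linford}, and the known $O(1)$-approximations for $k$-Median, $k$-Center, and Min-Sum of Radii) through the applicable parts of \Cref{lem:generalred}, with the \textsf{Sym} cases obtained by a final $\log$-factor lift via parts~(ix)/(x) and the worst-case normalization $W_{\mathrm{out}}=1$ where the outer weight vector is not under our control. The minor differences in which intermediate result you route through (e.g.\ deriving the $nk/(W_{\mathrm{in}}W_{\mathrm{out}})$ bound of~(v) by applying \Cref{lem:generalred}(ii) to the $n/W_{\mathrm{in}}$ part of~(iv), rather than applying \Cref{lem:generalred}(i) directly to the $k/W_{\mathrm{out}}$ bound of~(i)) yield the same factors and are immaterial.
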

\begin{proof}
    \begin{enumerate}[(i)]
        \item Follows by the $O(1)$-approximation algorithm for \NCCS{\textnormal{\textsf{Top}}}{\LP{1}} (\Cref{thm:apxtoplone}) and \Cref{lem:generalred}.
        \item Follows by (i) and \Cref{lem:generalred}.
        \item Follows by the $O(1)$-approximation algorithm for \NCCS{\LP{\infty}}{\textnormal{\textsf{Ord}}}(\Cref{thm:linford} and \Cref{lem:generalred}.
        \item Follows by the $O(1)$-approximation algorithms for \kmed{} and \msr{} and \Cref{lem:generalred}.
        \item Follows by the $O(k/W_{out})$-approximation algorithm for \NCCS{\LP{1}}{\textnormal{\textsf{Ord}}}(i), the $O(1)$-approximation algorithm for \NCCS{\LP{\infty}}{\textnormal{\textsf{Ord}}}(\Cref{thm:linford}) and \Cref{lem:generalred}.
        \item Follows by (v) and \Cref{lem:generalred}.
        \item Follows by (iv) and \Cref{lem:generalred}.
        \item Follows by (v) and \Cref{lem:generalred}.
        \item Follows by (vi) and \Cref{lem:generalred}.
    \end{enumerate}
\end{proof}

\section{\texorpdfstring{Proof of \Cref{lem:bipoint}}{Proof of Lemma 21}}
\begin{proof}[Proof of \Cref{lem:bipoint}]
    Ideally, we would want to find some $\lambda$ such that our approximation algorithm for the facility location version of Ball $k$-median opens exactly $k$ facilities.
    As we cannot guarantee that, we settle for some $\lambda_1,\lambda_2$ such that $|\lambda_1-\lambda_2|$ is small (more precisely, $|\lambda_1-\lambda_2| \le (\varepsilon \delta_{min})(3|F|)$, where $\delta_{min}$ is the minimum non-zero distance between a facility and a client), our approximation for the facility location version of Ball $k$-median when the opening cost is $\lambda_2$ opens more than $k$ facilities, and our approximation for the facility location version of \PZ{} when the opening cost is $\lambda_1$ opens at most $k$ facilities.
    
    We start with $\lambda_1= |P| \delta_{max}$ 
    (where $\delta_{max}$ is the maximum distance between a facility and a client);
    whatever solution $\X_1'=(X'_1,r'_1)$ we get from our approximation algorithm, we convert our solution $\X_1'$ to a solution $\X_1=(X_1,r_1)$ by only keeping the (less than $k$) facilities in $T$, and closing the rest.
    The connection cost can increase by at most $|P| \delta_{max}$,
    which is at most as large as the decrease in the opening cost.
    Therefore $\costz{\X_1} + 3\lambda_1 |X_1| \le \costz{\X_1'}+ 3\lambda_1 |X_1'| \le 3 \OPT_\I + 3\lambda_1 k$.
    
    Similarly for $\lambda_2=0$, we convert our solution $\X_2'=(X_2',r_2')$ to a solution $\X_2=(X_2,r_2)$ opening more than $k$ facilities by opening facilities of zero radius.
    We had $\costz{\X_2'} + 3\lambda_2 |X_2'| \le 3\OPT_\I + 3\lambda_2 k$.
    Notice that $X_2$ contains all the facilities of $X_2'$ therefore $\costz{\X_2} \le \costz{\X_2'}$.
    Also $3\lambda_2 |X_2'| = 3\lambda_2 |X_2| = 0$ because $\lambda_2=0$.
    Therefore $\costz{\X_2} + 3\lambda_2 |X_2| \le \OPT_\I + 3\lambda_2 k$.

    As both $X_1$ and $X_2$ contain $T$, $X_1$ has less than $k$ facilities, and $X_2$ has more.
    Furthermore, the only facilities we open that were not suggested by our approximation algorithm are of zero radius.
    Therefore it is also true that if $x \in (X_1\cup X_2) \setminus (X_1\cap X_2)$ then $\ell\cdot r_1(x) < \eps\OPT_\I$ and $\ell\cdot r_2(x) < \eps\OPT_\I$.

    We now perform a binary search with $\lambda \in [0,|P|\delta_{max}]$: we continue on the bottom half of the search space when our approximation algorithm with $\lambda$ being the middle point of the search space returns a solution $\X=(X,r)$ with $|X| \le k$ (and setting $\X_1=\X$), or continue to the top half and setting $\X_2=\X$ otherwise.
    We stop when the search space is $[\lambda_2,\lambda_1]$ and $|\lambda_1-\lambda_2| <(\varepsilon \delta_{min})/(3|F|)$.    
    By \Cref{lem:lmpalgo} we get that $T\subseteq X_1\cap X_2$ and $r_1(x) = r^*_\I(x)=r_2(x)$ for all $x\in T$.
    Again due to \Cref{lem:lmpalgo}, it follows that $\ell\cdot3 r_1(x) \le \eps \OPT_\I$ and $\ell\cdot3 r_2(x) \le \eps \OPT_\I$ for $x\in (X_1\cup X_2) \setminus T$. 
    The binary search gives us that $X_1$ has at most $k$ facilities, and $X_2$ has more.

    We now have:
    \begin{align*}
        \costz{\X_1} + 3\lambda_1 |X_1| &\le 3\OPT_\I + 3\lambda_1 k
    \end{align*}
    and
    \begin{align*}
        \costz{\X_2} + 3\lambda_2 |X_2| &\le 3\OPT_\I + 3\lambda_2 k \implies\\
        \costz{\X_2} + 3\lambda_1 |X_2| &\le 3\OPT_\I + 3\lambda_1 k + 3|\lambda_1-\lambda_2||X_2| \le 3\OPT_\I + 3\lambda_1 k + \varepsilon \OPT_\I  
    \end{align*}
    
   Let $a,b$ be the convex combination such that $a|X_1|+b|X_2| = k$.
   Multiplying the first inequality by $b$, the second by $a$, and adding them together gives
   \begin{align*}
       &a\cdot \costz{\X_1} + b\cdot \costz{\X_2} + 3\lambda_1 (a|X_1|+b|X_2|) \le (a+b)3\OPT_\I + 3(a+b)\lambda_1 k + b\cdot \varepsilon \OPT_\I \implies \\
       &a\cdot \costz{\X_1} + b\cdot \costz{\X_2} \le (3+\eps)\OPT_\I
   \end{align*}
\end{proof}
\end{document}